
\documentclass[lettersize,journal]{IEEEtran}
\pdfoutput=1

\IEEEoverridecommandlockouts
\usepackage{algorithm}
\usepackage{algpseudocode}

\renewcommand{\algorithmicrequire}{\textbf{Input:}}


\usepackage{amsmath,amssymb,amsfonts}

\usepackage{textcomp}
\usepackage{xcolor}

\ifCLASSINFOpdf
 \usepackage[pdftex]{graphicx}
 \usepackage{epstopdf}

\else
   
  \usepackage[dvips]{graphicx}
 
\fi
\ifCLASSOPTIONcompsoc
\usepackage[caption=false,font=normalsize,labelfont=sf,textfont=sf]{subfig}
\else
\usepackage[caption=false,font=footnotesize]{subfig}
\fi

\usepackage{graphicx}
\usepackage{color}
\usepackage{xfrac}
\usepackage{amsthm}
\usepackage{mathtools}
\usepackage{relsize}
\usepackage{enumerate}
\usepackage{booktabs}
\usepackage{morefloats}
\usepackage{cite}
\usepackage{bm}
\usepackage{stfloats}
\usepackage{setspace}
\usepackage{url}
\usepackage{ulem}
\usepackage{bbm}
\usepackage{setspace}

\usepackage{comment}
\usepackage{balance}
\usepackage{calc}
\usepackage{multirow}

\theoremstyle{plain}
\newtheorem{theorem}{Theorem}
\newtheorem{lemma}[]{Lemma}

\theoremstyle{remark}
\newtheorem{remark}[]{Remark}

\renewcommand{\algorithmicrequire}{\textbf{Input:}}

\begin{document}
\title{Achieving Stability for Aloha Networks with Multiple Receivers}

\author{Yunshan Yang and Lin Dai 
\thanks{This paper will be partly presented at the IEEE International Conference on Communications, Denver, USA, Jun. 2024.} 
\thanks{The authors are with the Department of Electrical Engineering, City University of Hong Kong, Hong Kong (e-mail: yunshyang3-c@my.cityu.deu.hk; lindai@cityu.edu.hk).}}

\IEEEaftertitletext{\vspace{-2.2\baselineskip}}




\maketitle


\IEEEpeerreviewmaketitle


\begin{abstract}
    Slotted Aloha has been widely adopted in various communication networks. Yet if the transmission probabilities and traffic input rates of transmitters are not properly regulated, their data queues may easily become unstable. For stability analysis of Aloha networks with multiple receivers, the focus of previous studies has been placed on the maximum input rate of each transmitter, below which the network is guaranteed to be stabilized under any given topology. By assuming a fixed and identical transmission probability across the network, however, network stability is found to be unachievable when the input rate exceeds zero.

    As we will demonstrate in this paper, the key to stabilizing the network lies in proper selection of transmission probabilities according to the traffic input rates and locations of all transmitters and receivers. Specifically, for an Aloha network with multiple capture receivers, by establishing and solving the fixed-point equations of the steady-state probabilities of successful transmissions of Head-of-Line (HOL) packets, the exact service rates of all transmitters' queues are obtained, based on which the operating region of transmission probabilities for achieving stability and the stability region of input rates are further characterized. The results are illustrated in various scenarios of multi-cell and ad-hoc networks. Simulation results validate the analysis and corroborate that the network can be stabilized as long as the traffic input rates are within the stability region, and the transmission probabilities are properly adjusted according to the traffic input rates and network topology.

\end{abstract}
\vspace{-0.1cm}
\begin{IEEEkeywords}
    Aloha, random access, stability region, transmission control, multi-cell, ad-hoc networks. 
\end{IEEEkeywords}

\vspace{-0.4cm}
\section{Introduction}\label{section: introduction}

Random access, which enables multiple transmitters to share a common channel with minimum coordination, provides a simple solution to embrace the burgeoning Machine-to-Machine (M2M) communications \cite{RandomaccessMTD_Leyva,ThechallengesofM2M_Biral}. Over the years, plenty of random access protocols have been developed \cite{DataNetworks_Gallager,WirelessCommunications_Goldsmith,ComputerNetworking_Kurose,Aloha_Abramson}. Among them, Aloha \cite{Aloha_Abramson}, with which each node transmits with a certain probability when it has packets in the buffer, has been widely adopted in various communication networks, such as 5G networks \cite{3GPPTS38.321}, Wi-Fi 6 networks \cite{802.11ax}, Long Range Radio Wide Area Networks (LoRaWAN) \cite{LoRa}, and Short Range Devices (SRD) systems \cite{SRD}.   


Wide as its applications are, due to the lack of coordination, the performance of an Aloha network may suffer from significant degradation if the transmissions are not well controlled. As concurrent transmissions may end up with failures, if high transmission probabilities are adopted, there could be severe contention and thus small chances of success. With low transmission probabilities, on the other hand, there could also be few successful packets due to few transmission attempts. Both may cause the service rate to drop below the input rate for each transmitter's data queue, in which case the queue length would grow unboundedly with time, rendering instability of the network. The difficulty of stabilizing a network may also grow as the traffic load becomes heavier. With the input rate reaching a critical value, a transmitter's data queue may not be stabilized no matter what transmission probability is chosen. 

Intuitively, for an Aloha network, there exists a stability region of input rates, only within which stability is achievable through transmission control. For given input rates, to stabilize the network, the transmission probabilities should further be chosen from a certain operating region. To characterize the stability region of input rates and the operating region of transmission probabilities for achieving stability, an analytical framework was recently proposed for an uplink single-cell Aloha network \cite{Atheoreticalframework_Dai}, and extended to a multi-cell Aloha network in \cite{Multicell_Yang} by further considering the inter-cell interference. 
 
In both \cite{Atheoreticalframework_Dai} and \cite{Multicell_Yang}, the collision receiver model \cite{Aloha_Abramson} was assumed, with which a packet transmission is successful only if there are no concurrent transmissions. Though capturing the essence of contention, the collision receiver could be overly pessimistic as concurrent transmissions may not necessarily fail especially when there exists a large difference in the received signal power. It is thus of great importance to extend the analysis to incorporate more advanced receiver models, such as the widely adopted capture model \cite{AlohaCapture_Roberts} where a packet can be successfully decoded as long as its received signal-to-interference-plus-noise ratio (SINR) is above a certain threshold.
Moreover, despite the multi-cell setting in \cite{Multicell_Yang}, an accurate characterization of the stability region of input rates and the operating region of transmission probabilities is available only for the two-cell case. For the general case with any number of receivers, how to obtain the stability region of input rates and control the transmission probabilities for achieving stability needs to be further studied.

In this paper, we will extend the stability analysis to incorporate multiple receivers and the capture receiver model. Note that in the literature, there have been considerable interests in modeling and analysis of Aloha networks with multiple capture receivers, including both the multi-cell and ad-hoc scenarios. For better understanding of the challenges in stability analysis for multi-receiver Aloha networks, let us first present a review on the related studies.

\vspace{-0.3cm}
\subsection{Modeling of an Aloha Network with Multiple Receivers}\label{subsection: relatedwork_modeling}

For modeling of an Aloha network with multiple receivers, existing studies mainly focused on modeling the random locations of transmitters and receivers in a large-scale network. For instance, an ad-hoc Aloha network was modeled as a Poisson bipolar network in \cite{AnAlohaprotocol_Baccelli}, where transmitters form a Poisson point process (PPP) with each of them having a dedicated receiver at a fixed distance. For a multi-cell Aloha network, on the other hand, both BSs and devices can be modeled as independent PPPs \cite{Onstochasticgeometry_ElSawy} or more complex point processes, e.g., a Poisson cluster process \cite{Effectspatialtemporaltrafficstatistics_Wang}.  

To analyze the performance of multiple transmitter-receiver (T-R) pairs, one key step is the characterization of the probability of successful transmissions of each transmitter, which crucially depends on the locations of all transmitters and receivers, and varies from T-R pair to T-R pair. Under the symmetric setting of identical transmission probabilities and input rates for all transmitters,\footnote[1]{Note that the assumptions on transmission power vary under different scenarios. For example, equal mean received SNR of all T-R pairs is assumed in \cite{AnAlohaprotocol_Baccelli,Statisticalprioritybased_Zhang,Stochasticopportunisticaloha_Baccelli,Onstochasticgeometry_ElSawy,SpatiotemporalStochastic_Gharbieh,Randomaccessanalysis_Jiang,RandomaccessPoisson_Stamatiou,SpatiotemporalMSA_Zhong} by properly adjusting the transmission power, while identical transmission power is assumed in \cite{grantfreeMIMO_Xia,Effectspatialtemporaltrafficstatistics_Wang}.}
the probability of successful transmissions of each T-R pair was usually approximated by the spatial average of the probability of successful transmissions of all T-R pairs \cite{AnAlohaprotocol_Baccelli,Onstochasticgeometry_ElSawy,Statisticalprioritybased_Zhang,Stochasticopportunisticaloha_Baccelli,grantfreeMIMO_Xia,Effectspatialtemporaltrafficstatistics_Wang,SpatiotemporalStochastic_Gharbieh,Randomaccessanalysis_Jiang,RandomaccessPoisson_Stamatiou,SpatiotemporalMSA_Zhong}, which can be obtained by leveraging tools from stochastic geometry \cite{Stochasticgeometryandrandomgraphs_Haenggi,StochasticGeometry_Baccelli,StochasticGeometry_Haenggi}. Despite a good approximation in the high-mobility scenario, e.g., locations of transmitters and receivers vary from time slot to time slot \cite{SpatiotemporalMSA_Zhong,RandomaccessPoisson_Stamatiou}, for the low-mobility or static scenario, which is common for M2M communications as the locations of machine-type devices seldom change after they are deployed, the spatial average probability of successful transmissions may significantly differ from the probability of successful transmissions of a given T-R pair. In those scenarios, the probability of successful transmissions of each T-R pair is highly sensitive to their locations as the interference level varies from T-R pair to T-R pair even under the symmetric setting of system parameters.

In \cite{Themetadistribution_Haenggi,Simpleapproximationsgeneralcellular_Kalamkar,SpatioTemporalCellCenterCellEdge_Yang,Perlinkreliability_Kalamkar,metacoverageprobabilityuplink_ElSawy,TheMetaCellularPowerControl_Wang,SINRmetaPoissoncellular_Feng}, by assuming identical transmission probabilities and input rates for all T-R pairs,
the spatial distribution of the probability of successful transmissions of each T-R pair was derived, based on which the proportion of T-R pairs that achieve certain target performance for a given SINR threshold can further be evaluated. Though helpful for the system-level performance analysis, the spatial distribution of the probability of successful transmissions cannot lead to accurate performance characterization of each specific T-R pair. Moreover, by assuming identical settings for all T-R pairs, the inherent asymmetry in practice cannot be captured, and more importantly, an individual adjustment of transmission probability of each T-R pair is ignored. As we will demonstrate in this paper, such individual adaptive transmission control based on the locations and traffic information of T-R pairs is indispensable for stabilizing the entire network.

\vspace{-0.5cm}
\subsection{Stability Analysis of an Aloha Network with Multiple Receivers}\label{subsection: introduction stability}
\vspace{-0.1cm}

For stability analysis of an Aloha network with multiple receivers, existing studies mainly focused on the ad-hoc scenario with a Poisson bipolar model \cite{AUnifiedFrameworkforSINR_Yang,OntheStabilityofStatic_Zhong,UncoordinatedMassive_Chisci,RandomaccessPoisson_Stamatiou}, or the downlink of a cellular network with independent PPP models for BSs and devices \cite{Spatiotemporalsmallcell_Yang}. Various sufficient or necessary conditions of the input rate for achieving network stability under any given topology have been developed with the symmetric setting of identical input rates and transmission probabilities of all T-R pairs across the network. 

Specifically, in \cite{RandomaccessPoisson_Stamatiou}, an upper-bound of the input rate of each T-R pair was derived, under which the network is guaranteed to be stabilized. The analysis was based on a key assumption that the probabilities of successful transmissions of all T-R pairs are equal to their spatial average, which holds only for the high-mobility scenario. 
For static networks where the probabilities of successful transmissions of T-R pairs may significantly vary with their locations even under the symmetric setting, the spatial distribution of probability of successful transmissions of each T-R pair was derived in \cite{AUnifiedFrameworkforSINR_Yang,OntheStabilityofStatic_Zhong,UncoordinatedMassive_Chisci,Spatiotemporalsmallcell_Yang}. Yet with a fixed and identical transmission probability across the network, to ensure that the network is stabilized under any topology, the input rate of each transmitter cannot exceed zero. That is why instead of stabilizing the whole network, the $\varepsilon$-stability was considered in \cite{AUnifiedFrameworkforSINR_Yang,OntheStabilityofStatic_Zhong,UncoordinatedMassive_Chisci,Spatiotemporalsmallcell_Yang}, where the target is that the percentage of stable transmitters is no smaller than $1-\varepsilon$ for a small $\varepsilon > 0$.

As we will demonstrate in this paper, for given network topology, \textit{all} the transmitters' queues can indeed be stabilized as long as their transmission probabilities are properly selected according to their input rates and locations. The key lies in the characterization of the probability of successful transmissions of each T-R pair, which is crucially dependent on the locations of its transmitter and receiver as well as its traffic input rate and transmission probability. Based on the exact probabilities of successful transmissions of all T-R pairs, the operating region of transmission probabilities for achieving stability and stability region of input rates can further be obtained.

\vspace{-0.2cm} 
\subsection{Our Contributions}\label{subsection:contributions}
\vspace{-0.05cm}

In this paper, the stability analysis is extended from \cite{Atheoreticalframework_Dai} to incorporate multiple capture receivers for characterizing the stability region of input rates and operating region of transmission probabilities for achieving stability in Aloha networks. Specifically, based on the proposed Head-of-Line (HOL)-packet model of each transmitter, fixed-point equations of the probabilities of successful transmissions of all transmitters $\bm{p}$ are established and solved to obtain the network steady-state point for given traffic input rates, transmission probabilities, and locations of all T-R pairs. 

For a stable network, all the transmitters' queues should be unsaturated with the service rate of each queue larger than its input rate. To ensure that the network operates at the all-unsaturated steady-state point $\bm{p}_L$ for given input rates, the transmission probabilities $\bm{q}$ should be selected from a certain region, i.e., the all-unsaturated region, to characterize which a multi-objective optimization problem is formulated to find the maximum and minimum of $\bm{q}$ with the constraint of $\bm{p} = \bm{p}_L$. 
For the stability region of input rates, within which the network stability can always be achieved by properly adjusting transmission probabilities, it is indeed the maximum input rates under the constraint that the all-unsaturated region of transmission probabilities is non-empty, and can be obtained based on the Pareto-optimal solutions of the constrained multi-objective optimization problem. 

Although in general, the network steady-state point, the all-unsaturated region of transmission probabilities, and the stability region of input rates can only be calculated numerically, it is shown that explicit expressions are available in two special cases, i.e., the two T-R pairs and $K$ symmetric T-R pairs, from which the effects of key system parameters can be clearly observed. Simulation results verify the analysis in various network topologies, and corroborate the importance of performing individual adaptive transmission control for achieving network stability.

The remainder of this paper is organized as follows. The system model and key assumptions are presented in Section \ref{section: model}. In Section \ref{section: steady-state point}, the network steady-state point is obtained based on the HOL-packet model and fixed-point equations of the probabilities of successful transmissions. The all-unsaturated region of transmission probabilities and the stability region of input rates are further characterized in Section \ref{section:transmission control} and Section \ref{section: stability region}, respectively. In Section \ref{section: simulation}, simulation results are presented to verify the analysis. Finally, concluding remarks are given in Section \ref{section: conclusion}.

Throughout the paper, for vectors $\bm{x} = (x_1,\ldots,\:x_n)$ and $\bm{y} = (y_1,\ldots,\: y_n)$, $\bm{x} < \bm{y}$ denotes the element-wise inequality, i.e., $x_i < y_i$ for $i = 1,\ldots,\:n$. $|\mathcal{X} \vert$ denotes the cardinality of set $\mathcal{X}$.

\vspace{-0.4cm}
\section{System Model}\label{section: model}
Consider a slotted Aloha network, which contains $K$ transmitters and $L$ receivers. Let $\mathcal{K} = \{1,\ldots,K\}$ and $\mathcal{L} = \{1,\ldots,L\}$ denote the set of transmitters and the set of receivers, respectively. Note that $K$ and $L$ vary under different network scenarios. For the uplink of a single-cell network where multiple transmitters communicate with a single receiver as Fig. \ref{subfig:multiaccess_compare} illustrates, $K>L = 1$. For the uplink of a multi-cell network, $K>L >1$, as Fig. \ref{subfig:multicell_compare} shows. For the downlink of a multi-cell network, or an ad-hoc network, where each transmitter has one dedicated receiver, as Figs. \ref{subfig:DL_multicell_compare} and \ref{subfig:AdHoc_compare} show, respectively, $K = L > 1$.

Assume that each transmitter has a data buffer to store incoming packets, and transmits its HOL packet to its receiver with a certain probability $q_k$ in each time slot if its data buffer is not empty, $k\in\mathcal{K}$. Assume that the transmission probability vector $\bm{q}=(q_1,\ldots, q_K)$ does not change with time, and a HOL packet stays in the queue until it is successfully transmitted.    

For the channel model, let $g_{k,l} = \gamma_{k,l}\cdot h_{k,l}$ denote the channel gain between Transmitter $k \in\mathcal{K}$ and Receiver $l\in\mathcal{L}$. $\gamma_{k,l} = d_{k,l}^{-\alpha/2}$ is the large-scale fading coefficient between Transmitter $k$ and Receiver $l$, where $d_{k,l} $ is the distance between Transmitter $k$ and Receiver $l$, and $\alpha$ is the path-loss exponent. $h_{k,l}$ is the small-scale fading coefficient that varies from time slot to time slot and is modeled as a complex Gaussian random variable with zero mean and unit variance. 
The mean received signal-to-noise ratio (SNR) of Transmitter $k$ at Receiver $l$, denoted as $\rho_{k,l}$, is then given by $\rho_{k,l} = P_k|\gamma_{k,l}|^2/\sigma^2$, where $P_{k}$ denotes the transmission power of $k$, and $\sigma^2$ denotes the variance of the additive white Gaussian noise (AWGN).  

At the receivers, the capture model is assumed. That is, each transmitter's packet is decoded independently by treating others' transmissions as background noise at each time slot. A packet of Transmitter $i$ can be successfully decoded as long as its received signal-to-interference-plus-noise ratio (SINR) at its receiver, denoted as $i^{\ast}$, is equal to or above a certain threshold, denoted as $\theta_{i^{\ast}}$, i.e., 
\begin{equation}\label{SINR_exp}
    \text{SINR}_i = \tfrac{P_{i} |g_{i,i^{\ast}}|^2}{\sum_{j\in \mathcal{S}_i}P_j |g_{j,i^{\ast}}|^2 + \sigma^2} \geq \theta_{i^{\ast}} 
\end{equation}
where $\mathcal{S}_i$ is the set of transmitters that have concurrent transmissions with Transmitter $i$, $i \in\mathcal{K}$.

\begin{figure}[!t]
    \vspace{-0.3cm}
    \captionsetup[subfigure]{justification=centering}
    \centering
    \subfloat[]{
        \includegraphics[width=0.6in,height=0.55in]{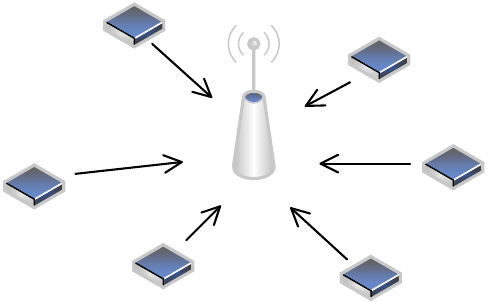}
    \label{subfig:multiaccess_compare}}
    \hfill
    \subfloat[]{
        \includegraphics[width=0.9in,height=0.55in]{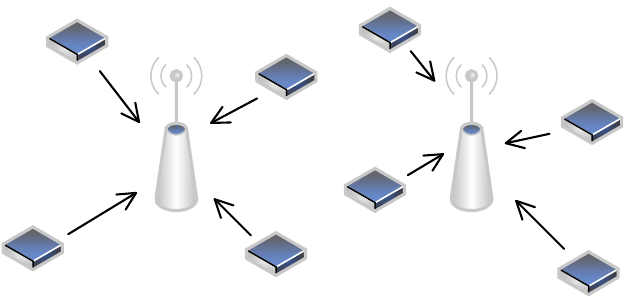}
    \label{subfig:multicell_compare}}
    \hfill
    \subfloat[]{
        \includegraphics[width=0.9in,height=0.55in]{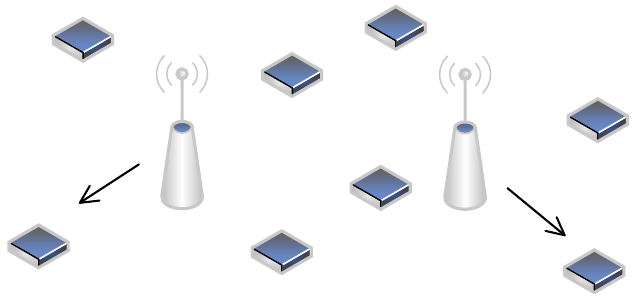}
    \label{subfig:DL_multicell_compare}}
    \hfill
    \subfloat[]{
        \includegraphics[width=0.6in,height=0.55in]{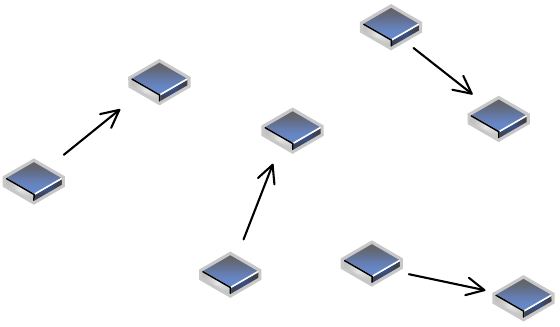}
    \label{subfig:AdHoc_compare}}
    \vspace{-0.1cm}
    \caption{Graphic illustration of (a) the uplink of a single-cell network, (b) the uplink of a multi-cell network,  (c) the downlink of a multi-cell network, and (d) an ad-hoc network. }
    \label{fig:multiaccess_AdHoc_compare}
    \vspace{-0.5cm}
\end{figure}

In this paper, we consider a static network with fixed locations of transmitters and receivers. Note that even with fixed locations, the $\text{SINR}$ of each packet is still random due to the random channel gains and uncoordinated transmissions. With each transmitter transmitting its data packets independently with a certain probability in each time slot, it can be expected that the network performance is crucially determined by the transmission probabilities $\bm{q}$, and may significantly degrade if $\bm{q}$ are not properly selected.

\vspace{-0.55cm}
\subsection{Network Stability}
\vspace{-0.15cm}

For a transmitter's queue, define the input rate and the service rate as the long-term average number of packets that arrive per time slot, and that are served per time slot when the queue is not empty, respectively. Its throughput is defined as the long-term average number of served (successfully transmitted) packets per time slot.
Similar to \cite{Atheoreticalframework_Dai,Multicell_Yang,OntheStabilityofStatic_Zhong,AUnifiedFrameworkforSINR_Yang,UncoordinatedMassive_Chisci,RandomaccessPoisson_Stamatiou,Spatiotemporalsmallcell_Yang}, define that a queue is stable if the steady-state distribution of its queue length $Q(t)$ exists, i.e., 
\begin{equation}
    \lim_{t\to \infty} \Pr\{Q(t) < x\} = F(x) \quad \text{and} \quad \lim_{x\to \infty} F(x) = 1, 
\end{equation}
which is satisfied when the input rate is smaller than the service rate for stationary arrival and service processes \cite{Thestabilityofaqueue_Loynes}. A network is stable if all the queues in the network are stable.   

It can be seen that the stability performance of the network is determined by the input rate and service rate of each transmitter's queue. When the input rate is below the service rate, the queue is unsaturated with a non-zero probability of being empty, and the throughput is equal to the input rate. On the other hand, with the input rate reaching or exceeding the service rate, the queue is busy with probability 1, in which case the queue is saturated and its throughput is equal to the service rate. A queue is stable only when it is unsaturated, and the network is stable only when all the queues are unsaturated.

For each Transmitter $k \in \mathcal{K}$, denote the input rate, service rate and throughput of its queue as $\lambda_k$, $\mu_k$, and $\lambda_{out,k}$, respectively. Further let $\bm{\lambda} = (\lambda_1,\ldots, \lambda_K)$, $\bm{\mu} = (\mu_1,\ldots, \mu_K)$, and $\bm{\lambda}_{out} = (\lambda_{out, 1},\ldots, \lambda_{out, K})$ denote the input rate vector, service rate vector, and throughput vector of the network, respectively. The network is stable only when all the queues are unsaturated with $\bm{\lambda} < \bm{\mu}$, in which case we have $\bm{\lambda}_{out} = \bm{\lambda}$.

\vspace{-0.3cm}
\subsection{Transmission Control}\label{subsection: transmission control_system model}
\vspace{-0.05cm}

With Aloha, the service rate of each transmitter's queue is determined by the transmission probability $q_k$, $k\in\mathcal{K}$. For given network topology  and the input rates $\bm{\lambda}$, we are interested in stabilizing the network by jointly controlling the transmission probabilities $\bm{q}$ of all transmitters. 
Specifically, define the all-unsaturated region of $\bm{q}$ as\footnotemark[2]\footnotetext[2]{Note that $S_{U^K}(\bm{q},\bm{\lambda})$ can also be regarded as a region of input rates $\bm{\lambda}$ for given transmission probabilities $\bm{q}$. That is why we include both $\bm{q}$ and $\bm{\lambda}$ in (\ref{define_UU}), although in this paper we mainly focus on the transmission control, i.e., adjusting transmission probabilities $\bm{q}$ based on the input rates $\bm{\lambda}$.}
\vspace{-0.1cm}
\begin{equation}\label{define_UU}
    S_{U^K}(\bm{q},\bm{\lambda}) \triangleq \{\bm{q} :  \bm{\lambda} < \bm{\mu}  \}.
    \vspace{-0.1cm}
\end{equation}
The network stability can be achieved if $S_{U^K}(\bm{q},\bm{\lambda}) \neq \varnothing$ and $\bm{q}$ are chosen from $S_{U^K}(\bm{q},\bm{\lambda})$. It can be seen that to obtain the all-unsaturated region $S_{U^K}(\bm{q},\bm{\lambda})$, the key lies in the characterization of the service rates $\bm{\mu}$ of transmitters' queues. 
In Section \ref{section:transmission control}, we will demonstrate how to find $S_{U^K}(\bm{q},\:\bm{\lambda})$ by deriving $\bm{\mu}$ as a function of the steady-state probabilities of successful transmission of HOL packets of transmitters' queues.

\vspace{-0.3cm}
\subsection{Stability Region of Input Rates}\label{subsection: stability region_system model}
\vspace{-0.05cm}

To ensure that the network can be stabilized by choosing transmission probabilities from the all-unsaturated region $S_{U^K}(\bm{q},\bm{\lambda})$, we need $S_{U^K}(\bm{q},\bm{\lambda}) \neq \varnothing$, which can be satisfied only when the input rates $\bm{\lambda}$ are small enough.  We are interested in characterizing the stability region of input rates $\bm{\lambda}$, inside which there exists some transmission probability vector to achieve network stability. Let $S_Q(\bm{\lambda})$ denote the stability region of $\bm{\lambda}$, which can be defined as 
\vspace{-0.1cm}
\begin{equation}\label{define_SQ}
    S_Q(\bm{\lambda} )\triangleq \bigcup_{\bm{q}}  \{ \bm{\lambda}: \bm{\lambda} < \bm{\mu} \} .
    \vspace{-0.2cm}
\end{equation}
With $\bm{\lambda} \in S_Q(\bm{\lambda})$, the all-unsaturated region $S_{U^K}(\bm{q},\bm{\lambda}) \neq \varnothing$ and the network can be stabilized by further choosing $\bm{q}$ from $S_{U^K}(\bm{q},\bm{\lambda})$.

\begin{figure}[!t]
    \centering
    \includegraphics[width=2.1in,height=1.3in]{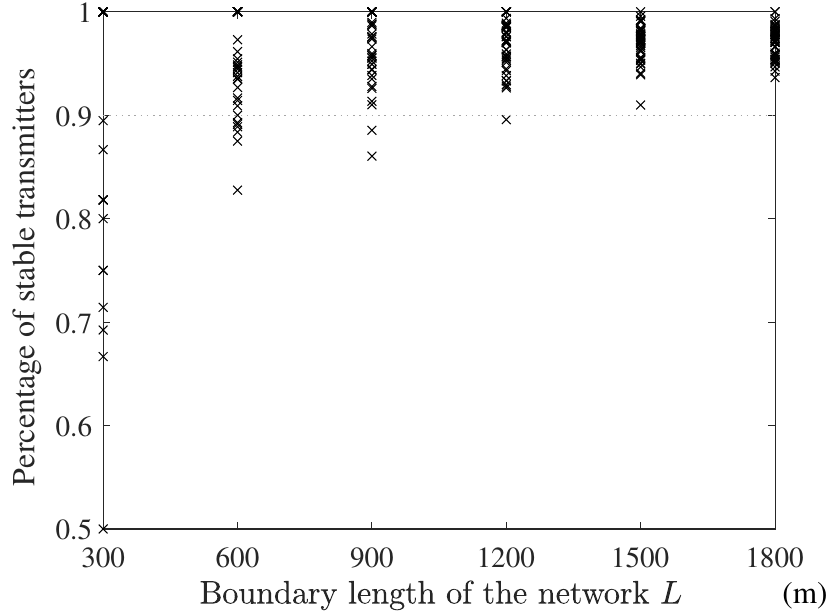}
    \vspace{-0.3cm}
    \caption{The percentage of stable transmitters versus the boundary length $L$ of the network. Transmitters' locations are generated in $[0, L]^2$ according to PPP with density $\xi=10^{-4}$ m$^{-2}$. Each transmitter has a receiver with $d_{i,i}= 25$ m with random orientation, $i\in \mathcal{K}$. $\lambda_i = 0.2$,  $q_i = 1$, $\theta_i = 0$ dB, $P_i = 17$ dBm, $i\in\mathcal{K}$, $\sigma^2 =-90$ dBm, and $\alpha = 3.8$.}
    \label{fig:ratio_stable_nodes_vs_boundlength_sample50_thres1_transP17dBm_irate0.2_q1_YangHoward}
    \vspace{-0.5cm}
\end{figure}

\begin{remark}
    \textit{Note that for previous studies on stability analysis of Aloha networks with multiple capture receivers, various sufficient conditions or necessary conditions of input rates $\bm{\lambda}$ were obtained for achieving $\varepsilon$-stability\footnotemark[3]\footnotetext[3]{Note that for $\varepsilon$-stability, a network is defined as stable if the percentage of stable transmitters is no smaller than $1-\varepsilon$.} given the transmission probabilities $\bm{q}$ \cite{OntheStabilityofStatic_Zhong,AUnifiedFrameworkforSINR_Yang,UncoordinatedMassive_Chisci,Spatiotemporalsmallcell_Yang}. For instance, in \cite{AUnifiedFrameworkforSINR_Yang}, based on a Poisson bipolar model for T-R pairs, it was shown that with density $\xi=10^{-4}$ m$^{-2}$, T-R distance $d_{i,i} = 25$ m, SINR threshold $\theta_i = 0$ dB, and transmission power $P_i = 17$ dBm for all $i\in\mathcal{K}$, $90\%$ of transmitters in the network can be stabilized if the input rate and transmission probability of each transmitter are set to $\lambda_i = 0.2$ and $q_i = 1$, respectively, $i\in\mathcal{K}$.}    

    \textit{To verify the result, we generate 50 topologies of T-R pairs within a square area of side length $L$ varying from 300 m to 1800 m and count the percentage of stable transmitters in each case. It can be observed from Fig. \ref{fig:ratio_stable_nodes_vs_boundlength_sample50_thres1_transP17dBm_irate0.2_q1_YangHoward} that only when the area is sufficiently large, e.g., $L\geq 1500$ m, can the percentage reach $90\%$. The reason is that the results were obtained asymptotically by assuming that the network area goes to infinity. With a small $L$, e.g., $L = 300$ m, the percentage of stable transmitters may be significantly lower than the target $90\%$ with the proposed setting. As we will demonstrate in this paper, $100\%$ of the transmitters' queues can indeed be stabilized as long as the transmission probabilities $\bm{q}$ are selected from the all-unsaturated region $S_{U^K}(\bm{q},\bm{\lambda})$ for given T-R locations, and the input rates $\bm{\lambda}$ are within the stability region $S_Q(\bm{\lambda} )$. }  
\end{remark}

\vspace{-0.1cm}
\section{HOL-Packet Modeling and Steady-State Probabilities of Successful Transmission of HOL Packets}\label{section: steady-state point}
\vspace{-0.05cm}

As we can see from (\ref{define_UU}), the key to stabilizing the network lies in the characterization of the service rate of each transmitter's queue, which depends on the activities of HOL packets of all the transmitters. In this section, we will first review the modeling of HOL packets in a single-cell Aloha network, and then extend the analysis to the scenario with multiple receivers.      

\vspace{-0.45cm}
\subsection{HOL-Packet Modeling} 
\vspace{-0.05cm}

In \cite{Atheoreticalframework_Dai}, a discrete-time Markov renewal process $(\mathbf{X}_k,\mathbf{V}_k) = \{(X_{k,j}, V_{k,j}),  j = 0,1,\ldots \}$ was built to model the behavior of HOL packets of Transmitter $k$ in a single-cell Aloha network, i.e., multiple transmitters transmit to a single receiver, where $X_{k,j}$ denotes the state of a tagged HOL packet of Transmitter $k$ at the $j$-th transition and $V_{k,j}$ denotes the epoch at which the $j$-th transition occurs, $k\in\mathcal{K}$. Fig. \ref{fig:StateTranDiagram} illustrates the embedded Markov chain $\mathbf{X}_k$. A fresh HOL packet stays in the initial State $T$ if it is successfully transmitted.  Otherwise, it moves to State $B$. $p_{k,t}$ denotes the probability of successful transmission of HOL packets of Transmitter $k$ at time slot $t$. Let $p_k = \lim_{t\to \infty} p_{k,t}$. The steady-state probability distribution of the embedded Markov chain shown in Fig. \ref{fig:StateTranDiagram} can be obtained as $\pi_T^k = p_k q_k$ and $\pi_B^k = 1- p_k q_k$. Let $\tau_m^k $ denote the mean holding time at State $m \in \{T,B\}$. As each packet transmission, successful or failed, lasts for one time slot, we have $\tau_T^k = \tau_B^k = 1$ time slot. The limiting state probabilities of the Markov renewal process can be obtained as $\tilde{\pi}_m^k = \pi_m^k$, $m\in \{T,B\}$.

\begin{figure}[!t]
    \centering
    \includegraphics[width=0.38\textwidth,height =0.05\textwidth]{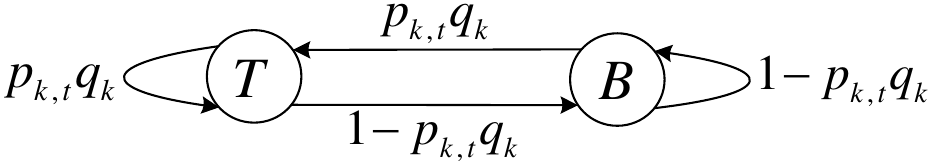}
    \vspace{-0.3cm}
    \caption{Embedded Markov chain $\mathbf{X}_k$ of the state transition process of HOL packets of Transmitter $k$, $k\in\mathcal{K}$.}  \label{fig:StateTranDiagram}
    \vspace{-0.55cm}
  \end{figure}

  Note that $\tilde{\pi}_T^k$ is the service rate of Transmitter $k$ as its queue has a successful output if and only if the HOL packet stays at State $T$. We have
  \vspace{-0.35cm}
  \begin{equation}\label{service_rate}
    \mu_k = \tilde{\pi}_T^k = q_k p_k,
    \vspace{-0.1cm}
  \end{equation} 
$k\in\mathcal{K}$. It can be seen from (\ref{service_rate}) that the service rate $\mu_k$ is crucially determined by $p_k$, the steady-state probability of successful transmission of HOL packets of Transmitter $k$. 

\vspace{-0.3cm}
\subsection{Steady-State Probabilities of Successful Transmission of HOL Packets}\label{subsection: steady-state probabilities}

Let $\bm{p} = (p_1,\ldots,p_K)$ denote the steady-state probabilities of successful transmission of HOL packets of $K$ transmitters. To obtain $p_i$, let us first consider the conditional steady-state probability of HOL packets of Transmitter $i$ for given set of transmitters $\mathcal{S}_i$ that have concurrent transmissions, denoted as $r_{i|\mathcal{S}_i}$. According to (\ref{SINR_exp}), we have 
\vspace{-0.15cm}
\begin{align}\label{condition_p_final}
    r_{i|\mathcal{S}_i} & = \Pr\left\{\tfrac{ |h_{i,i^{\ast}}|^2  }{ \sum_{j\in\mathcal{S}_i} |h_{j,i^{\ast}}|^2 \cdot \tfrac{\rho_{j,i^{\ast}}}{\rho_{i,i^{\ast}}} + \tfrac{1}{\rho_{i,i^{\ast}}}} \geq \theta_{i}^{\ast} \right\} \nonumber\\
    & = \tfrac{\exp\left( -\tfrac{\theta_{i^{\ast}}}{\rho_{i,i^{\ast}}}\right)}{\prod_{j\in \mathcal{S}_i} \left(1 + \tfrac{\rho_{j,i^{\ast}}}{\rho_{i,i^{\ast}}} \cdot \theta_{i^{\ast}}\right) },
    \vspace{-0.1cm}
\end{align}
where $i^{\ast} \in \mathcal{L}$ is the receiver that Transmitter $i$ is associated with, $i\in\mathcal{K}$.  The steady-state probability of successful transmission of HOL packets of Transmitter $i$ can then be written as
\vspace{-0.2cm}
\begin{equation}\label{p_middle}
    p_i = \sum_{\mathcal{S}_i \subseteq  \mathcal{K} \setminus \{i\}} r_{i|\mathcal{S}_i} \cdot \prod_{j \in \mathcal{S}_i} x_j \cdot \prod_{k \in \mathcal{K} 
    \setminus \left(\mathcal{S}_i \cup \{i\}\right)} (1 - x_k),
    \vspace{-0.1cm}
\end{equation}
$i \in\mathcal{K}$, where $x_j$ denotes the probability that Transmitter $j$ is requesting a transmission. If Transmitter $j$ is saturated, i.e., $\mu_j \leq \lambda_j$, then its queue is busy with probability 1, in which case $x_j = q_j$. If Transmitter $j$ is unsaturated, i.e., $\mu_j > \lambda_j$, its queue is busy with probability $\tfrac{\lambda_j}{\mu_j}$, in which case $x_j = \tfrac{\lambda_j}{\mu_j}\cdot q_j = \tfrac{\lambda_j}{p_j}$. We then have
\vspace{-0.1cm}
\begin{equation}\label{x_requesting}
    x_j = \min \left( \tfrac{\lambda_j}{p_j}, q_j \right),
\end{equation}
for $j \in \mathcal{K}$. By combining (\ref{condition_p_final}) and (\ref{x_requesting}), (\ref{p_middle}) can be further written as 
\vspace{-0.25cm}
\begin{equation}\label{p}
    p_i = \exp\left(-\tfrac{\theta_{i^{\ast}}}{\rho_{i,i^{\ast}}} \right) \cdot \prod_{j\in \mathcal{K}\setminus\{i\}} \left(1 - \tfrac{\theta_{i^{\ast}}}{\theta_{i^{\ast}} + \tfrac{\rho_{i,i^{\ast}}}{\rho_{j,i^{\ast}}}} \cdot \min \left( \tfrac{\lambda_j}{p_j}, q_j \right) \right), 
    \vspace{-0.2cm}
\end{equation}
for $i \in \mathcal{K}$. 

(\ref{p}) indicates that the fixed-point equation of $p_i$ depends on whether each transmitter is saturated or not. Let $\phi_i\in\{U, S\}$ denote whether Transmitter $i$'s queue is unsaturated or saturated, $i\in\mathcal{K}$. The state of the network can then be represented by $\bm{\phi} = (\phi_1,\ldots, \phi_K)$, and there are $2^K$ possible network states in total. Given that the network operates at State $\bm{\phi}$, the fixed-point equation of $\bm{p}$ given in (\ref{p}) can be written as $\bm{p} = \bm{f}_{\bm{\phi}}(\bm{p})$, where $\bm{f}_{\bm{\phi}}(\bm{p}) = (f_{1,\bm{\phi}}(\bm{p}), \ldots,  f_{K,\bm{\phi}}(\bm{p}) )$ and 
\vspace{-0.2cm}
\begin{multline}\label{p_given_state}
    f_{i,\bm{\phi}}(\bm{p}) = \exp\left(-\tfrac{\theta_{i^{\ast}}}{\rho_{i,i^{\ast}}} \right) \cdot \prod_{j\in \mathcal{K}\setminus\{i\}, \phi_j = S} \left(1 - \tfrac{\theta_{i^{\ast}}}{\theta_{i^{\ast}} + \tfrac{\rho_{i,i^{\ast}}}{\rho_{j,i^{\ast}}}} \cdot q_j  \right) \\
    \cdot \prod_{j\in \mathcal{K}\setminus\{i\}, \phi_j = U} \left(1 - \tfrac{\theta_{i^{\ast}}}{\theta_{i^{\ast}} + \tfrac{\rho_{i,i^{\ast}}}{\rho_{j,i^{\ast}}}} \cdot \tfrac{\lambda_j}{p_j}\right),
    \vspace{-0.2cm}
\end{multline}
for $i\in\mathcal{K}$. The corresponding network steady-state point at the given State $\bm{\phi}$ can then be obtained as the attracting fixed point of $\bm{p} = \bm{f}_{\bm{\phi}}(\bm{p})$. For instance, if the network is all-saturated, i.e., $\phi_i = S$ for all $i\in\mathcal{K}$, the network steady-state point, denoted as $\bm{p}_A= (p_{1,A},\ldots, p_{K, A})$, can be obtained from (\ref{p_given_state}) as 
\begin{equation}\label{p_S^K}
    p_{i,A} = \exp\left(-\tfrac{\theta_{i^{\ast}}}{\rho_{i,i^{\ast}}} \right) \cdot \prod_{j\in \mathcal{K}\setminus\{i\}
    } \left(1 - \tfrac{\theta_{i^{\ast}}}{\theta_{i^{\ast}} + \tfrac{\rho_{i,i^{\ast}}}{\rho_{j,i^{\ast}}}} \cdot q_j  \right), 
    \vspace{-0.1cm}
\end{equation}
$i \in \mathcal{K}$, in which case the network steady-state point $\bm{p}_A$ is determined by the transmission probabilities $\bm{q}$ of all transmitters.

In general, if only the input rates $\bm{\lambda}$ and transmission probabilities $\bm{q}$ are given without knowing the state of the network, to determine the network steady-state point $\bm{p}$, we may obtain the attracting fixed point of $\bm{p} = \bm{f}_{\bm{\phi}}(\bm{p})$ for all possible network states until (\ref{p}) holds. Specifically, for network state $\bm{\phi}$, the fixed-point equation $\bm{p} = \bm{f}_{\bm{\phi}}(\bm{p})$ can be solved iteratively with the initial value $\bm{p}(1)$ randomly chosen, and only the attracting fixed point\footnotemark[4]\footnotetext[4]{Note that for the fixed-point equation $\bm{p} = \bm{f}_{\bm{\phi}}(\bm{p})$, multiple fixed points may exist. To determine whether a fixed point of $\bm{p} = \bm{f}_{\bm{\phi}}(\bm{p})$, denoted as $\bm{\tilde{p}}$, is attracting or not, the spectral radius of the Jacobian matrix of $\bm{f}_{\bm{\phi}}(\bm{\tilde{p}})$ can be calculated. $\bm{\tilde{p}}$ is an attracting fixed point if and only if the spectral radius is smaller than 1 \cite{GlobalBehavior_Kocic}.} is the output. Starting from the all-saturated state where $\phi_i = S$ for all $i\in \mathcal{K}$, the network steady-state point for each possible network state is calculated, and the search process is terminated when (\ref{p}) holds. The detailed steps are presented in Algorithm \ref{alg:iter}.

\begin{algorithm}[t!]
    \caption{Calculation of the Steady-state Probabilities of Successful Transmission of HOL Packets $\bm{p}$}\label{alg:iter}
    \begin{algorithmic}[1]
      \Require Input rate $\lambda_i$, transmission probability $q_i$ for all $i\in\mathcal{K}$, SINR threshold $\theta_i$ for all $i\in\mathcal{L}$, and mean received SNR $\rho_{i,j}$ for all $i\in\mathcal{K}$, $j\in\mathcal{L}$.

      \State \textit{findroot} $ \gets $ \textit{false}, $k \gets 1$, $\bm{p} \gets \bm{p}_A$
      \If {(\ref{p}) holds}    
          \State \textit{findroot} = \textit{true}      
      \EndIf
      \While {\textit{findroot} = \textit{false}}  
          \State $\mathbf{A}_k \gets $ $ \{\bm{\phi} : \sum_{i =1}^K \mathbbm{1}_{\phi_i = U} = k\text{, where }  \mathbbm{1}_{\phi_i = U} = 1\text{ if } $ 
          
          $\!\!\!\phi_i = U\text{, and } \mathbbm{1}_{\phi_i = U} = 0 \text{ if } \phi_i = S\} $
          \ForAll {$\bm{\phi}\in \mathbf{A}_k$}
                \State  $r \gets 1$  
                    \While {$r \geq 1$}
                        \State $t \gets 1$ 
                        \State For all $i\in\mathcal{K}$, set $p_i(t)$ to a random number 
                        
                        $\quad \quad\:$ between $0$ and $1$
                        \State $\bm{p}(t + 1) \gets \bm{f}_{\bm{\phi}}(\bm{p}( t)) $
                        \While{$\bm{p}(t+1) \neq \bm{p}(t)$}  
                            \State $t \gets t +1$, $\bm{p}(t + 1) \gets \bm{f}_{\bm{\phi}}(\bm{p}( t)) $
                        \EndWhile
                        \State $r \gets $ the spectral radius of the Jacobian matrix 
                        
                        $\quad \quad\:$ of $\bm{f}_{\bm{\phi}} (\bm{p}(t))$
                        \If {$r < 1$ \textbf{and} (\ref{p}) holds}
                            \State \textit{findroot} $\gets$ \textit{true}, $\bm{p} \gets \bm{p}(t)$
                        \EndIf                                   
                    \EndWhile
                    \If {\textit{findroot} = \textit{true}}
                        \State \textbf{break}
                    \EndIf
          \EndFor
          \State $k \gets k + 1$
      \EndWhile
      \Ensure Steady-state probabilities of successful transmission of HOL packets $\bm{p}$.
    \vspace{-0.01cm} 
    \end{algorithmic}
\end{algorithm}

Note that Algorithm \ref{alg:iter} provides a numerical method for calculating the network steady-state point in general. In some special cases, however, explicit expressions of the network steady-state points can be obtained. In the following, we will take the example of two T-R pairs and $K$ symmetric T-R pairs to demonstrate how to obtain the network steady-state points.
  
\vspace{-0.3cm}
\subsection{Special Cases} \label{subsection:special cases_p}

\subsubsection{Two T-R Pairs}\label{subsubsection:2_TR_p}
With $K = L = 2$,  the network has $4$ possible states: all-saturated with State $SS$, partially-saturated with State $SU$ or $US$, and all-unsaturated with State $UU$.

When both transmitters are saturated, i.e., $\phi_i = S$ for $i\in\{1,2\}$, according to (\ref{p}), the all-saturated steady-state point $\bm{p}^{K=2}_{A}$ is obtained as 
\vspace{-0.2cm}
\begin{multline}\label{p_twoTR_SS}
    \bm{p}^{K=2}_{A} = \Biggl( \exp\left(-\tfrac{\theta_{1}}{\rho_{1,1}} \right) \left(1 - \tfrac{\theta_{1}}{\theta_{1} + \tfrac{\rho_{1,1}}{\rho_{2,1}}} \cdot  q_2 \right), \\
    \exp\left(-\tfrac{\theta_{2}}{\rho_{2,2}} \right)  \left(1 - \tfrac{\theta_{2}}{\theta_{2} + \tfrac{\rho_{2,2}}{\rho_{1,2}}} \cdot  q_1 \right) \Biggr).
    \vspace{-0.2cm}
\end{multline}

When Transmitter $i$ is unsaturated and Transmitter $j$ is saturated, i.e., $\phi_i = U$ and $\phi_j = S$, $i, j\in \{1, 2\}$, $i\neq j$, the partially-saturated steady-state point $\bm{p}^{K=2}_{P}$ can be obtained from (\ref{p}) as
\vspace{-0.2cm}
\begin{equation}\label{pi_twoTR_US}
    p^{K=2}_{i,P} = \exp\left(-\tfrac{\theta_{i}}{\rho_{i,i}} \right)  \left(1 - \tfrac{\theta_{i}}{\theta_{i} + \tfrac{\rho_{i,i}}{\rho_{j,i}} } \cdot  q_j \right),
    \vspace{-0.2cm}
\end{equation}
and 
\vspace{-0.1cm}
\begin{multline}\label{pj_twoTR_US}
    p^{K=2}_{j,P}
     = \exp\left(-\tfrac{\theta_{j}}{\rho_{j,j}} \right) \\
     \cdot\left(1 - \tfrac{\theta_{j}}{\theta_{j} + \tfrac{\rho_{j,j}}{\rho_{i,j}}} \cdot \tfrac{\lambda_i}{1 - \tfrac{\theta_{i}}{\theta_{i} + \rho_{i,i}/\rho_{j,i}} \cdot  q_j} \cdot \exp\left(\tfrac{\theta_{i}}{\rho_{i,i}} \right) \right).
     \vspace{-0.1cm}
\end{multline} 
 
When both transmitters are unsaturated, i.e., $\phi_i = U$ for $i\in\{1,2\}$,  (\ref{p}) can be written as
\begin{equation}\label{p_2TR_UU}
    p_i = \exp\left(-\tfrac{\theta_{i}}{\rho_{i,i}} \right)   \left(1 - \tfrac{\theta_{i}}{\theta_{i} + \tfrac{\rho_{i,i}}{\rho_{j,i}}} \cdot  \tfrac{\lambda_j}{p_j} \right), 
    \vspace{-0.1cm}
\end{equation}
for $i, j\in \{1, 2\}$, $i\neq j$. Let 
\vspace{-0.1cm}
\begin{equation}\label{a_b}
    a_i = \exp\left(-\tfrac{\theta_{i}}{\rho_{i,i}}\right), \quad b_i = \tfrac{\theta_{j}}{\theta_{j} + \tfrac{\rho_{j,j}}{\rho_{i,j}}}, 
    \vspace{-0.1cm}
\end{equation}
$i, j\in \{1, 2\}$, $i\neq j$, and 
\vspace{-0.1cm}
\begin{equation}\label{c_K=2}
    c = \prod_{i=1}^{2} \left( 1 - \tfrac{b_i \lambda_i}{p_i}\right).
    \vspace{-0.1cm}
\end{equation}
According to (\ref{p_2TR_UU}), we have $p_i - b_i\lambda_i = a_i c$ for $i \in \{1, 2\}$. (\ref{c_K=2}) can then be written as 
\begin{equation}\label{c_2_K=2}
    c = \prod_{i=1}^{2} \frac{a_i c}{a_i c + b_i \lambda_i}, 
    \vspace{-0.1cm}
\end{equation}
which has one attracting fixed point $c_L$ and one repealling fixed point $c_S$, given by
\begin{equation}\label{c_L}
    c_L = \tfrac{1}{2}\left( 1 - \tfrac{b_1 \lambda_1}{a_1} - \tfrac{b_2 \lambda_2}{a_2} + \sqrt{( 1 - \tfrac{b_1 \lambda_1}{a_1} - \tfrac{b_2 \lambda_2}{a_2})^2 - 4 \cdot \tfrac{b_1 \lambda_1}{a_1}  \cdot\tfrac{b_2 \lambda_2}{a_2}}  \right),  
\end{equation}
\begin{equation}\label{c_S}
    c_S = \tfrac{1}{2}\left( 1 - \tfrac{b_1 \lambda_1}{a_1} - \tfrac{b_2 \lambda_2}{a_2} - \sqrt{( 1 - \tfrac{b_1 \lambda_1}{a_1} - \tfrac{b_2 \lambda_2}{a_2})^2 - 4 \cdot \tfrac{b_1 \lambda_1}{a_1} \cdot \tfrac{b_2 \lambda_2}{a_2}}  \right),
    \vspace{-0.1cm}
\end{equation}
if and only if
\vspace{-0.1cm} 
\begin{equation}\label{condition_1}
    \sqrt{\tfrac{b_1 \lambda_1}{a_1} } + \sqrt{\tfrac{b_2 \lambda_2}{a_2}} < 1. 
\end{equation}
Otherwise, (\ref{c_2_K=2}) has zero positive real roots. Accordingly, (\ref{p_2TR_UU}) has two sets of roots $\bm{p}_L^{K=2} = (p_{1,L}^{K=2}, p_{2,L}^{K=2})$ and $\bm{p}_S^{K=2} = (p_{1,S}^{K=2}, p_{2,S}^{K=2})$, where 
\begin{equation}\label{p_L_S_2TR}
    p_{i,L}^{K=2} = a_i c_L + b_i \lambda_i, \quad p_{i,S}^{K=2} = a_i c_S + b_i \lambda_i,
\end{equation}
$i\in\{1,2\}$, if (\ref{condition_1}) holds. Only $\bm{p}_L^{K=2}$ is the attracting fixed point, which is the all-unsaturated steady-state point.

\subsubsection{$K$ Symmetric T-R Pairs}\label{subsubsection: symm_K_TR_p}

In this case, all $K$ T-R pairs have identical input rates, transmission probabilities, and SINR thresholds, i.e., $\lambda_i = \lambda$, $q_i = q$, and $\theta_i = \theta$ for all $i\in\mathcal{K}$. Moreover, the mean received SNR for any T-R pair is equal, i.e., $\rho_{i,j} = \rho$ for all $i,j \in \mathcal{K}$. A single-cell network with uplink power control can be regarded as an example of symmetric T-R pairs. 

With the above symmetric setting, the network has only two possible states: all-saturated or all-unsaturated. The all-saturated steady-state point $p_A$ can be obtained from (\ref{p}) as
\begin{equation}\label{p_single_symmetric_saturated}
    p_{A} = \exp\left( -\tfrac{\theta}{\rho}\right) \cdot \left(1- \tfrac{\theta}{\theta + 1}\cdot q  \right)^{K-1} \approx \exp\left( -\tfrac{\theta}{\rho} - \tfrac{K \theta}{\theta + 1}\cdot q \right),
\end{equation}
by applying $(1-x)^n \approx \exp(-nx) $ for large $n$, which is consistent with Eq. (12) given in \cite{MaximumsumrateCapture_Li}. 
When the network is all-unsaturated, (\ref{p}) becomes
\begin{equation}\label{p_single_symmetric_unsaturated}
    p = \exp\left( -\tfrac{\theta}{\rho}\right) \cdot \left(1- \tfrac{\theta}{\theta + 1}\cdot \tfrac{\lambda}{p}  \right)^{K-1} \approx \exp\left( -\tfrac{\theta}{\rho} - \tfrac{K \theta}{\theta + 1}\cdot \tfrac{\lambda}{p} \right),
\end{equation}
which has one attracting fixed point $p_L$ and one repelling fixed point $p_S$, given by
\vspace{-0.1cm}
\begin{equation}\label{p_single_symmetric_unsaturated_solution}
    p_L = \tfrac{K\theta \lambda}{-(\theta + 1) \mathbb{W}_0\left( -\tfrac{K\theta \lambda}{\theta + 1} \exp\left(\tfrac{\theta}{\rho}\right) \right)}, 
    \; p_S  = \tfrac{K\theta \lambda}{-(\theta + 1) \mathbb{W}_{-1}\left( -\tfrac{K\theta \lambda}{\theta + 1} \exp\left(\tfrac{\theta}{\rho}\right) \right)},
\end{equation}
if and only if 
\vspace{-0.1cm}
\begin{equation}\label{symm_condition_1}
    0<\lambda< \tfrac{\theta + 1}{K\theta} \exp\left(-1-\tfrac{\theta}{\rho}\right),
\end{equation}
where $-1 \leq \mathbb{W}_0(z) < 0$ and $\mathbb{W}_{-1}(z)\leq -1$ for $-e^{-1}\leq z<0$ are two branches of the Lambert W function. $p_L$ is the all-unsaturated steady-state point.

\vspace{-0.3cm} 
\section{Transmission Control for Achieving Stability}\label{section:transmission control}

It has been shown in Section \ref{subsection: transmission control_system model} that to stabilize the network, the transmission probabilities of T-R pairs should be selected from the all-unsaturated region $S_{U^K}(\bm{q},\bm{\lambda})$ defined in (\ref{define_UU}), such that the network can operate at the all-unsaturated steady-state point $\bm{p}_L$. In the following, we will first demonstrate how to calculate the all-unsaturated region $S_{U^K}(\bm{q},\bm{\lambda})$ of transmission probabilities in the general scenario, and then present the explicit expressions of $S_{U^K}(\bm{q},\bm{\lambda})$ in two special cases, i.e., the two T-R pairs and $K$ symmetric T-R pairs. 

\vspace{-0.2cm}
\subsection{All-Unsaturated Region $S_{U^K}(\bm{q}, \bm{\lambda})$} \label{subsection:SUU_general}
\vspace{-0.1cm}

\setlength{\textfloatsep}{8pt}
\begin{algorithm}[t!]
    \caption{Calculation of the All-unsaturated Region $S_{U^K}(\bm{q},\bm{\lambda})$}\label{alg:findUU}
    \begin{algorithmic}[1]
      \Require Input rate $\lambda_i$ for all $i\in\mathcal{K}$, SINR threshold $\theta_i$ for all $i\in\mathcal{L}$, and mean received SNR $\rho_{i,j}$ for all $i\in\mathcal{K}$, $j\in\mathcal{L}$.
      \renewcommand{\algorithmicrequire}{\textbf{Function:}} 
      \Require $\bm{g}_1(\bm{q}) = -\bm{q}$, $\bm{g}_2(\bm{q}) = \bm{q}$, $\bm{h}_1 (\bm{q}) = \bm{\lambda} - \bm{p} \circ \bm{q}$,\footnotemark[5] $\bm{h}_2 (\bm{q}) = \bm{q} - \bm{1}$, $\bm{h}_3(\bm{q}) = -\bm{q}$. 
      \State $\mathbf{Q}_1 \gets \text{gamultiobj}(\bm{g}_1(\bm{q}), \bm{h}_1(\bm{q}), \bm{h}_2(\bm{q}), \bm{h}_3(\bm{q}))$.\footnotemark[6]
      \State $\mathbf{Q}_2 \gets \text{gamultiobj}(\bm{g}_2(\bm{q}), \bm{h}_1(\bm{q}), \bm{h}_2(\bm{q}), \bm{h}_3(\bm{q}))$.
      \State  $S_{U^K}(\bm{q}, \bm{\lambda}) \gets \{ \bm{q} : \forall \bm{q}^u \in \mathbf{Q}_{1}, \forall \bm{q}^l \in \mathbf{Q}_{2}, \bm{q}^l \leq \bm{q} \leq \bm{q}^u \} $.
      \Ensure The all-unsaturated region $S_{U^K}(\bm{q}, \bm{\lambda})$.
    \end{algorithmic}
\end{algorithm}

\footnotetext[5]{The steady-state probabilities of successful transmission of HOL packets $\bm{p}$ for given input rates $\bm{\lambda}$ and transmission probabilities $\bm{q}$ are calculated through Algorithm \ref{alg:iter}.}

\footnotetext[6]{Here $\bm{g}_1(\bm{q})$ is the objective function. $\bm{h}_1(\bm{q})$, $\bm{h}_2(\bm{q})$, and $\bm{h}_3(\bm{q})$ are the constraint functions. The settings of the gamultiobj function are given as follows: the population size is 500, constraint tolerance is $10^{-7}$, crossover function is single-point, function tolerance is $10^{-5}$, maximum number of generations is 1000, and all the other options are set to default.  }

\begin{figure*}
    \captionsetup[subfigure]{justification=centering}
    \centering
    \subfloat[]{
      \includegraphics[width=1.55in,height=1.35in]{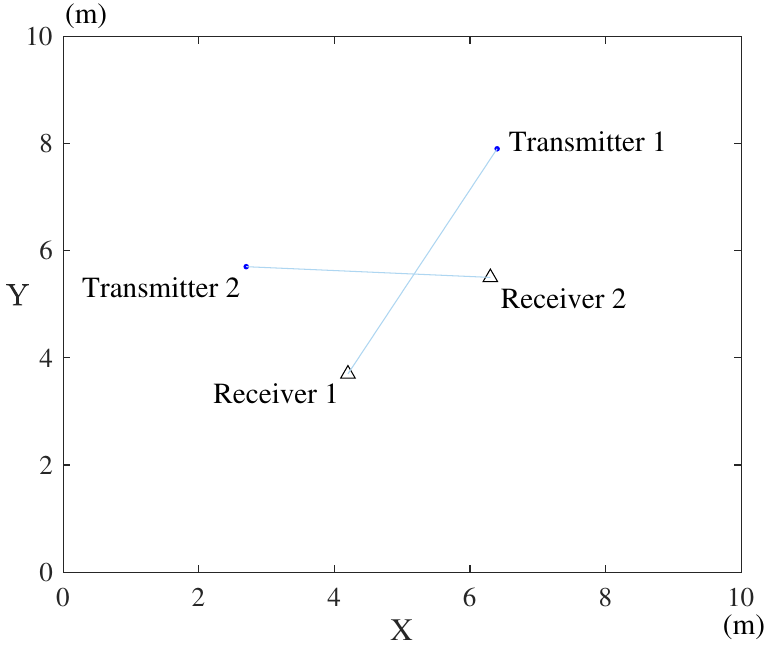}
    \label{subfig:topology_AdHoc_PPPTR_Txnum2_case1}}
    \subfloat[]{
        \includegraphics[width=1.6in,height=1.3in]{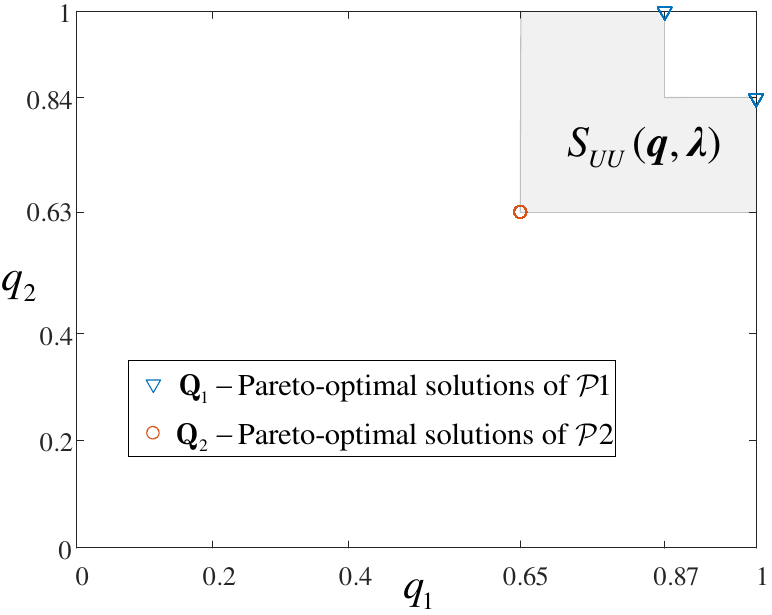}
      \label{subfig:ga_UU_AdHoc_PPPTR_Txnum2_case1_Pt-66dBm_-69dBm_thres-5dB_-7dB_irate0.2_0.27}}
      \subfloat[]{
        \includegraphics[width=1.55in,height=1.35in]{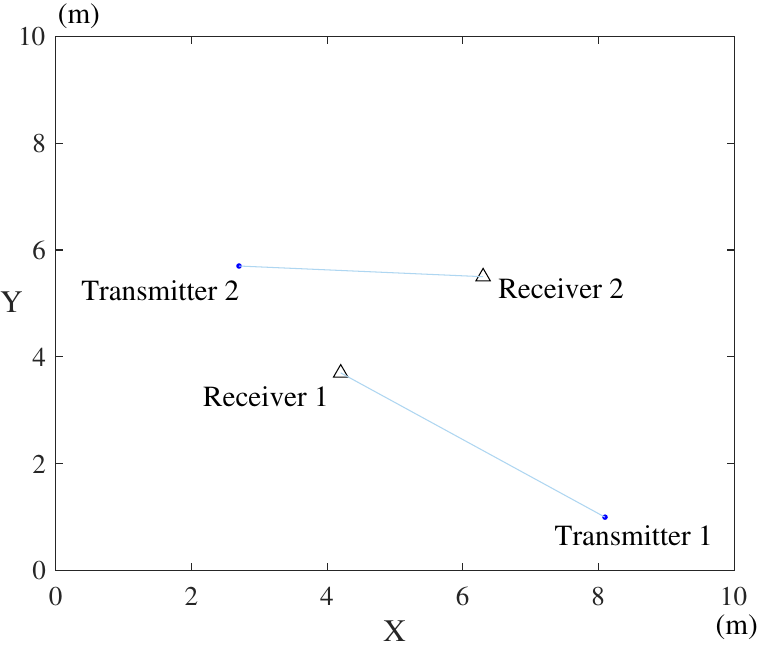}
      \label{subfig:topology_AdHoc_PPPTR_Txnum2_case2}}
      \subfloat[]{
        \includegraphics[width=1.6in,height=1.3in]{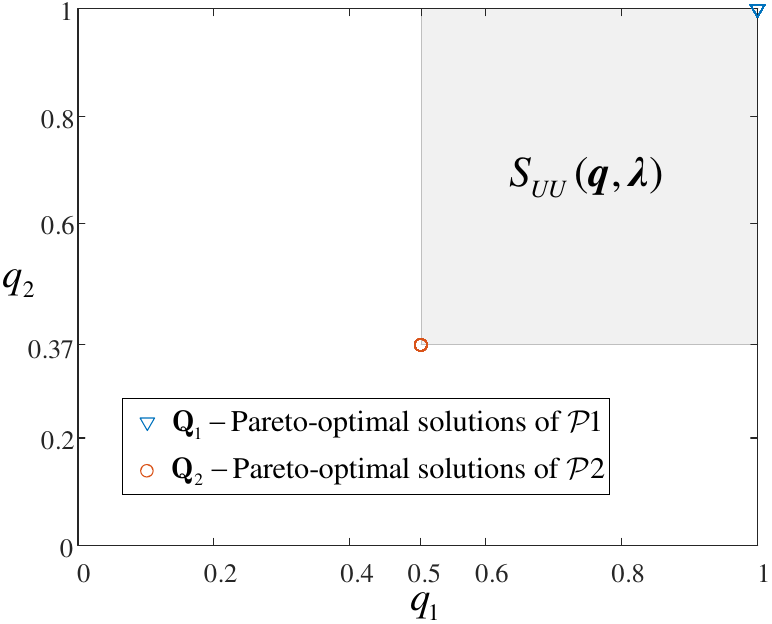}
      \label{subfig:ga_UU_AdHoc_PPPTR_Txnum2_case2_Pt-66dBm_-69dBm_thres-5dB_-7dB_irate0.2_0.27}}
      \vspace{-0.2cm}
    \caption{(a)(c) Topologies of two T-R pairs. $\rho_{1,1} = -3$ dB. $\rho_{2,2} = -1.3$ dB. $\rho_{2,1} = 5.1$ dB. (a) $\rho_{1,2} = 8.8$ dB. (c) $\rho_{1,2} = -3.4$ dB.  (b)(d) All-unsaturated region $S_{UU} (\bm{q}, \bm{\lambda})$ of transmission probabilities $\bm{q}$ obtained via Algorithm \ref{alg:findUU} with topologies given in (a) and (c), respectively. $\theta_1 = -5$ dB. $ \theta_2 = -7$ dB. $\lambda_1 = 0.2$. $\lambda_2 = 0.27$. }
    \label{fig:topology_AdHoc_PPPTR_Txnum2_case1_ga_SUU}
    \vspace{-0.65cm}
\end{figure*}

\begin{figure*}[!t]
    \captionsetup[subfigure]{justification=centering}
    \centering
    \subfloat[]{
        \includegraphics[width=1.55in,height=1.33in]{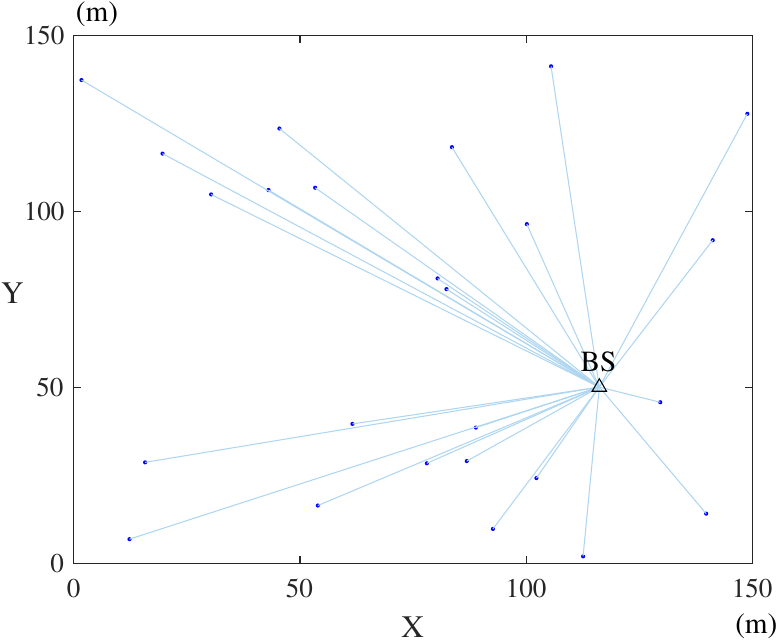}
    \label{subfig:topology_singleCell_PPP_v5_K25}} 
    \subfloat[]{
      \includegraphics[width=1.65in,height=1.28in]{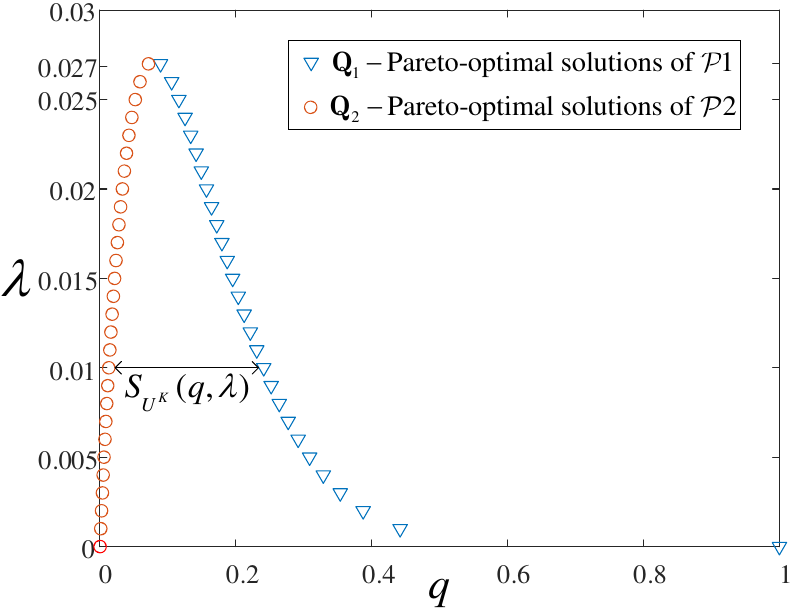}
    \label{subfig:ga_SUU_symm_vs_irate_K25_v5_rho10dB_thres0dB}}
    \subfloat[]{
        \includegraphics[width=1.55in,height=1.33in]{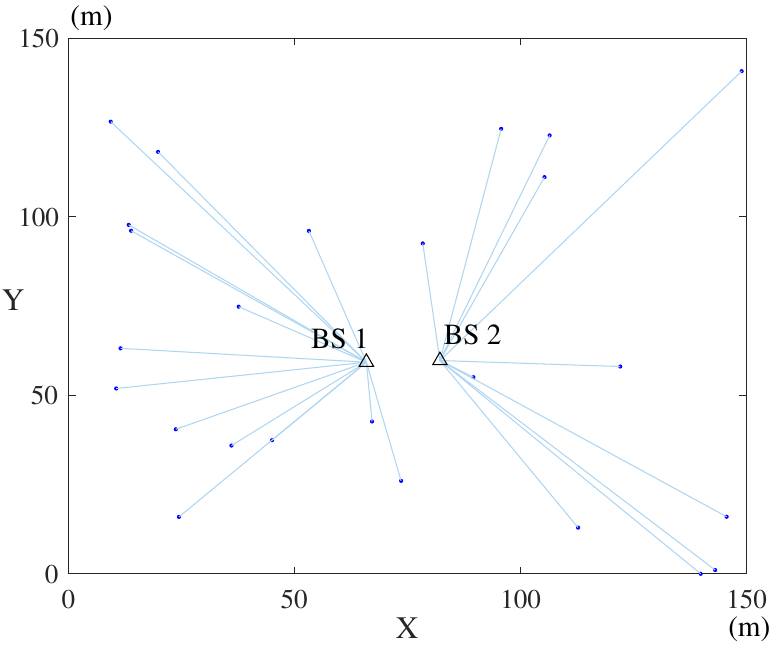}
    \label{subfig:topology_2Cell_PPP_v3}}
    \subfloat[]{
      \includegraphics[width=1.65in,height=1.28in]{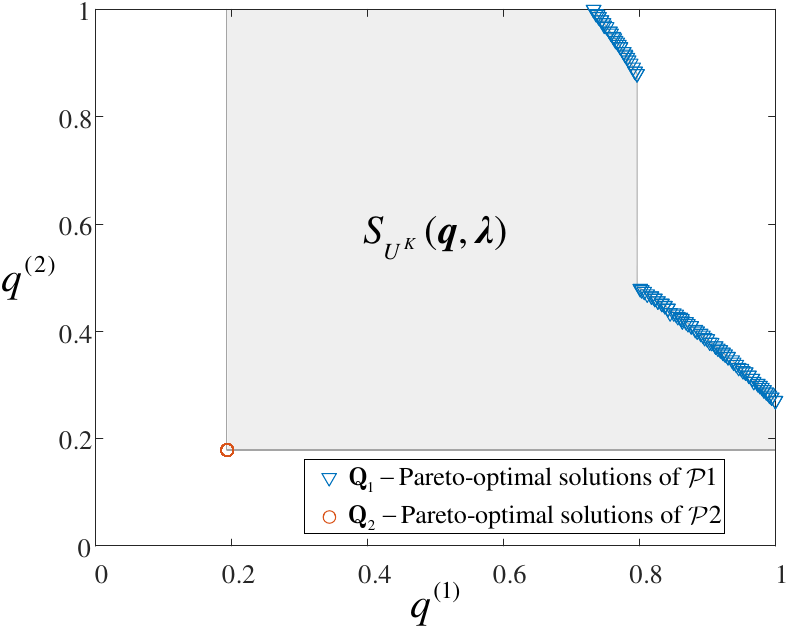}
    \label{subfig:ga_UU_2Cell_PPP_v3_rho0dB_irate0.1_thres-8dB}}
    \vspace{-0.2cm}
    \caption{(a) A single-cell network and (c) a two-cell network, where locations of transmitters are generated according to PPP with density $\xi = 10^{-3} \text{ m}^{-2}$ in $[0, 150 \text{ m}]^2$, as represented by dots. The location of each BS is randomly generated in $[0, 150 \text{ m}]^2$, as represented by the triangle, and each transmitter is associated with its closest BS. (b) All-unsaturated region $S_{U^{K}}(q, \lambda)$ of transmission probability $q$ obtained via Algorithm \ref{alg:findUU} for the single-cell network given in (a) with symmetric setting. $\rho = 10$ dB. $\theta = 0$ dB. (d) All-unsaturated region $S_{U^K}(\bm{q}, \bm{\lambda})$ of transmission probabilities $\bm{q}$ obtained via Algorithm \ref{alg:findUU} for the two-cell network given in (c). $\rho_{i,i^{\ast}} = 0$ dB, $\lambda_i = 0.1$, $i\in \mathcal{K}$. $\theta_1 = \theta_2 = -8$ dB.  $\alpha = 4$. }
    \label{fig:topology_2Cell_PPP_v3_UU}
    \vspace{-0.4cm}
\end{figure*}

According to (\ref{define_UU}), for given input rates $\bm{\lambda}$, to determine the all-unsaturated region $S_{U^K}(\bm{q},\bm{\lambda})$, we need to find all possible values of transmission probabilities $\bm{q}$, with which the service rates $\bm{\mu}$ are larger than the input rates $\bm{\lambda}$, or equivalently, the network operates at the all-unsaturated steady-state point $\bm{p}_L$. Therefore, the all-unsaturated region $S_{U^K}(\bm{q},\bm{\lambda})$ can be determined based on the Pareto-optimal solutions of the following multi-objective problems:
\vspace{-0.1cm}  
\begin{align}
    \mathcal{P}1:  \quad     \max \quad   & \bm{q}, \label{P1_objfun} \\ 
        \text{s.t.} \quad   & \bm{0} < \bm{q} \leq \bm{1},  \label{P1_linconstraint}\\
                          & \bm{p} = \bm{p}_L,  \label{P1_nonlinconstraint}
    \vspace{-0.3cm}
\end{align}
and
\vspace{-0.3cm}
\begin{align}
    \mathcal{P}2:  \quad    \min \quad   & \bm{q}, \label{P2_objfun} \\
      \text{s.t.} \quad   & \bm{0} < \bm{q} \leq \bm{1}, \label{P2_linconstraint} \\
                         &  \bm{p} = \bm{p}_L.  \label{P2_nonlinconstraint}
 \vspace{-0.2cm}
\end{align} 
Note that according to (\ref{service_rate}), for $\bm{p} = \bm{p}_L$, $\bm{p} \circ \bm{q} > \bm{\lambda}$ should hold, where the network steady-state point $\bm{p}$ for given $\bm{\lambda}$ and $\bm{q}$ can be obtained through Algorithm \ref{alg:iter}. Algorithm \ref{alg:findUU} summarizes the main steps of calculating the all-unsaturated region $S_{U^K}(\bm{q},\bm{\lambda})$, where the elitist non-dominated sorting genetic algorithm (NSGA) \cite{MultiObjective_Deb} implemented by using the gamultiobj function in MATLAB is adopted to find the Pareto-optimal solutions of $\mathcal{P}1$ and $\mathcal{P}2$.

For illustration, let us consider two T-R pairs with a random topology shown in Fig. \ref{subfig:topology_AdHoc_PPPTR_Txnum2_case1}. The Pareto-optimal solutions $\mathbf{Q}_1$ and $\mathbf{Q}_2$ can be obtained from Algorithm \ref{alg:findUU} as $\mathbf{Q}_1 = \{(0.87, 1), (1, 0.84)\}$ and $\mathbf{Q}_2 = \{(0.65, 0.63)\}$, and the corresponding all-unsaturated region $S_{UU} (\bm{q}, \bm{\lambda})$ is shown in Fig. \ref{subfig:ga_UU_AdHoc_PPPTR_Txnum2_case1_Pt-66dBm_-69dBm_thres-5dB_-7dB_irate0.2_0.27}. When the distance between Transmitter 1 and Receiver 2 increases, as Fig. \ref{subfig:topology_AdHoc_PPPTR_Txnum2_case2} shows, the interference between T-R pairs becomes lower comparing to the one in Fig. \ref{subfig:topology_AdHoc_PPPTR_Txnum2_case1}. It can be seen from Fig. \ref{subfig:ga_UU_AdHoc_PPPTR_Txnum2_case2_Pt-66dBm_-69dBm_thres-5dB_-7dB_irate0.2_0.27} that with $\mathbf{Q}_1$ and $\mathbf{Q}_2$ obtained as $\mathbf{Q}_1 = \{(1, 1)\}$ and $\mathbf{Q}_2 = \{(0.5, 0.37)\}$, the all-unsaturated region $S_{UU} (\bm{q}, \bm{\lambda})$ is significantly larger than that in Fig. \ref{subfig:ga_UU_AdHoc_PPPTR_Txnum2_case1_Pt-66dBm_-69dBm_thres-5dB_-7dB_irate0.2_0.27} owing to lower interference between the T-R pairs.

For cellular networks, in Fig. \ref{subfig:topology_singleCell_PPP_v5_K25}, we further consider a single cell with symmetric setting, that is, $q_i = q$, $\lambda_i = \lambda$, $\rho_{i,i^{\ast}} = \rho$, and $\theta_{i^{\ast}} = \theta$ for all $i\in\mathcal{K}$. 
The all-unsaturated region $S_{U^{K}}(q, \lambda)$ of transmission probability $q$ can be obtained from Algorithm \ref{alg:findUU} for different values of input rate $\lambda$, as Fig. \ref{subfig:ga_SUU_symm_vs_irate_K25_v5_rho10dB_thres0dB} shows. For the two-cell network shown in Fig. \ref{subfig:topology_2Cell_PPP_v3}, let $q_i = q^{(l)}$ and $\lambda_i = \lambda^{(l)}$ be the transmission probability and input rate of each transmitter in Cell $l$, respectively, $i\in \mathcal{K}_l^C$, where $\mathcal{K}_l^C$ is the set of all transmitters associated with BS $l$ for $l\in \{1,2\}$ and $\mathcal{K}_1^C \cup \mathcal{K}_2^C = \mathcal{K}$.     
The corresponding all-unsaturated region  $S_{U^K}(\bm{q}, \bm{\lambda})$ obtained from Algorithm \ref{alg:findUU} is illustrated in Fig. \ref{subfig:ga_UU_2Cell_PPP_v3_rho0dB_irate0.1_thres-8dB}.

In general, the all-unsaturated region $S_{U^K} (\bm{q}, \bm{\lambda})$ has to be numerically calculated based on Algorithm \ref{alg:findUU}. In the following, we will demonstrate that in two special cases, i.e., two T-R pairs and $K$ symmetric T-R pairs, the explicit expressions of $S_{U^K} (\bm{q}, \bm{\lambda})$ can be obtained. 

\subsection{Special Cases} \label{subsection:special cases_SUU}

As we have mentioned in Section \ref{subsection:SUU_general}, to achieve stability, the network needs to operate at the all-unsaturated steady-state point $\bm{p}_L$. In Section \ref{subsection:special cases_p}, we have derived the all-unsaturated steady-state points for the two special cases, i.e., two T-R pairs and $K$ symmetric T-R pairs. The following theorems further present the all-unsaturated region $S_{U^K}(\bm{q},\bm{\lambda})$ in the two special cases.

\subsubsection{Two T-R Pairs}\label{subsubsection: 2 T-R pairs_SUU}
With $K=L=2$, the all-unsaturated steady-state point $\bm{p}_L^{K=2}$ has been given in (\ref{p_L_S_2TR}). Theorem \ref{theorem: SUU_2TR} presents the corresponding all-unsaturated region $S_{UU}(\bm{q},\bm{\lambda})$.
\begin{theorem}\label{theorem: SUU_2TR}
    The all-unsaturated region $S_{UU}(\bm{q},\bm{\lambda})$ for $K=L=2$ is 
    \vspace{-0.2cm}
    \begin{multline} \label{SUU_2TR}
        S_{UU}(\bm{q},\bm{\lambda}) = \Biggl\{ \bm{q}: \underset{i\neq j}{\bigcup_{i,j\in\{1,2\}}} \Biggl\{  \tfrac{\lambda_i}{p_{i,L}^{K=2}} < q_i < \min\left(\tfrac{\lambda_i}{p_{i,S}^{K=2}}, 1\right), \\
        \tfrac{\lambda_j}{p_{j,L}^{K=2}} < q_j \leq 1 \Biggr\}   \Biggr\}. 
        \vspace{-0.3cm}
    \end{multline} 
\end{theorem}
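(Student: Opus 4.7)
The plan is to translate the definition $S_{UU}(\bm{q},\bm{\lambda})=\{\bm{q}:\bm{\lambda}<\bm{\mu}\}$ from (\ref{define_UU}) into explicit bounds on $\bm{q}$ by invoking the two-T-R-pair steady-state analysis of Section~III-C1. Since $\mu_i=q_i p_i$, the target set is the locus of $\bm{q}$ for which (a) the attracting network steady-state point coincides with $\bm{p}_L^{K=2}$ in (\ref{p_L_S_2TR}), and (b) $q_i p_{i,L}^{K=2}>\lambda_i$ for both $i$. A structural fact to exploit is that $\bm{p}_L^{K=2}$ and $\bm{p}_S^{K=2}$, being the two roots of the all-unsaturated fixed-point equation (\ref{p_2TR_UU}), depend only on $\bm{\lambda}$ and the channel parameters and not on $\bm{q}$, so every bound produced will be a purely algebraic condition on the components of $\bm{q}$.

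First I would extract the lower bound from consistency. Equation (\ref{p_2TR_UU}) is derived under $x_i=\lambda_i/p_i$, which by (\ref{x_requesting}) requires $q_i>\lambda_i/p_i$; substituting $p_i=p_{i,L}^{K=2}$ yields $q_i>\lambda_i/p_{i,L}^{K=2}$, and this coincides with the service-rate inequality $\mu_i>\lambda_i$ that $S_{UU}$ itself demands. Adjoining the probability constraint $q_i\le 1$ produces a candidate rectangle in the $(q_1,q_2)$-plane, which the remaining argument must further carve.

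The carving step is to exclude convergence to a competing steady state, either the partially-saturated fixed point of (\ref{pi_twoTR_US})--(\ref{pj_twoTR_US}) or the all-saturated $\bm{p}_A^{K=2}$ of (\ref{p_twoTR_SS}). Using the Vieta relations $c_L+c_S=1-b_1\lambda_1/a_1-b_2\lambda_2/a_2$ and $c_L c_S=b_1 b_2\lambda_1\lambda_2/(a_1 a_2)$ read off from (\ref{c_L})--(\ref{c_S}), I would show that the critical value of $q_i$ at which the partially-saturated state with Transmitter $i$ flipping to saturated becomes self-consistent reduces cleanly to $q_i=\lambda_i/p_{i,S}^{K=2}$. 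Because a partial saturation flips only one transmitter, it suffices that at least one coordinate stay strictly below this threshold while the other remains free up to $1$; indexing the flipping transmitter by the ordered pair $(i,j)$ then reproduces the asymmetric two-box description in (\ref{SUU_2TR}), including the strict $<$ on the $i$-coordinate and the weak $\le 1$ on the $j$-coordinate.

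The main obstacle is the algebraic identification of $\lambda_i/p_{i,S}^{K=2}$ as precisely the bifurcation threshold for the competing partially-saturated fixed point. The crux is to observe that at $q_i=\lambda_i/p_{i,S}^{K=2}$ the repelling root $\bm{p}_S^{K=2}$ of the all-unsaturated equation itself satisfies the consistency condition of the partially-saturated regime, so a competing attracting fixed point enters the admissible region exactly at that boundary; its attracting character is then verified through the Jacobian spectral-radius criterion used in Algorithm~\ref{alg:iter}. Once this correspondence between the repelling root and the saturation threshold is pinned down, the rest of the proof is routine bookkeeping over the four possible network states $\bm{\phi}\in\{UU,SU,US,SS\}$ and intersecting their consistency regions with the rectangle from the second paragraph.
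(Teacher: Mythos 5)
Your lower bound is exactly the paper's: $q_i>\lambda_i/p_{i,L}^{K=2}$ comes from $\mu_i=q_ip_{i,L}^{K=2}>\lambda_i$ via (\ref{service_rate}), matching (\ref{SUU_2TR_lowbound}). You have also correctly spotted the structural role of the repelling root: at $q_i=\lambda_i/p_{i,S}^{K=2}$ one has $x_i=q_i=\lambda_i/p_{i,S}^{K=2}$ in (\ref{x_requesting}), so $\bm{p}_S^{K=2}$ simultaneously solves the all-unsaturated equation (\ref{p_2TR_UU}) and the partially-saturated one (\ref{pi_twoTR_US})--(\ref{pj_twoTR_US}); that is indeed why $\lambda_i/p_{i,S}^{K=2}$ appears as the upper threshold.

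The gap is in the carving step: you replace a dynamical convergence argument with a static census of which fixed points are self-consistent and attracting, and these are not equivalent. Systems of this type are bistable --- for the same $\bm{q}$ the all-unsaturated point $\bm{p}_L^{K=2}$ and a (partially- or all-) saturated point can both be self-consistent and both pass the Jacobian spectral-radius test of Algorithm~\ref{alg:iter}. Knowing that a competing attracting fixed point ``enters the admissible region'' at $q_i=\lambda_i/p_{i,S}^{K=2}$ therefore does not tell you which basin the actual queueing process lands in, and conversely the absence of a competing consistent fixed point does not by itself force convergence to $\bm{p}_L^{K=2}$. The paper resolves this with a worst-case trajectory bound rather than a bifurcation count: Lemma~\ref{lemma: ct cL} shows the iteration $c_{t+1}=F(c_t)$ for (\ref{c_2_K=2}) converges to $c_L$ iff $c_t>c_S$ for all $t$, and Lemma~\ref{lemma: condition_convergence_K=2} shows this holds iff $q_1<\lambda_1/p_{1,S}^{K=2}$ or $q_2<\lambda_2/p_{2,S}^{K=2}$ --- the sufficiency direction uses the uniform bound $p_{j,t}\ge a_j(1-b_iq_i)>p_{j,S}^{K=2}$, valid no matter what the queue states are, while the necessity direction exhibits the reachable worst case (both queues busy) in which $p_{i,t}\le p_{i,S}^{K=2}$ and the trajectory is captured by the saturated regime. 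Your proposal contains no analogue of this uniform bound, and the ``routine bookkeeping over the four states'' you defer to would not supply it: checking self-consistency of $SS$, $SU$, $US$ state by state still leaves open what happens when two consistent attractors coexist. To close the argument you would need to add precisely the $\min_t c_t>c_S$ criterion and the busy-queue worst case, at which point you have reproduced Lemmas~\ref{lemma: ct cL} and~\ref{lemma: condition_convergence_K=2}.
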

\begin{proof}
    See Appendix A.
\end{proof}

\begin{figure}
    \centering
    \subfloat[]{
        \includegraphics[width=1.5in,height=1.4in]{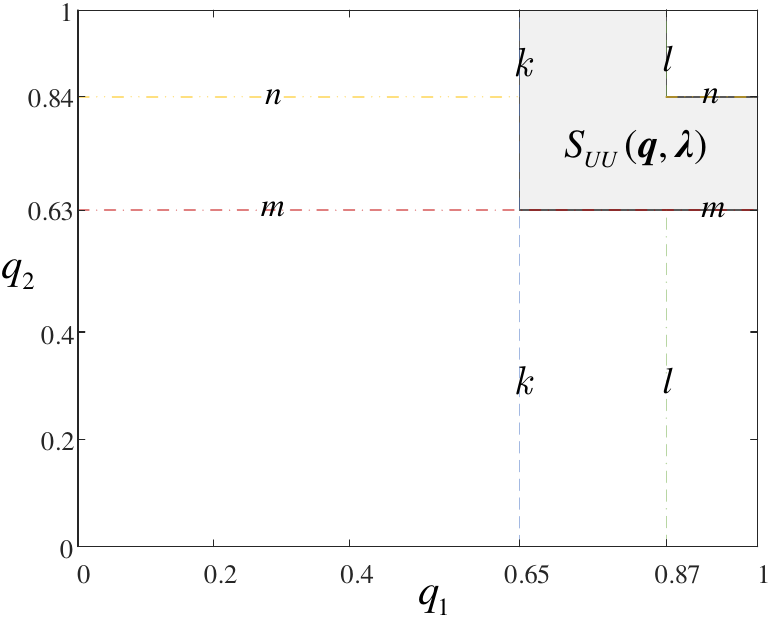}
      \label{subfig:SUU_AdHoc_PPPTR_Txnum2_case1_Pt-66_-69dBm_thres-5dB_-7dB_irate0.2_0.27}}
    \subfloat[]{
        \includegraphics[width=1.5in,height=1.4in]{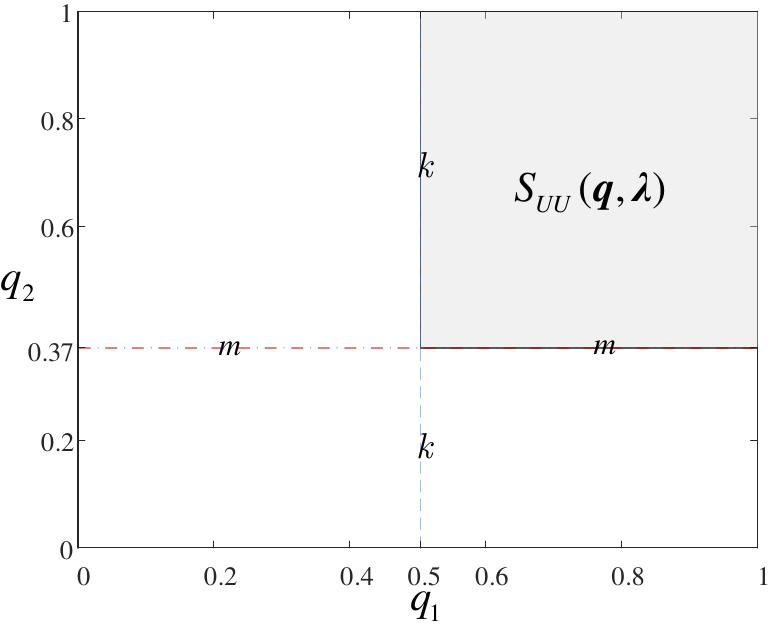}
      \label{subfig:SUU_AdHoc_PPPTR_Txnum2_case2_Pt-66_-69dBm_thres-5dB_-7dB_irate0.2_0.27}}
    \subfloat{
        \includegraphics[width=0.55in,height=1.4in]{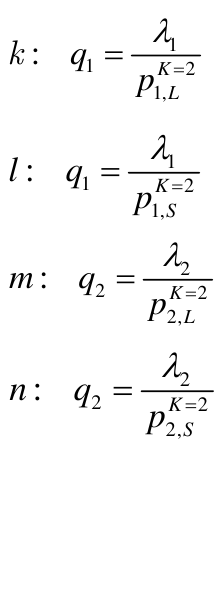}
      \label{subfig:SUU_2TR_label}}
    \vspace{-0.2cm}
    \caption{All-unsaturated region $S_{UU}(\bm{q},\bm{\lambda})$ of transmission probabilities $\bm{q}$ obtained from Theorem \ref{theorem: SUU_2TR} for the two T-R pairs with the topology given in (a) Fig. \ref{subfig:topology_AdHoc_PPPTR_Txnum2_case1} and (b) Fig. \ref{subfig:topology_AdHoc_PPPTR_Txnum2_case2}.  $\theta_1 = -5$ dB. $ \theta_2 = -7$ dB. $\lambda_1 = 0.2$. $\lambda_2 = 0.27$.}
    \label{fig:SUU_AdHoc_PPPTR_Txnum2_Pt-66_-69dBm_thres-5dB_-7dB_irate0.2_0.27}
\end{figure}

For the two T-R pairs with the topology given in Fig. \ref{subfig:topology_AdHoc_PPPTR_Txnum2_case1} and Fig. \ref{subfig:topology_AdHoc_PPPTR_Txnum2_case2}, Fig. \ref{subfig:SUU_AdHoc_PPPTR_Txnum2_case1_Pt-66_-69dBm_thres-5dB_-7dB_irate0.2_0.27} and Fig. \ref{subfig:SUU_AdHoc_PPPTR_Txnum2_case2_Pt-66_-69dBm_thres-5dB_-7dB_irate0.2_0.27} illustrate the corresponding all-unsaturated region $S_{UU}(\bm{q},\bm{\lambda})$ obtained from Theorem \ref{theorem: SUU_2TR}. By comparing Figs. \ref{subfig:SUU_AdHoc_PPPTR_Txnum2_case1_Pt-66_-69dBm_thres-5dB_-7dB_irate0.2_0.27} and \ref{subfig:SUU_AdHoc_PPPTR_Txnum2_case2_Pt-66_-69dBm_thres-5dB_-7dB_irate0.2_0.27} with Figs. \ref{subfig:ga_UU_AdHoc_PPPTR_Txnum2_case1_Pt-66dBm_-69dBm_thres-5dB_-7dB_irate0.2_0.27} and \ref{subfig:ga_UU_AdHoc_PPPTR_Txnum2_case2_Pt-66dBm_-69dBm_thres-5dB_-7dB_irate0.2_0.27}, respectively, it can be seen that Algorithm \ref{alg:findUU} and Theorem \ref{theorem: SUU_2TR} lead to the same result. By expressing $S_{UU}(\bm{q},\bm{\lambda})$ as an explicit function, however, the effects of key parameters, such as input rates $\bm{\lambda}$, can be clearly observed. For instance, by noting that $\tfrac{\lambda_i}{p_{i,L}^{K=2}} = \tfrac{1}{b_i}\left(1 - c_L - \tfrac{b_j \lambda_j}{a_j}\right)$ and $\tfrac{\lambda_i}{p_{i,S}^{K=2}} = \tfrac{1}{b_i}\left(1 - \tfrac{p_{j, S}^{K=2}}{a_j}\right) = \tfrac{1}{b_i}\left(1 - c_S - \tfrac{b_j \lambda_j}{a_j}\right)$, where $c_L$ decreases and $c_S$ increases as $\lambda_i$ increases, we can see from (\ref{SUU_2TR}) that the all-unsaturated region $S_{UU}(\bm{q},\bm{\lambda})$ shrinks as the input rate of any transmitter increases, which means it is more difficult to stabilize the two T-R pairs.

\subsubsection{$K$ Symmetric T-R Pairs}\label{subsubsection: symmetric_SUU}
With $\lambda_i = \lambda$, $q_i = q$, $\theta_i = \theta$ for all $i\in\mathcal{K}$, and $\rho_{i,j} = \rho$ for all $i,j\in\mathcal{K}$, the all-unsaturated steady-state point $p_L$ is given in (\ref{p_single_symmetric_unsaturated_solution}). Theorem \ref{theorem: SUU_symmetric} presents the all-unsaturated region $S_{U^K}(q,\lambda)$ for $K$ symmetric T-R pairs.

\begin{theorem}\label{theorem: SUU_symmetric}
    The all-unsaturated region $S_{U^K}(q,\lambda)$ for $K$ symmetric T-R pairs is 
    \vspace{-0.2cm}
    \begin{multline}\label{SUU_symmetric}
        S_{U^K}(q,\lambda) = \Bigg\{q: \tfrac{-(\theta + 1) \mathbb{W}_0\left( -\tfrac{K\theta \lambda}{\theta + 1} \exp\left(\tfrac{\theta}{\rho}\right) \right)}{K\theta} < q \\
         < \min\left( \tfrac{-(\theta + 1) \mathbb{W}_{-1}\left( -\tfrac{K\theta \lambda}{\theta + 1} \exp\left(\tfrac{\theta}{\rho}\right) \right)}{K\theta}, 1  \right) \Bigg\}.
         \vspace{-0.2cm}
    \end{multline}
\end{theorem}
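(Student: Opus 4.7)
My plan is to translate the all-unsaturated condition $\lambda<\mu$ from (\ref{define_UU}) into explicit bounds on $q$ in the symmetric setting, using the two candidate steady-state points obtained in Section \ref{subsection:special cases_p}. By (\ref{service_rate}) and full symmetry, $S_{U^K}(q,\lambda)=\{q:\,q\,p>\lambda\}$, where $p$ is the steady-state probability of successful transmission of HOL packets associated with the all-unsaturated network state. Achieving all-unsaturated operation therefore requires two ingredients: (i) the system settles at $p=p_L$ given in (\ref{p_single_symmetric_unsaturated_solution}), and (ii) the all-saturated state is ruled out as a competing self-consistent equilibrium.

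For the lower bound, I would substitute $p=p_L$ into $q\,p>\lambda$ to obtain $q>\lambda/p_L$; inserting the closed form of $p_L$ then yields directly the left inequality in (\ref{SUU_symmetric}).

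For the upper bound, I would exclude the all-saturated fixed point $p_A=\exp\!\bigl(-\tfrac{\theta}{\rho}-\tfrac{K\theta q}{\theta+1}\bigr)$ from (\ref{p_single_symmetric_saturated}). Self-consistency of the saturated state requires $q\,p_A\leq\lambda$, so it is excluded iff $q\,p_A>\lambda$. Rewriting the equality $q\,p_A=\lambda$ in the form $-\tfrac{K\theta q}{\theta+1}\exp\!\bigl(-\tfrac{K\theta q}{\theta+1}\bigr)=-\tfrac{K\theta\lambda}{\theta+1}\exp(\theta/\rho)$ and inverting through the two real branches of the Lambert W function shows that its positive roots are exactly $\lambda/p_L$ and $\lambda/p_S$, with $\lambda/p_L<\tfrac{\theta+1}{K\theta}<\lambda/p_S$ since $\mathbb{W}_0\geq-1\geq\mathbb{W}_{-1}$ on $[-e^{-1},0)$. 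Because $q\,p_A$ is unimodal in $q$ and maximized at $q=\tfrac{\theta+1}{K\theta}$, the strict inequality $q\,p_A>\lambda$ holds exactly on $(\lambda/p_L,\lambda/p_S)$. Intersecting with $q\leq 1$ and substituting the closed form of $p_S$ from (\ref{p_single_symmetric_unsaturated_solution}) then produces the right-hand bound in (\ref{SUU_symmetric}).

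The main obstacle I anticipate is rigorously justifying that these two conditions are \emph{sufficient} to drive the network to $p_L$. Under full symmetry the only candidate self-consistent network states are the all-unsaturated and the all-saturated ones, so the argument reduces to showing that eliminating the latter forces operation at the former, combined with the attracting nature of $p_L$ versus the repelling nature of $p_S$ already established in (\ref{p_single_symmetric_unsaturated_solution}). I plan to mirror the argument template of Theorem \ref{theorem: SUU_2TR} (Appendix A), restricted to symmetric network states, which should simplify the book-keeping considerably.
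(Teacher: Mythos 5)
Your proposal reaches the correct region and the computations are sound, but the route to the upper bound is genuinely different from the paper's. The paper's Appendix B is purely dynamical: it studies the iteration $p_{t+1}=G(p_t)$ with $G(p)=\exp\bigl(-\tfrac{\theta}{\rho}-\tfrac{K\theta}{\theta+1}\cdot\tfrac{\lambda}{p}\bigr)$, shows that convergence to $p_L$ holds if and only if $\min_t p_t > p_S$, and then proves (Lemma \ref{lemma: condition_convergence_symmetric}) that this is equivalent to $q<\lambda/p_S$ by observing that the busy probability never exceeds $q$, so $p_t\geq \exp\bigl(-\tfrac{\theta}{\rho}-\tfrac{K\theta}{\theta+1}q\bigr)=p_A$ for all $t$, and $p_A>p_S\Leftrightarrow q<\lambda/p_S$. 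You instead obtain the bound statically, by demanding that the all-saturated point not be a self-consistent equilibrium, i.e.\ $q\,p_A>\lambda$; your Lambert-W root analysis correctly identifies the solution set of this inequality as $(\lambda/p_L,\lambda/p_S)$, which coincides with the paper's condition once intersected with the service-rate requirement $q>\lambda/p_L$. The equivalence is no accident: the paper's worst-case iterate is exactly the saturated value $p_A$. What your static argument buys is a transparent necessity direction (if $q\,p_A\leq\lambda$ the saturated state is self-sustaining, so all-unsaturated operation cannot be guaranteed); what it does not deliver on its own is sufficiency, since ruling out the saturated equilibrium does not by itself force the dynamics to settle at $p_L$ rather than fall below the repelling point $p_S$. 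You correctly flag this gap and propose to close it by mirroring the convergence argument of Appendix A restricted to the symmetric case, which is precisely what the paper's Appendix B does; with that step carried out, your proof is complete.
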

\begin{proof}
    See Appendix B.
\end{proof}

Consider the symmetric single-cell network given in Fig. \ref{subfig:topology_singleCell_PPP_v5_K25}. With $K=25$, the corresponding all-unsaturated region $S_{U^{25}}(q,\lambda)$ can be obtained from Theorem \ref{theorem: SUU_symmetric} and illustrated in Fig. \ref{fig:SUU_symm_vs_irate_K40_25_rho10dB_thres0dB}. By comparing Fig. \ref{fig:SUU_symm_vs_irate_K40_25_rho10dB_thres0dB} with Fig. \ref{subfig:ga_SUU_symm_vs_irate_K25_v5_rho10dB_thres0dB}, it can be observed that Theorem \ref{theorem: SUU_symmetric} leads to the same result as the one obtained through Algorithm \ref{alg:findUU}. Yet with the explicit expression given in (\ref{SUU_symmetric}), it can be easily seen that the all-unsaturated region $S_{U^{K}}(q,\lambda)$ is closely dependent on the number of transmitters $K$ and input rate $\lambda$. Fig. \ref{fig:SUU_symm_vs_irate_K40_25_rho10dB_thres0dB} also corroborates that as 
$K$ or $\lambda$ increases, $S_{U^{K}}(q,\lambda)$ quickly shrinks, indicating that the transmission probability $q$ should be more carefully controlled for achieving stability as the contention becomes severer. Moreover, it can be observed from Fig. \ref{fig:SUU_symm_vs_irate_K40_25_rho10dB_thres0dB} and Fig. \ref{subfig:ga_SUU_symm_vs_irate_K25_v5_rho10dB_thres0dB} that if $\lambda$ is too large, e.g., $\lambda>0.027$ for $K = 25$, the network cannot be stabilized as $S_{U^{K}}(q,\lambda)$ becomes an empty set. The upper-bound of input rate is reduced to $0.017$ if the number of transmitters is increased to $K = 40$.

\begin{figure}[!t]
    \centering
    \includegraphics[width=2.3in,height=1.7in]{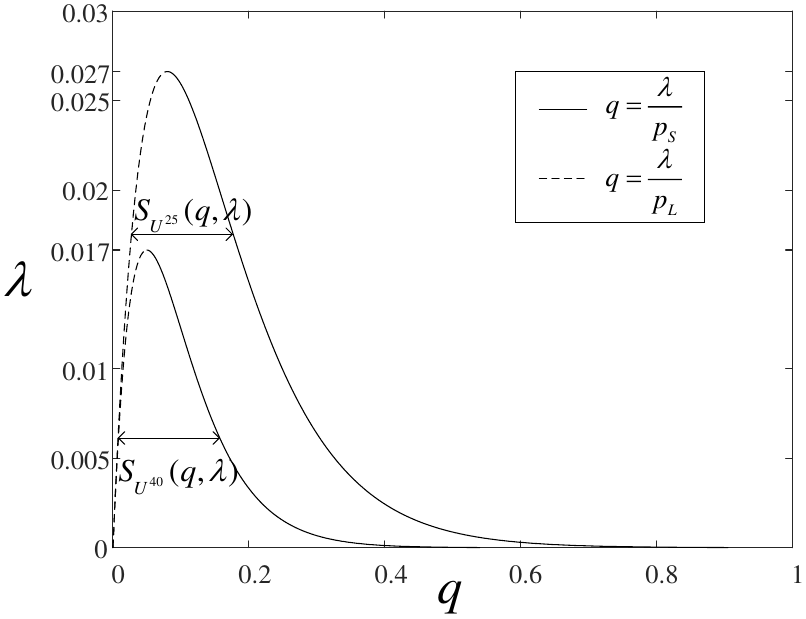}
    \vspace{-0.3cm}
    \caption{All-unsaturated region $S_{U^K}(q,\lambda)$ of transmission probability $q$ obtained from Theorem \ref{theorem: SUU_symmetric} for $K$ symmetric T-R pairs with the topology given in Fig. \ref{subfig:topology_singleCell_PPP_v5_K25}. $K \in \{25, 40\}$. $\rho = 10$ dB. $\theta = 0$ dB.}
    \label{fig:SUU_symm_vs_irate_K40_25_rho10dB_thres0dB}
\end{figure}

\vspace{-0.3cm}
\section{Maximum Input Rates for Achieving Stability}\label{section: stability region}

We can see from the above that there exists a certain upper-bound of input rates $\bm{\lambda}$, only below which the all-unsaturated region is not empty and the network can be stabilized by properly choosing the transmission probabilities. As we have mentioned in Section \ref{subsection: stability region_system model}, the stability region $S_Q(\bm{\lambda})$ dictates such an upper-bound of input rates. In this section, we will characterize the stability region $S_Q(\bm{\lambda})$ of input rates $\bm{\lambda}$.

\vspace{-0.27cm}
\subsection{Stability Region $S_Q(\bm{\lambda})$}\label{subsection:SQ_general}
\vspace{-0.03cm}

According to (\ref{define_SQ}), the all-unsaturated region $S_{U^K}(\bm{q},\bm{\lambda})$ of transmission probabilities $\bm{q}$ exists if and only if the input rates $\bm{\lambda}$ are within the stability region $S_Q(\bm{\lambda})$. To determine $S_Q(\bm{\lambda})$, we can formulate the following multi-objective problem to find the maximum input rates below which $S_{U^K}(\bm{q},\bm{\lambda})$ is not empty:
\begin{align}
    \mathcal{P}3:  \quad     \max \quad   & \bm{\lambda}, \label{P3_objfun} \\ 
        \text{s.t.} \quad   & \bm{0} < \bm{\lambda} \leq \bm{1},  \label{P3_linconstraint}\\
                          &   S_{U^K}(\bm{q},\bm{\lambda}) \neq \varnothing .  \label{P3_nonlinconstraint}
\end{align}
Note that for given $\bm{\lambda}$, the corresponding $S_{U^K}(\bm{q},\bm{\lambda})$ can be obtained from Algorithm \ref{alg:findUU}. Algorithm \ref{alg:findSQ} summarizes the main steps of calculating the stability region $S_Q(\bm{\lambda})$. 

For illustration, three examples are presented in Fig. \ref{fig:ga_stableR_exhaustive}.
Specifically, for the two T-R pairs given in Fig. \ref{subfig:topology_AdHoc_PPPTR_Txnum2_case1}, the stability region $S_{Q}(\bm{\lambda})$ is shown in Fig. \ref{subfig:ga_stableR_AdHoc_PPPTR_2Txnum_case1_Pt-66_-69dBm_thres-5dB_-7dB_k_less_1}. For the single-cell network given in Fig. \ref{subfig:topology_singleCell_PPP_v5_K25} with the symmetric setting, Fig. \ref{subfig:exhaustiveSearch_stableR_symm_vs_theta_K25_v5_rho10dB} illustrates how the stability region $S_Q(\lambda)$  varies with the SINR threshold $\theta$. Fig. \ref{subfig:ga_stableR_topology_2Cell_PPP_v3_thres-8dB_rho0dB} shows the stability region $S_{Q}(\bm{\lambda})$ of the two-cell network given in Fig. \ref{subfig:topology_2Cell_PPP_v3}.

\begin{algorithm}[t!]
    \caption{Calculation of the Stability Region $S_Q(\bm{\lambda})$}\label{alg:findSQ}
    \begin{algorithmic}[1]
      \Require SINR threshold $\theta_i$ for all $i\in\mathcal{L}$ and mean received SNR $\rho_{i,j}$ for all $i\in\mathcal{K}$, $j\in\mathcal{L}$.
      \renewcommand{\algorithmicrequire}{\textbf{Function:}} 
      \Require $\bm{g}_3(\bm{\lambda}) = -\bm{\lambda}$, $\bm{h}_4(\bm{\lambda}) = 1 - |S_{U^K}(\bm{q},\bm{\lambda})|$,\footnotemark[7] $\bm{h}_5 (\bm{\lambda}) = \bm{\lambda} - 1$, $\bm{h}_6(\bm{\lambda}) = -\bm{\lambda}$.
      \State $\mathbf{Q}_3 \gets \text{gamultiobj}(\bm{g}_3(\bm{\lambda}), \bm{h}_4(\bm{\lambda}), \bm{h}_5(\bm{\lambda}), \bm{h}_6(\bm{\lambda}))$.
      \State  $S_Q(\bm{\lambda}) \gets \{ \bm{\lambda} : \forall \bm{\lambda}^{\ast} \in \mathbf{Q}_{3}, \bm{0} < \bm{\lambda} \leq \bm{\lambda}^{\ast} \} $.
      \Ensure The stability region $S_Q(\bm{\lambda})$.
    \end{algorithmic}
\end{algorithm}

\footnotetext[7]{The all-unsaturated region $S_{U^K}(\bm{q},\bm{\lambda})$ for given input rates $\bm{\lambda}$ is calculated from Algorithm \ref{alg:findUU}. }

\begin{figure*}[!t]
    \captionsetup[subfigure]{justification=centering}
    \centering
    \subfloat[]{
        \includegraphics[width=1.8in,height=1.5in]{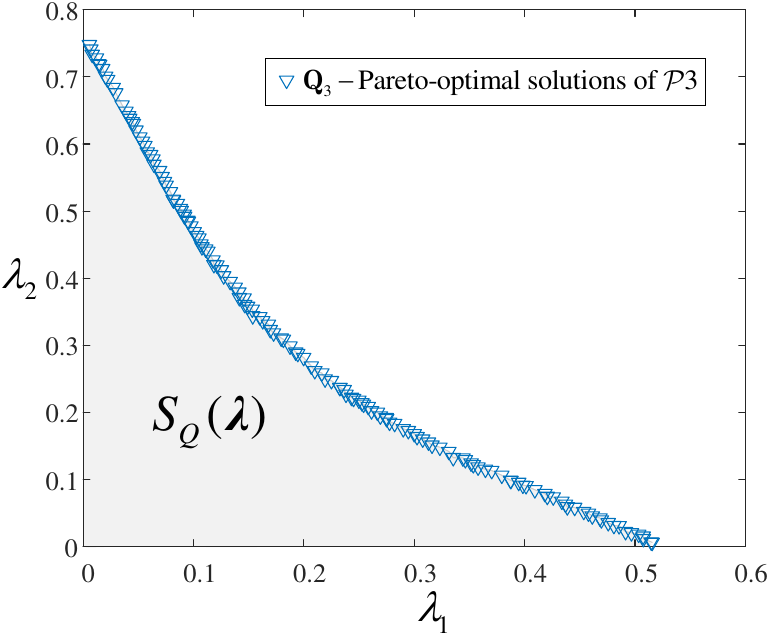}
      \label{subfig:ga_stableR_AdHoc_PPPTR_2Txnum_case1_Pt-66_-69dBm_thres-5dB_-7dB_k_less_1}}
      \hspace{0.8cm}
    \subfloat[]{
        \includegraphics[width=1.8in,height=1.5in]{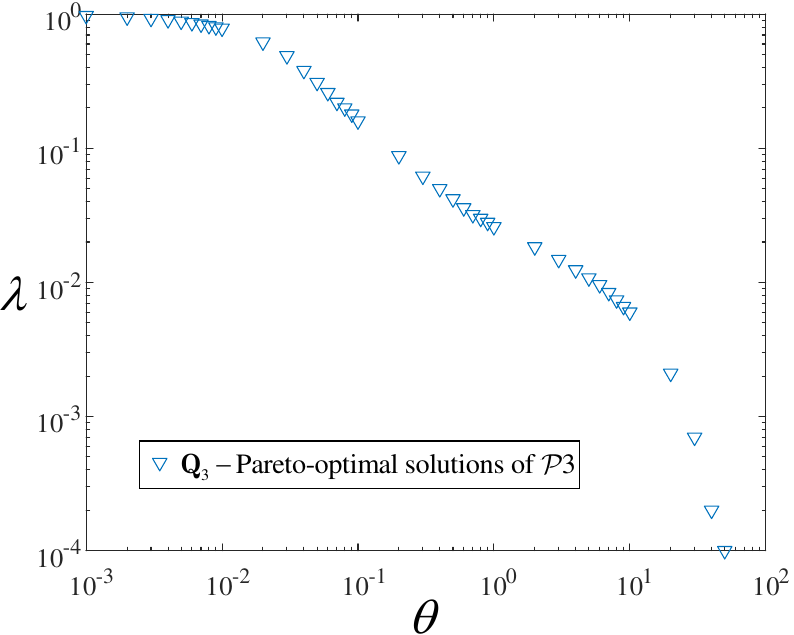}
      \label{subfig:exhaustiveSearch_stableR_symm_vs_theta_K25_v5_rho10dB}}
      \hspace{0.8cm}
    \subfloat[]{
      \includegraphics[width=1.8in,height=1.5in]{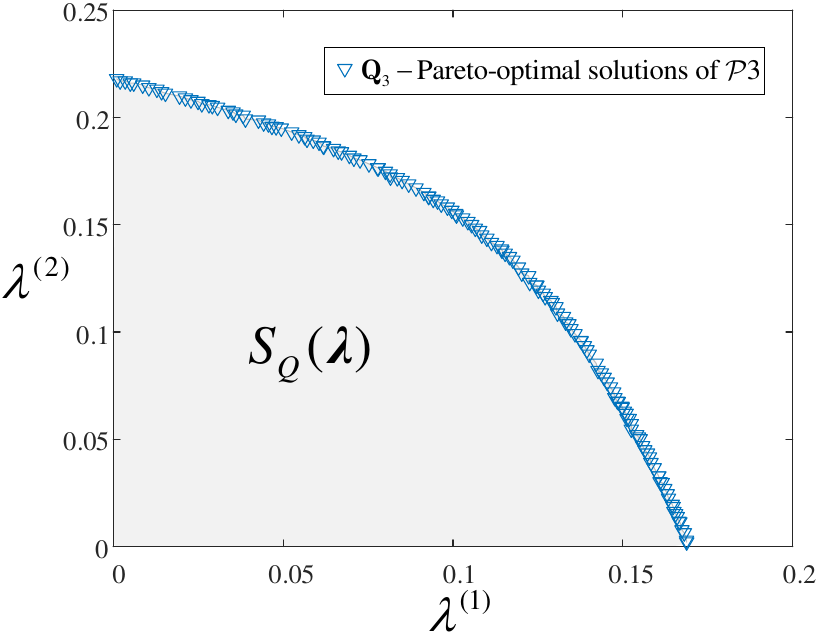}
    \label{subfig:ga_stableR_topology_2Cell_PPP_v3_thres-8dB_rho0dB}}
    \vspace{-0.1cm}
    \caption{Stability region $S_{Q}(\bm{\lambda})$ obtained through Algorithm \ref{alg:findSQ}. (a) Two T-R pairs with the topology given in Fig. \ref{subfig:topology_AdHoc_PPPTR_Txnum2_case1}.  $\theta_1 = -5$ dB. $ \theta_2 = -7$ dB. (b) A single-cell network with the topology given in Fig. \ref{subfig:topology_singleCell_PPP_v5_K25}. $\rho = 10$ dB. (c) A two-cell network with the topology given in Fig. \ref{subfig:topology_2Cell_PPP_v3}. $\theta_1 = \theta_2 = -8$ dB.}
    \label{fig:ga_stableR_exhaustive}
    \vspace{-0.2cm}
\end{figure*}

In general, the stability region $S_Q(\bm{\lambda})$ of input rates $\bm{\lambda}$ can be numerically calculated through Algorithm \ref{alg:findSQ}. 
In the following, two special cases, i.e., two T-R pairs and $K$ symmetric T-R pairs, will be considered, where explicit expressions of the stability region are obtained.

\vspace{-0.25cm}
\subsection{Special Cases} \label{subsection:special cases_stability region}
\vspace{-0.05cm}

As we have mentioned in Section \ref{subsection:SQ_general}, only when the input rates $\bm{\lambda}$ are within the stability region $S_{Q}(\bm{\lambda})$, the network can operate at the all-unsaturated steady-state point $\bm{p}_L$ by choosing transmission probabilities $\bm{q}$ from the all-unsaturated region $S_{U^K}(\bm{q},\bm{\lambda})$. 
In Sections \ref{subsection:special cases_p} and \ref{subsection:special cases_stability region}, the all-unsaturated steady-state point $\bm{p}_L$ and all-unsaturated region $S_{U^K}(\bm{q},\bm{\lambda})$ have been explicitly obtained, respectively, for the two special cases, i.e., two T-R pairs and $K$ symmetric T-R pairs. The following Theorem \ref{theorem: SQ_2TR} and Theorem \ref{theorem: SQ_symmetric} further present the stability region $S_{Q}(\bm{\lambda})$ in the two special cases.

\begin{figure*}
    \centering
    \subfloat[$\theta_1 = -5$ dB. $\theta_2 = -7$ dB.]{
      \includegraphics[width=1.8in,height=1.55in]{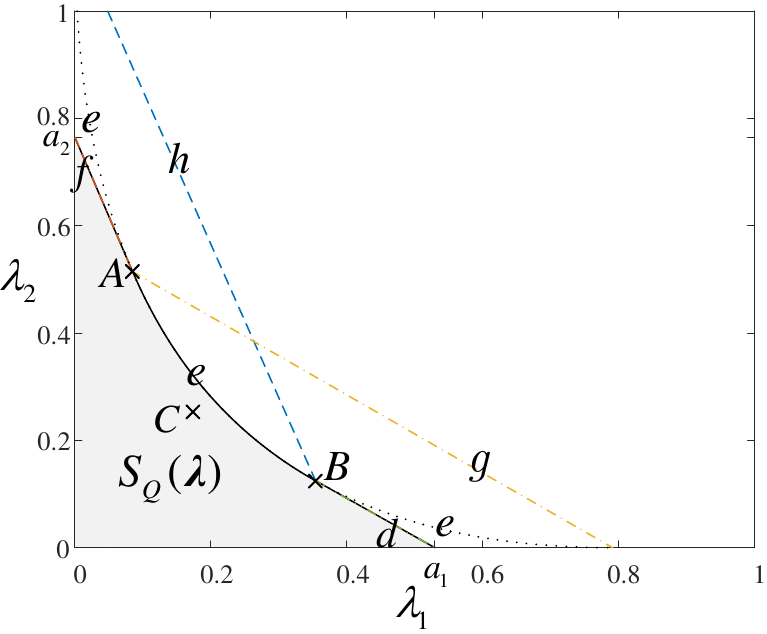}
    \label{subfig:anal_stableR_AdHoc_PPPTR_2Txnum_case1_Pt-66_-69dBm_thres-5dB_-7dB_k_less_1}}
    \subfloat[$\theta_1 = -11.2$ dB. $\theta_2 = -7$ dB. ]{
      \includegraphics[width=1.8in,height=1.55in]{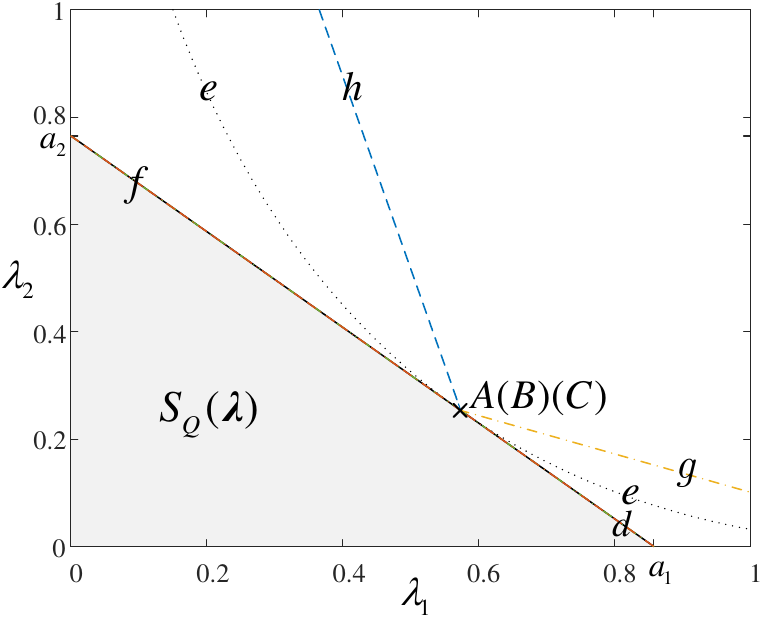}
    \label{subfig:anal_stableR_AdHoc_PPPTR_2Txnum_case1_Pt-66_-69dBm_thres-11.17dB_-7dB_k_eq_1}}
    \subfloat[$\theta_1 = -11.2$ dB. $\theta_2 = -10$ dB.]{
      \includegraphics[width=1.8in,height=1.55in]{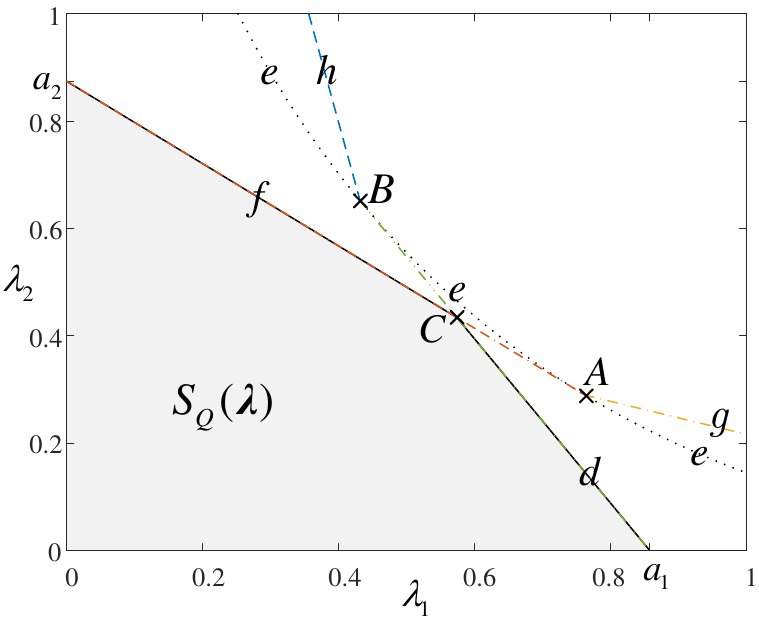}
    \label{subfig:anal_stableR_AdHoc_PPPTR_2Txnum_case1_Pt-66_-69dBm_thres-11.17dB_-10dB_k_great_1}}
    \subfloat{
        \includegraphics[width=1in,height=1.55in]{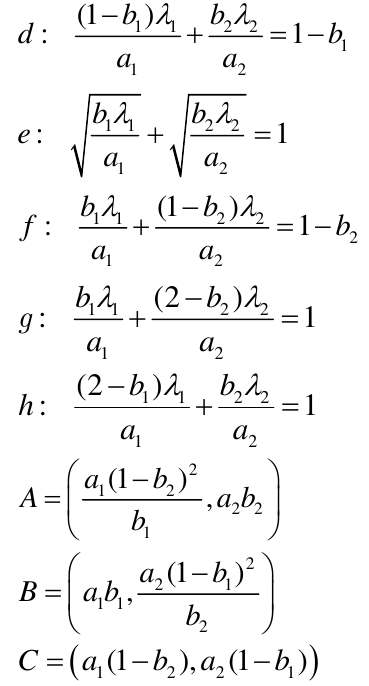}
      \label{subfig:anal_stableR_2TR_label}}
    \caption{Stability region $S_Q(\bm{\lambda})$ obtained from Theorem \ref{theorem: SQ_2TR} for the two T-R pairs with the topology given in Fig. \ref{subfig:topology_AdHoc_PPPTR_Txnum2_case1}.  (a) $0<\tfrac{\rho_{1,1}\rho_{2,2}}{\rho_{1,2}\rho_{2,1}\theta_1\theta_2} < 1$. (b)  $\tfrac{\rho_{1,1}\rho_{2,2}}{\rho_{1,2}\rho_{2,1}\theta_1\theta_2} = 1$. (c) $\tfrac{\rho_{1,1}\rho_{2,2}}{\rho_{1,2}\rho_{2,1}\theta_1\theta_2}>1$. }
    \label{fig:anal_stableR_2TR}
    \vspace{-0.5cm}
\end{figure*}

\subsubsection{Two T-R Pairs}\label{subsubsection: two TR_stability region}

With $K=L=2$, the following theorem presents the stability region $S_Q(\bm{\lambda})$ of input rates $\bm{\lambda}$.
\begin{theorem}\label{theorem: SQ_2TR}
    The stability region $S_Q(\bm{\lambda})$ for $K=L=2$ is 
    \begin{multline}\label{stability_region_2TR}
        S_Q(\bm{\lambda}) = \left\{\bm{\lambda}: \sqrt{\tfrac{b_1\lambda_1}{a_1}} + \sqrt{\tfrac{b_2\lambda_2}{a_2}} < 1\right\} \\ \bigcap \left\{\bm{\lambda}: \tfrac{(1-b_1)\lambda_1}{a_1} + \tfrac{b_2\lambda_2}{a_2}< 1 - b_1, \text{ or } \tfrac{(2-b_1)\lambda_1}{a_1} + \tfrac{b_2\lambda_2}{a_2}< 1\right\} \\
        \bigcap \left\{\bm{\lambda}: \tfrac{b_1\lambda_1}{a_1} + \tfrac{(1-b_2)\lambda_2}{a_2} < 1 - b_2, \text{ or } \tfrac{b_1\lambda_1}{a_1} + \tfrac{(2-b_2)\lambda_2}{a_2}< 1\right\},
    \end{multline}
    where $a_i$ and $b_i$ are given in (\ref{a_b}), $i\in\{1,2\}$. 
\end{theorem}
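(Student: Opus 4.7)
The plan is to characterize the input rates $\bm{\lambda}$ for which the all-unsaturated region $S_{UU}(\bm{q},\bm{\lambda})$ given by Theorem \ref{theorem: SUU_2TR} is non-empty, and show that this characterization coincides with (\ref{stability_region_2TR}). I will organize the argument into three steps corresponding to the three intersections on the right-hand side of (\ref{stability_region_2TR}).

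First, I would observe that the description of $S_{UU}(\bm{q},\bm{\lambda})$ in (\ref{SUU_2TR}) already presupposes the existence of the two distinct positive fixed points $\bm{p}_L^{K=2}$ and $\bm{p}_S^{K=2}$ of the unsaturated fixed-point equation. From the derivation of (\ref{c_L})--(\ref{c_S}) in Section \ref{subsubsection:2_TR_p}, these roots are real (and distinct) precisely when the discriminant is positive, which reduces to $\sqrt{b_1\lambda_1/a_1}+\sqrt{b_2\lambda_2/a_2}<1$. This yields the first intersection in (\ref{stability_region_2TR}).

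Next, I would reduce the non-emptiness of $S_{UU}(\bm{q},\bm{\lambda})$ to a pair of one-dimensional inequalities. Inspecting each branch of the union over $(i,j)\in\{(1,2),(2,1)\}$ in (\ref{SUU_2TR}), the interval $\bigl(\lambda_i/p_{i,L}^{K=2},\min(\lambda_i/p_{i,S}^{K=2},1)\bigr)$ is non-empty iff $\lambda_i/p_{i,L}^{K=2}<1$ (note $p_{i,L}^{K=2}>p_{i,S}^{K=2}$ makes the interval $(\lambda_i/p_{i,L},\lambda_i/p_{i,S})$ automatically non-empty), and $(\lambda_j/p_{j,L}^{K=2},1]$ is non-empty iff $\lambda_j/p_{j,L}^{K=2}<1$. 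Both branches of the union thus collapse to the same conjunction: $S_{UU}\neq\varnothing$ iff $\lambda_i/p_{i,L}^{K=2}<1$ for both $i\in\{1,2\}$.

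The main step is the algebraic translation of $\lambda_i/p_{i,L}^{K=2}<1$ into the explicit form appearing in (\ref{stability_region_2TR}). Using $p_{i,L}^{K=2}=a_ic_L+b_i\lambda_i$ from (\ref{p_L_S_2TR}) and (\ref{c_L}), the condition $\lambda_1/p_{1,L}^{K=2}<1$ rearranges to
\begin{equation*}
\tfrac{(2-b_1)\lambda_1}{a_1}+\tfrac{b_2\lambda_2}{a_2}-1 < \sqrt{\bigl(1-\tfrac{b_1\lambda_1}{a_1}-\tfrac{b_2\lambda_2}{a_2}\bigr)^{\!2}-4\tfrac{b_1\lambda_1}{a_1}\tfrac{b_2\lambda_2}{a_2}}.
\end{equation*}
If the left-hand side is negative, i.e.\ $\tfrac{(2-b_1)\lambda_1}{a_1}+\tfrac{b_2\lambda_2}{a_2}<1$, the inequality holds automatically; this gives the second disjunct. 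Otherwise I would square both sides and, using that the sum of the squared LHS expression and the term $B=1-\tfrac{b_1\lambda_1}{a_1}-\tfrac{b_2\lambda_2}{a_2}$ factors as $B+A=\tfrac{2(1-b_1)\lambda_1}{a_1}$, manipulate $A^2<B^2-4\tfrac{b_1\lambda_1}{a_1}\tfrac{b_2\lambda_2}{a_2}$ into $\tfrac{(1-b_1)\lambda_1}{a_1}+\tfrac{b_2\lambda_2}{a_2}<1-b_1$, which is the first disjunct. The case $\lambda_2/p_{2,L}^{K=2}<1$ is handled by the symmetric argument, producing the third intersection.

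The principal obstacle is the bookkeeping in this last step: one must track the sign of the linear expression before squaring (otherwise a spurious root is introduced), which is exactly what produces the \emph{or}-structure of the conjuncts in (\ref{stability_region_2TR}); and one must verify that the factorization via $A+B=2(1-b_1)\lambda_1/a_1>0$ correctly uses $b_1<1$, so that no sign flip occurs when dividing through. Once both inequalities are translated, the resulting conjunction of three conditions is precisely (\ref{stability_region_2TR}), completing the proof.
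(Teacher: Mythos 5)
Your proposal is correct and follows essentially the same route as the paper's proof in Appendix C: condition (\ref{condition_1}) for the existence of $\bm{p}_L^{K=2}$, plus the requirement $\lambda_i < p_{i,L}^{K=2}\cdot q_i \leq p_{i,L}^{K=2}$ for $i\in\{1,2\}$, whose algebraic resolution (which the paper states only as ``can be solved as'') you carry out explicitly and correctly, including the sign case-split before squaring that generates the ``or'' structure. The only cosmetic caveat is that the existence condition requires not just a positive discriminant but also positivity of the roots (i.e., $1-\tfrac{b_1\lambda_1}{a_1}-\tfrac{b_2\lambda_2}{a_2}>0$), though both are jointly equivalent to $\sqrt{b_1\lambda_1/a_1}+\sqrt{b_2\lambda_2/a_2}<1$ as claimed.
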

\begin{proof}
 See Appendix C.
\end{proof} 

Fig. \ref{fig:anal_stableR_2TR} illustrates the stability region $S_Q(\bm{\lambda})$ for the two T-R pairs given in Fig. \ref{subfig:topology_AdHoc_PPPTR_Txnum2_case1} with different values of SINR thresholds. As the boundary of $S_Q(\bm{\lambda})$ is determined by the following equations of $\lambda_1$ and $\lambda_2$:
\begin{align}
   & \tfrac{(1-b_1)\lambda_1}{a_1} + \tfrac{b_2\lambda_2}{a_2}  = 1 - b_1, \quad \sqrt{\tfrac{b_1\lambda_1}{a_1}} + \sqrt{\tfrac{b_2\lambda_2}{a_2}} = 1, \nonumber\\
   & \tfrac{b_1\lambda_1}{a_1} + \tfrac{(1-b_2)\lambda_2}{a_2} = 1 - b_2,
\end{align}
denoted as $d$, $e$, and $f$ in Fig. \ref{fig:anal_stableR_2TR}, respectively, $S_Q(\bm{\lambda})$ may have different shapes depending on how they intersect. Let $A= (x_A,y_A)$, $B= (x_B, y_B)$, and $C= (x_C, y_C)$ denote the intersection points of $e$ and $f$, $e$ and $d$, as well as $d$ and $f$, respectively, which can be obtained as 
\begin{align}
    A & = \left(\tfrac{a_1(1-b_2)^2}{b_1}, a_2 b_2 \right), \quad B  = \left(a_1b_1, \tfrac{a_2(1-b_1)^2}{b_2}\right), \nonumber \\
     C & = \left(a_1(1-b_2), a_2(1-b_1)\right).   
\end{align} 
Let us consider the following three cases:
\begin{enumerate}
    \item If $b_1 + b_2 > 1$, i.e., $0<\tfrac{\rho_{1,1}\rho_{2,2}}{\rho_{1,2}\rho_{2,1}\theta_1\theta_2}<1$, we have $x_A<x_C<x_B$ and $y_B<y_C<y_A$. In this case, $S_Q(\bm{\lambda})$ is illustrated in Fig. \ref{subfig:anal_stableR_AdHoc_PPPTR_2Txnum_case1_Pt-66_-69dBm_thres-5dB_-7dB_k_less_1}.
    \item If $b_1 + b_2 = 1$, i.e., $\tfrac{\rho_{1,1}\rho_{2,2}}{\rho_{1,2}\rho_{2,1}\theta_1\theta_2}=1$, we have $x_A = x_B = x_C$ and $y_A = y_B = y_C$. In this case, $S_Q(\bm{\lambda})$ is illustrated in Fig. \ref{subfig:anal_stableR_AdHoc_PPPTR_2Txnum_case1_Pt-66_-69dBm_thres-11.17dB_-7dB_k_eq_1}.
    \item If $0<b_1 + b_2<1$, i.e., $\tfrac{\rho_{1,1}\rho_{2,2}}{\rho_{1,2}\rho_{2,1}\theta_1\theta_2} > 1$, we have $x_B<x_C<x_A$ and $y_A<y_C<y_B$. In this case, $S_Q(\bm{\lambda})$ is illustrated in Fig. \ref{subfig:anal_stableR_AdHoc_PPPTR_2Txnum_case1_Pt-66_-69dBm_thres-11.17dB_-10dB_k_great_1}.
\end{enumerate}
By comparing Fig. \ref{subfig:anal_stableR_AdHoc_PPPTR_2Txnum_case1_Pt-66_-69dBm_thres-5dB_-7dB_k_less_1} with Fig. \ref{subfig:ga_stableR_AdHoc_PPPTR_2Txnum_case1_Pt-66_-69dBm_thres-5dB_-7dB_k_less_1}, we can see that Theorem \ref{theorem: SQ_2TR} and Algorithm \ref{alg:findSQ} lead to the same result. It can be further observed from Fig. \ref{fig:anal_stableR_2TR} that the stability region of input rates is significantly enlarged when the SINR threshold of any receiver reduces, in which case the receiver has a stronger capability of decoding packets from concurrent transmissions and is thus less affected by the interference from another transmitter.

\subsubsection{$K$ Symmetric T-R Pairs}\label{subsubsection: symmetric_stability region}
With $\lambda_i = \lambda$, $q_i = q$, $\theta_i = \theta$ for all $i\in\mathcal{K}$, and $\rho_{i,j} = \rho$ for all $i,j \in \mathcal{K}$, the following theorem presents the stability region $S_Q(\lambda)$ for $K$ symmetric T-R pairs. 
\begin{theorem}\label{theorem: SQ_symmetric}
    The stability region $S_Q(\lambda)$ for $K$ symmetric T-R pairs is 
    \begin{equation}\label{stability_region_symmetric}
        S_Q(\lambda) = \left\{ \lambda: 0 < \lambda < \lambda^u  \right\},
    \end{equation}
    where 
    \begin{equation}\label{lambda_u_symmetric}
        \lambda^u = \begin{cases}
            \tfrac{\theta + 1}{K\theta} \exp\left(-1-\tfrac{\theta}{\rho}\right)  & \text{if } \theta \geq \tfrac{1}{K - 1}, \\
            \exp\left( -\tfrac{K\theta}{\theta + 1}  -\tfrac{\theta}{\rho} \right)  & \text{otherwise}.
        \end{cases}
    \end{equation}
\end{theorem}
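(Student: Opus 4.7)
The plan is to leverage Theorem~\ref{theorem: SUU_symmetric} and characterize those $\lambda > 0$ for which $S_{U^K}(q,\lambda)$ is non-empty, since by (\ref{define_SQ}) the stability region is precisely the set of such input rates. To this end, I set $z = -\tfrac{K\theta\lambda}{\theta+1}\exp(\theta/\rho)$, $u(\lambda) = \tfrac{-(\theta+1)\mathbb{W}_0(z)}{K\theta}$, and $v(\lambda) = \tfrac{-(\theta+1)\mathbb{W}_{-1}(z)}{K\theta}$, so that Theorem~\ref{theorem: SUU_symmetric} reads $S_{U^K}(q,\lambda) = \{q : u(\lambda) < q < \min(v(\lambda), 1)\}$ whenever both branches are real, i.e., when $z \in [-e^{-1}, 0)$, and is empty otherwise.

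The non-emptiness of the open interval requires $z > -e^{-1}$, which is exactly the strict form of (\ref{symm_condition_1}), together with $u(\lambda) < \min(v(\lambda),1)$. On $(-e^{-1}, 0)$ the two branches are distinct with $\mathbb{W}_{-1}(z) < \mathbb{W}_0(z) < 0$, so $u(\lambda) < v(\lambda)$ holds automatically and the remaining condition collapses to $u(\lambda) < 1$, i.e., $-\mathbb{W}_0(z) < \tfrac{K\theta}{\theta+1}$. The natural case split is then on whether $\tfrac{K\theta}{\theta+1} \geq 1$ or $< 1$, which is equivalent to $\theta \geq \tfrac{1}{K-1}$ or $\theta < \tfrac{1}{K-1}$.

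In the first regime, since $-\mathbb{W}_0(z) \in [0,1]$ for $z \in [-e^{-1}, 0)$, the inequality $-\mathbb{W}_0(z) < \tfrac{K\theta}{\theta+1}$ is implied by the strict form of (\ref{symm_condition_1}) alone, yielding $\lambda^u = \tfrac{\theta+1}{K\theta}\exp(-1-\theta/\rho)$. In the second regime, the inequality is strictly binding: invoking strict monotonicity of $\mathbb{W}_0$ on $[-e^{-1}, 0)$ together with the defining identity $\mathbb{W}_0(x e^x) = x$ for $x \in [-1, 0]$, I would invert $-\mathbb{W}_0(z) = \tfrac{K\theta}{\theta+1}$ to get $z = -\tfrac{K\theta}{\theta+1}\exp(-\tfrac{K\theta}{\theta+1})$. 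Substituting the definition of $z$ and simplifying gives $\lambda < \exp(-\tfrac{K\theta}{\theta+1} - \tfrac{\theta}{\rho})$. Because $x\mapsto xe^{-x}$ attains its maximum $e^{-1}$ at $x=1$, for $\tfrac{K\theta}{\theta+1} < 1$ this bound is strictly smaller than $\tfrac{\theta+1}{K\theta}\exp(-1-\theta/\rho)$, so (\ref{symm_condition_1}) is redundant in this regime.

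Combining the two cases yields expression (\ref{lambda_u_symmetric}) for $\lambda^u$ and $S_Q(\lambda) = (0, \lambda^u)$. The main obstacle I anticipate is the bookkeeping of which constraint, (\ref{symm_condition_1}) or $u < 1$, is binding in each regime and verifying that the $u < v$ half of the $\min$-condition is indeed free; once the regime is identified the remainder is a single algebraic inversion of the Lambert-W relation, with the threshold $\theta = \tfrac{1}{K-1}$ emerging naturally as the point at which $\tfrac{K\theta}{\theta+1}$ crosses $1$.
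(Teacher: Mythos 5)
Your proposal is correct and follows essentially the same route as the paper's proof: the paper also reduces the problem to (i) the existence of the all-unsaturated steady-state point $p_L$, i.e., condition (\ref{symm_condition_1}), and (ii) the requirement $\lambda < p_L\cdot q \leq p_L$, which is exactly your condition $u(\lambda)<1$, and then combines the two with the same case split at $\theta = \tfrac{1}{K-1}$. Your write-up is somewhat more explicit than the paper's (which compresses the Lambert-W inversion into ``which can be solved as''), in particular in verifying that the $u<v$ half of the $\min$-condition is automatic and that (\ref{symm_condition_1}) is redundant when $\theta < \tfrac{1}{K-1}$, but these are elaborations of the same argument rather than a different one.
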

\begin{proof}
    See Appendix D.
\end{proof}

\begin{figure}[!t]
    \centering
    \includegraphics[width=2.3in,height=1.7in]{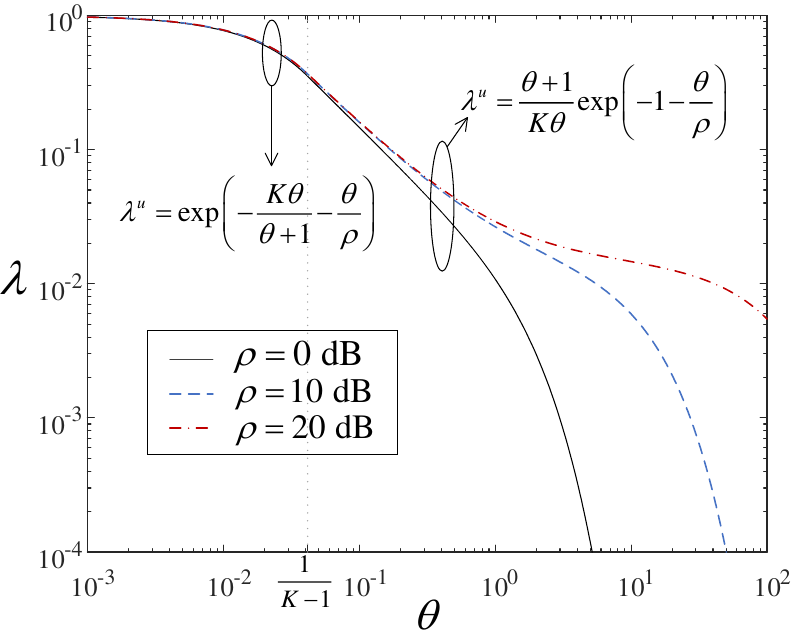}
    \vspace{-0.3cm}
    \caption{Stability region $S_{Q}(\lambda)$ obtained from Theorem \ref{theorem: SQ_symmetric} versus the SINR threshold $\theta$ for $K$ symmetric T-R pairs with the topology given in Fig. \ref{subfig:topology_singleCell_PPP_v5_K25}. $K = 25$.  }
    \label{fig:anal_stableR_symm_vs_theta_K25_v5_rho1_10_100dB}
    \vspace{-0.1cm}
\end{figure}

Note that the upper-bound $\lambda^u$ given in (\ref{lambda_u_symmetric}) is also the maximum throughput of a symmetric single-cell network in the saturated condition as given in Eq. (16) of \cite{MaximumsumrateCapture_Li}.
For illustration, we consider the symmetric single-cell network given in Fig. \ref{subfig:topology_singleCell_PPP_v5_K25}. Fig. \ref{fig:anal_stableR_symm_vs_theta_K25_v5_rho1_10_100dB} illustrates how the stability region $S_{Q}(\lambda)$ obtained from Theorem \ref{theorem: SQ_symmetric} changes with the SINR threshold $\theta$ for different values of mean received SNR $\rho$. By comparing Fig. \ref{fig:anal_stableR_symm_vs_theta_K25_v5_rho1_10_100dB} with Fig. \ref{subfig:exhaustiveSearch_stableR_symm_vs_theta_K25_v5_rho10dB}, it can be seen that Theorem \ref{theorem: SQ_symmetric} leads to the same result as the one obtained through Algorithm \ref{alg:findSQ}. Yet the effect of mean received SNR $\rho$ can be further observed from (\ref{lambda_u_symmetric}). As illustrated in Fig. \ref{fig:anal_stableR_symm_vs_theta_K25_v5_rho1_10_100dB}, the stability region $S_{Q}(\lambda)$ becomes smaller as $\rho$ decreases, indicating that the traffic input rate should be reduced accordingly for achieving stability for a smaller mean received SNR.   

\section{Simulation Results}\label{section: simulation}

In this section, simulation results will be presented to verify the preceding analysis. The simulation setting is the same as the system model described in Section \ref{section: model}, and each simulation is carried out for $10^8$ time slots. In the simulations, the steady-state probability of successful transmission of HOL packets of each transmitter is obtained by calculating the ratio of the number of successful transmissions to the total number of transmissions of HOL packets. The throughput of each transmitter is obtained by calculating the ratio of the number of successful transmissions to the number of time slots. 

\vspace{-0.3cm}
\subsection{Two T-R Pairs}\label{subsection: simu 2TR}

With $K = L = 2$, the network steady-state point $\bm{p}$ for given network state and the all-unsaturated region $S_{UU}(\bm{q}, \bm{\lambda})$ have been characterized in Section \ref{subsubsection:2_TR_p} and Theorem \ref{theorem: SUU_2TR}, respectively. 
To verify the analysis, Fig. \ref{subfig:simu_UU_p_PPPTR_Txnum2_case1_q2_0.7_Pt-66_-69dBm_thres-5dB_-7dB_irate0.2_0.27} and Fig. \ref{subfig:simu_UU_throughput_PPPTR_Txnum2_case1_q2_0.7_Pt-66_-69dBm_thres-5dB_-7dB_irate0.2_0.27} present the network steady-state point $\bm{p}$ and throughput $\bm{\lambda}_{out}$ for the two T-R pairs given in Fig. \ref{subfig:topology_AdHoc_PPPTR_Txnum2_case1}, respectively, with $q_1$ varying but $q_2$ fixed to $0.7$. According to the all-unsaturated region characterized in Fig. \ref{subfig:ga_UU_AdHoc_PPPTR_Txnum2_case1_Pt-66dBm_-69dBm_thres-5dB_-7dB_irate0.2_0.27} and Fig. \ref{subfig:SUU_AdHoc_PPPTR_Txnum2_case1_Pt-66_-69dBm_thres-5dB_-7dB_irate0.2_0.27}, with $q_2 = 0.7$ and $0.65< q_1 \leq 1$, the network would operate at the all-unsaturated steady-state point $\bm{p}_L^{K=2}$, where the throughput of each T-R pair $\lambda_{out, i} = \lambda_i$ for $i \in\{ 1, 2\}$. Otherwise, with $q_1 \leq 0.65$, Transmitter 2 would become saturated with its throughput dropping below the input rate, that is, the network is partially-saturated with $\bm{p} = \bm{p}_P^{K=2}$ and $\lambda_{out, 2} = \mu_2 < \lambda_2$.
The simulation results presented in Figs. \ref{subfig:simu_UU_p_PPPTR_Txnum2_case1_q2_0.7_Pt-66_-69dBm_thres-5dB_-7dB_irate0.2_0.27} and \ref{subfig:simu_UU_throughput_PPPTR_Txnum2_case1_q2_0.7_Pt-66_-69dBm_thres-5dB_-7dB_irate0.2_0.27} well agree with the analysis.

\begin{figure*}[!t]
    \centering
    \subfloat[]{
      \includegraphics[width=1.85in,height=1.38in]{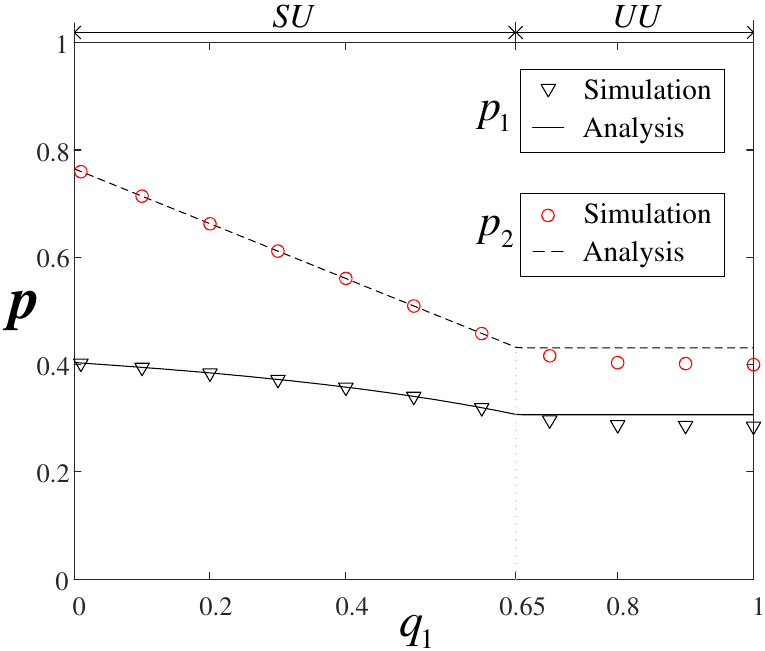}
    \label{subfig:simu_UU_p_PPPTR_Txnum2_case1_q2_0.7_Pt-66_-69dBm_thres-5dB_-7dB_irate0.2_0.27}}
    \hspace{0.3cm}
    \subfloat[]{
      \includegraphics[width=1.85in,height=1.38in]{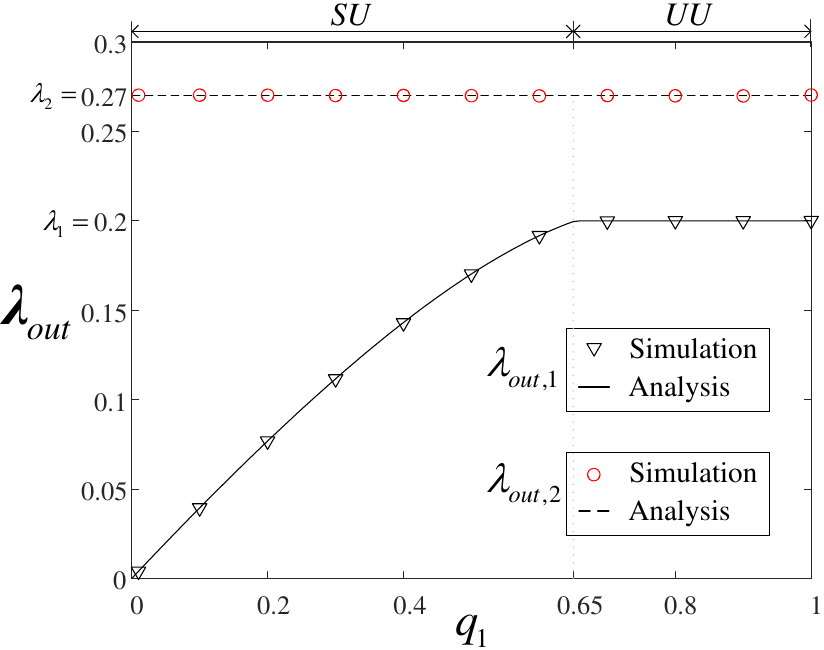}
    \label{subfig:simu_UU_throughput_PPPTR_Txnum2_case1_q2_0.7_Pt-66_-69dBm_thres-5dB_-7dB_irate0.2_0.27}}
    \hspace{0.3cm}
    \subfloat[]{
      \includegraphics[width=2.4in,height=1.38in]{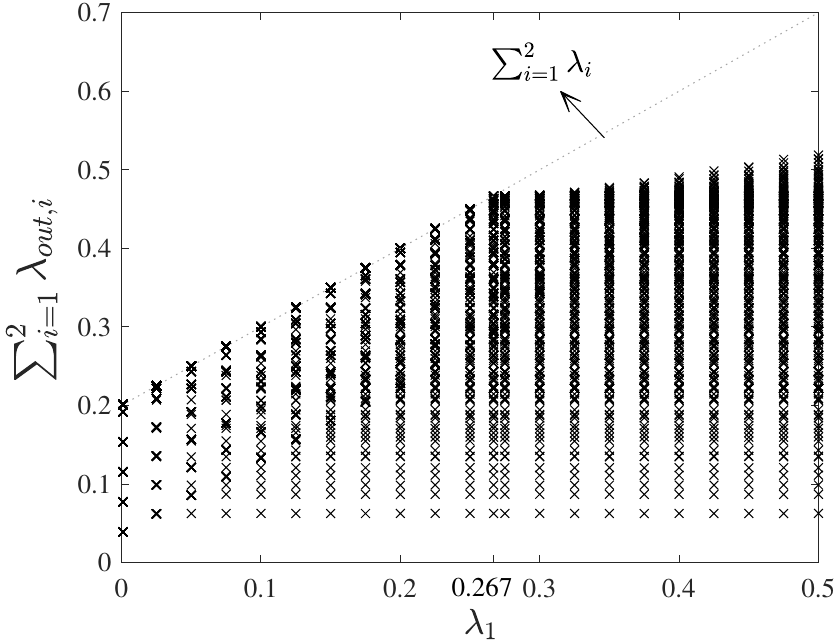}
    \label{subfig:simu_netthroughput_vs_irate1_q1q2_0.05to1_irate2_0.2_stableR_2TR_case1_thres5dB_7dB_Pt-66dBm_-69dBm}}
    \vspace{-0.2cm}
    \caption{(a) Network steady-state point $\bm{p}$ and (b) throughput $\bm{\lambda}_{out}$ versus transmission probability $q_1$ for two T-R pairs. $q_2 = 0.7$. $\lambda_1 = 0.2$. $\lambda_2 = 0.27$. (c) Simulated total throughput $\sum_{i = 1}^2 \lambda_{out, i}$ versus input rate $\lambda_1$. $q_1, q_2 \in \{0.05,0.1, \ldots, 0.95, 1\}$. $\lambda_2 = 0.2$. $\theta_1 = -5$ dB. $ \theta_2 = -7$ dB. }
    \label{fig:simu_UU_PPPTR_Txnum2_case1_case2_Pt-66_-69dBm_thres-5dB_-7dB_irate0.2_0.27}
    \vspace{-0.1cm}
\end{figure*}

To ensure that the all-unsaturated region is not empty, the input rates $\bm{\lambda}$ should be within the stability region $S_Q (\bm{\lambda})$, which has been characterized in Theorem \ref{theorem: SQ_2TR} for the case of two T-R pairs. To verify it, Fig. \ref{subfig:simu_netthroughput_vs_irate1_q1q2_0.05to1_irate2_0.2_stableR_2TR_case1_thres5dB_7dB_Pt-66dBm_-69dBm} presents the simulated total throughput of the two T-R pairs given in Fig. \ref{subfig:topology_AdHoc_PPPTR_Txnum2_case1}, where $\lambda_2 = 0.2$ and $\lambda_1$ varies from $0$ to $0.5$, with transmission probabilities 
$q_1$ and $q_2$ varying from $0.05$ to $1$ with an increment of $0.05$. According to the stability region $S_Q(\bm{\lambda})$ characterized in Fig. \ref{subfig:ga_stableR_AdHoc_PPPTR_2Txnum_case1_Pt-66_-69dBm_thres-5dB_-7dB_k_less_1} and Fig. \ref{subfig:anal_stableR_AdHoc_PPPTR_2Txnum_case1_Pt-66_-69dBm_thres-5dB_-7dB_k_less_1}, with $\lambda_2 = 0.2$, the network can be stabilized by properly choosing transmission probabilities $\bm{q}$ when $\lambda_1 < 0.267$. It can be observed from Fig. \ref{subfig:simu_netthroughput_vs_irate1_q1q2_0.05to1_irate2_0.2_stableR_2TR_case1_thres5dB_7dB_Pt-66dBm_-69dBm} that with $\lambda_1$ lower than $0.267$,  there exist values of $\bm{q}$ with which the total throughput of the two T-R pairs $\sum_{i=1}^2 \lambda_{out, i}$ is equal to their total input rate $\sum_{i=1}^2 \lambda_i$. In contrast, if $\lambda_1$ exceeds $0.267$, the network cannot be stabilized as the total throughput is always below the total input rate. 


\vspace{-0.17cm}
\subsection{$K$ Symmetric T-R Pairs}\label{subsection: simulation_SUU_K_TR}
\vspace{-0.03cm}

The network steady-state point $p$ and the all-unsaturated region $S_{U^K} (q, \lambda)$ for $K$ symmetric T-R pairs have been characterized in Section \ref{subsubsection: symm_K_TR_p} and Theorem \ref{theorem: SUU_symmetric}, respectively. To verify the analysis, Figs. \ref{subfig:simu_UU_p_q_symm_25Tx_v5_thres0dB_rho10dB_irate0.02} and \ref{subfig:simu_UU_netthroughput_q_symm_25Tx_v5_thres0dB_rho10dB_irate0.02}  show how the network steady-state point $p$ and total throughput $\sum_{i = 1}^K \lambda_{out, i}$ vary with the transmission probability $q$, respectively, for the symmetric single-cell network given in Fig. \ref{subfig:topology_singleCell_PPP_v5_K25}. According to the all-unsaturated region characterized in Fig. \ref{subfig:ga_SUU_symm_vs_irate_K25_v5_rho10dB_thres0dB} and Fig. \ref{fig:SUU_symm_vs_irate_K40_25_rho10dB_thres0dB}, for $\lambda = 0.02$ and $K = 25$, the all-unsaturated region can be obtained as $S_{U^K}(q, \lambda) = \{q: 0.034<q<0.157\}$. When $q$ is chosen from $S_{U^K}(q, \lambda)$, the network operates at the all-unsaturated steady-state point $p_L=0.608$ with the total throughput equal to the total input rate $K\lambda = 0.5$. Otherwise, with $q \leq 0.034$ or $q \geq 0.157$, the network becomes all-saturated, in which case the network steady-state point $p = p_A$ and the total throughput $\sum_{i = 1}^K \lambda_{out, i} = \sum_{i = 1}^K \mu_i < K\lambda$.

\begin{figure*}[!t]
    \centering
    \subfloat[]{
      \includegraphics[width=1.85in,height=1.4in]{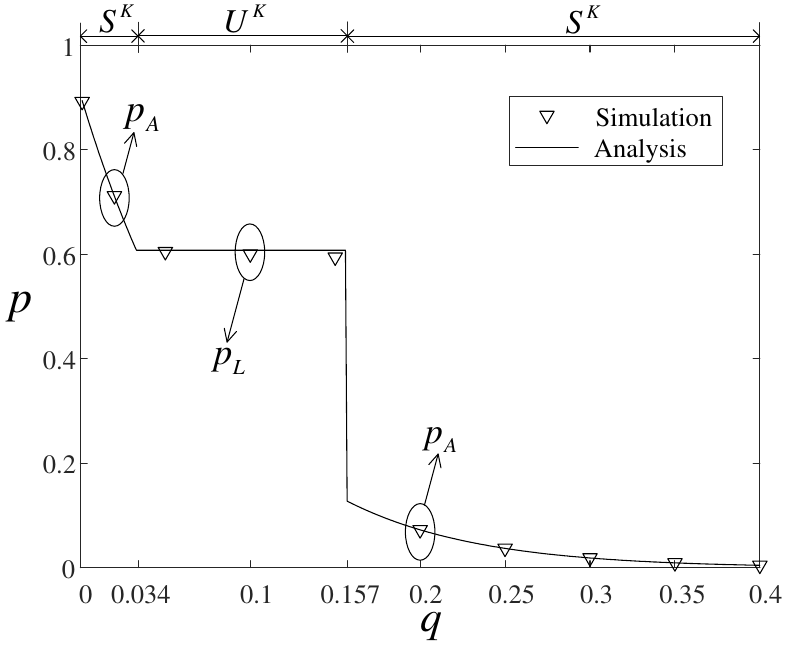}
    \label{subfig:simu_UU_p_q_symm_25Tx_v5_thres0dB_rho10dB_irate0.02}}
    \hspace{0.3cm}
    \subfloat[]{
      \includegraphics[width=1.85in,height=1.4in]{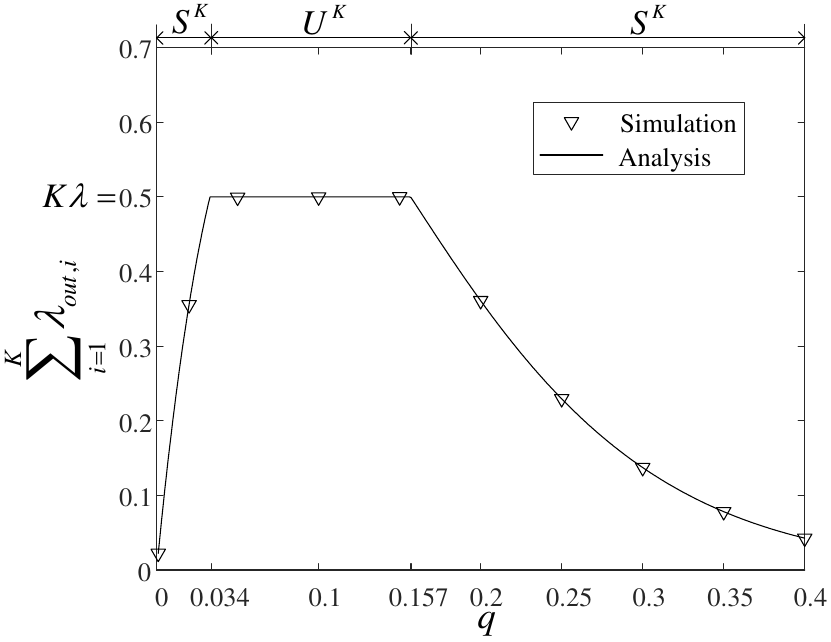}
    \label{subfig:simu_UU_netthroughput_q_symm_25Tx_v5_thres0dB_rho10dB_irate0.02}}
    \hspace{0.3cm}
    \subfloat[]{
        \includegraphics[width=2.4in,height=1.4in]{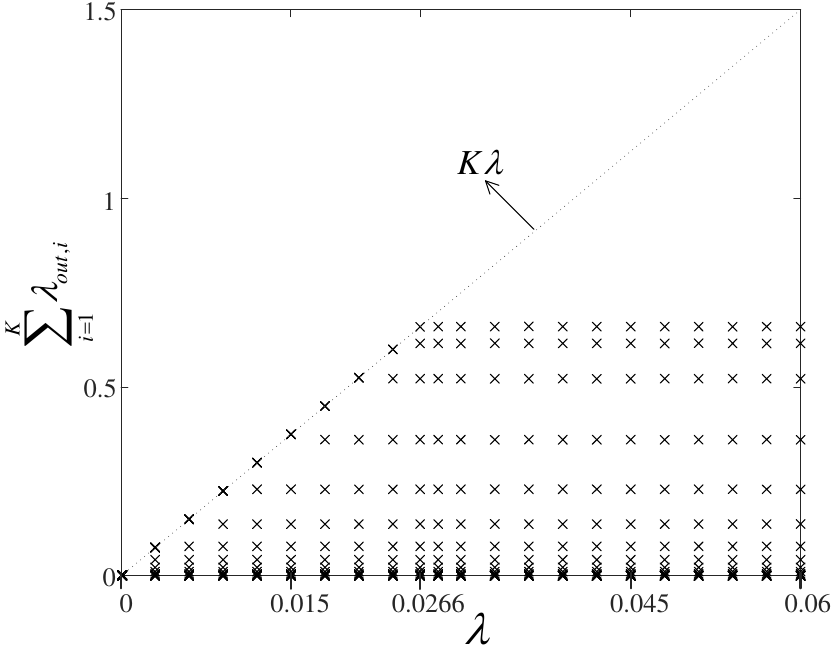}
      \label{subfig:simu_netthroughput_vs_rate0to0.06_changeq_stableR_symm_25Tx_v5_thres0dB_rho10dB}}
      \vspace{-0.2cm}
    \caption{(a) Network steady-state point $p$ and (b) total throughput $\sum_{i = 1}^K \lambda_{out, i}$ versus transmission probability $q$ for $K$ symmetric T-R pairs. $\lambda = 0.02$. (c) Simulated total throughput $\sum_{i = 1}^K \lambda_{out, i}$  versus input rate $\lambda$. $q \in \{0.05,0.1, \ldots, 0.95, 1\}$.  $K = 25$. $\rho = 10$ dB. $\theta= 0$ dB. }
    \label{fig:simu_UU_stableR_symm_25Tx_v5_thres0dB_rho10dB}
    \vspace{-0.4cm}
\end{figure*}

The stability region for the $K$ symmetric T-R pairs has been derived in Theorem \ref{theorem: SQ_symmetric}. To verify it, Fig. \ref{subfig:simu_netthroughput_vs_rate0to0.06_changeq_stableR_symm_25Tx_v5_thres0dB_rho10dB} shows how the simulated total throughput of the network changes with input rate $\lambda$ for transmission probability $q \in \{0.05,0.1, \ldots, 0.95, 1\}$. According to the stability region characterized in Fig. \ref{subfig:exhaustiveSearch_stableR_symm_vs_theta_K25_v5_rho10dB} and Fig. \ref{fig:anal_stableR_symm_vs_theta_K25_v5_rho1_10_100dB}, the network can be stabilized for $\lambda \in S_Q (\lambda) = \{\lambda: 0 < \lambda < \lambda^u = 0.0266 \}$. It can be observed from Fig. \ref{subfig:simu_netthroughput_vs_rate0to0.06_changeq_stableR_symm_25Tx_v5_thres0dB_rho10dB}  that with $\lambda < 0.0266$, the total network throughput is equal to the total input rate.
When $\lambda$ exceeds $0.0266$, the network throughput drops below the total input rate $K\lambda$. 

\vspace{-0.25cm}
\subsection{A Two-Cell Network}\label{subsection: simu two cell}
\vspace{-0.05cm}

\begin{figure}[!t]
    \centering
    \subfloat[]{
        \includegraphics[width=1.9in,height=1.5in]{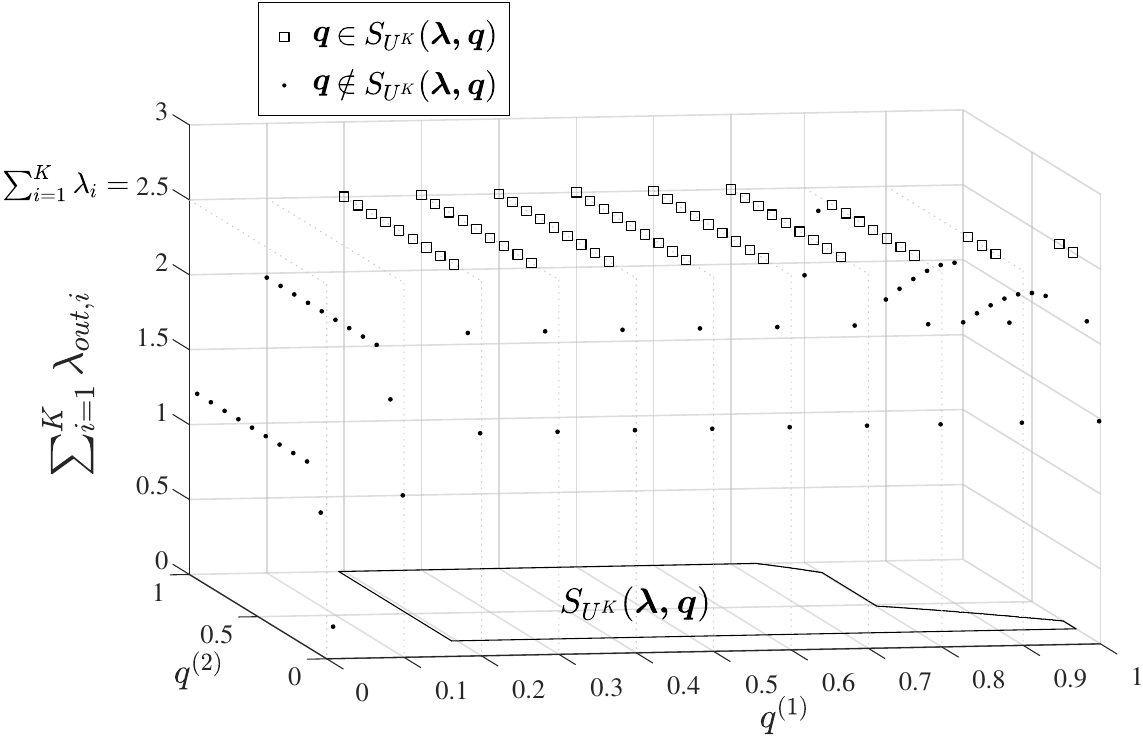}
      \label{subfig:simu_UU_2Cell_PPP_v3_rho0dB_irate0.1_thres-8dB}}
    \subfloat[]{
      \includegraphics[width=1.65in,height=1.5in]{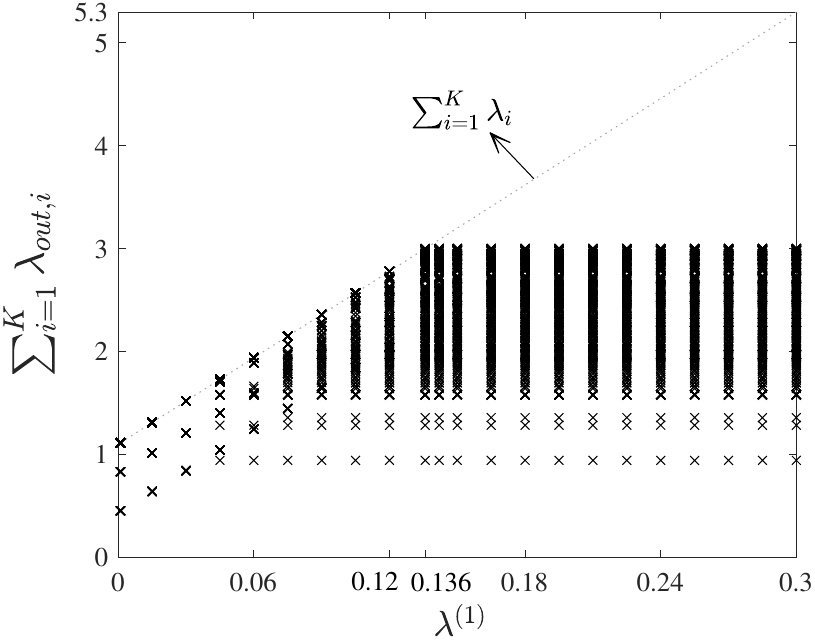}
    \label{subfig:simu_netthroughput_vs_irate1_q1q2_0.05to1_irate2_0.1_stableR_2Cell_PPP_v3_thres-8dB_rho0dB_new}}
    \vspace{-0.2cm}
    \caption{(a) Simulated total throughput $\sum_{i = 1}^K \lambda_{out, i}$ versus transmission probabilities $q^{(1)}$ and $q^{(2)}$ for a two-cell network.  $\lambda^{(1)} = \lambda^{(2)} = 0.1$.  (b) Simulated total throughput $\sum_{i = 1}^K \lambda_{out, i}$ versus input rate $\lambda^{(1)}$. $\lambda^{(2)} = 0.1$. $q^{(1)}, q^{(2)} \in \{0.05, 0.1, \ldots, 0.95, 1\}$. $\rho_{i,i^{\ast}} = 0$ dB, $i\in\mathcal{K}$. $\theta_1 = \theta_2 = -8$ dB.  $\alpha = 4$. }
    \label{fig:simu_UU_stableR_2Cell_PPP_v3_rho0dB_irate0.1_thres-8dB}
    \vspace{-0.1cm}
\end{figure}

In general, for a network with $K$ transmitters and $L$ receivers, the all-unsaturated region $S_{U^K}(\bm{q}, \bm{\lambda})$ and stability region $S_Q(\bm{\lambda})$ can be numerically obtained through Algorithm \ref{alg:findUU} and Algorithm \ref{alg:findSQ}, respectively. For a two-cell network with the topology given in Fig. \ref{subfig:topology_2Cell_PPP_v3}, for instance, the all-unsaturated region $S_{U^K}(\bm{q}, \bm{\lambda})$ for given input rates $\lambda^{(1)} = \lambda^{(2)} = 0.1$ has been illustrated in Fig. \ref{subfig:ga_UU_2Cell_PPP_v3_rho0dB_irate0.1_thres-8dB}. To verify it, simulation results presented in Fig. \ref{subfig:simu_UU_2Cell_PPP_v3_rho0dB_irate0.1_thres-8dB} show how the total throughput of the transmitters changes with the transmission probabilities $q^{(1)}$ and $q^{(2)}$. It can be observed that if $\bm{q} \in S_{U^K}(\bm{q}, \bm{\lambda})$, the total throughput is always equal to the total input rate $\sum_{i = 1}^{K} \lambda_i = \sum_{l = 1}^2 |\mathcal{K}^C_l| \lambda^{(l)} = 2.5$. In contrast, if $\bm{q} \notin S_{U^K}(\bm{q}, \bm{\lambda})$, the network throughput would drop below $2.5$, indicating that transmitters in at least one cell are saturated. For the stability region $S_Q(\bm{\lambda})$, according to Fig. \ref{subfig:ga_stableR_topology_2Cell_PPP_v3_thres-8dB_rho0dB}, for $\lambda^{(2)} = 0.1$, the input rate of each transmitter in Cell $1$, $\lambda^{(1)}$, should be smaller than $0.136$ for achieving stability, which can be verified by simulation results presented in Fig. \ref{subfig:simu_netthroughput_vs_irate1_q1q2_0.05to1_irate2_0.1_stableR_2Cell_PPP_v3_thres-8dB_rho0dB_new}.

\vspace{-0.2cm}
\subsection{An Ad-hoc Network}\label{subsection: simu ad-hoc}

For an ad-hoc network with multiple T-R pairs, in Fig. \ref{subfig:Adhoc_PPP_average9_v1}, transmitters are generated in $[0, 300\text{ m}]^2$ according to PPP with density $\xi = 10^{-4}$ m$^{-2}$ and each of them has a receiver with $d_{i,i} = 25$ m, $i\in\mathcal{K}$. For given input rate $\lambda_i = 0.2$, SINR threshold $\theta_i = 0$ dB, and transmission power $P_i = 17$ dBm for all $i\in\mathcal{K}$, the set of transmission probabilities for stabilizing the network can be obtained through Algorithm \ref{alg:findUU}, and Fig. \ref{subfig:Adhoc_PPP_average9_v1_with_stable_q} shows one of them. The simulation results presented in Fig. \ref{subfig:simu_stable_Adhoc_PPP_average9_v1_irate0.2_q_16throw_Tx17dBm_thres0dB_r25_Yang} demonstrate that the throughput of every T-R pair is equal to its input rate, indicating that the network can be stabilized with the transmission probabilities properly set.

\begin{figure*}[!t]
    \captionsetup[subfigure]{justification=centering}
    \centering
    \subfloat[]{
        \includegraphics[width=1.73in,height=1.63in]{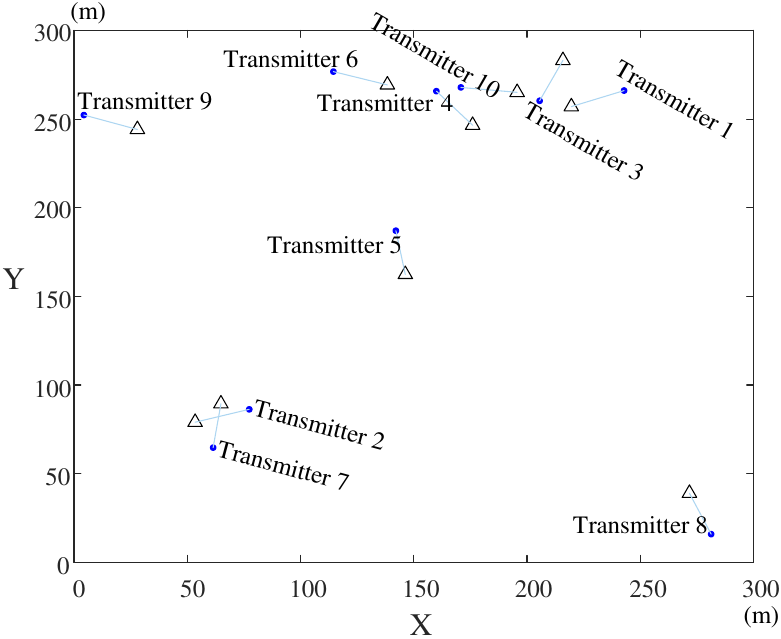}
    \label{subfig:Adhoc_PPP_average9_v1}}
    \subfloat[]{
        \includegraphics[width=1.73in,height=1.63in]{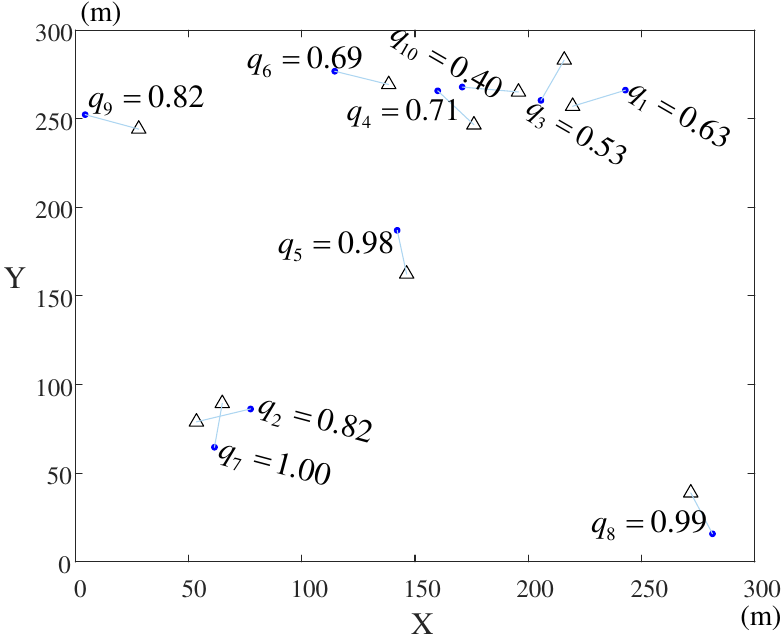}
    \label{subfig:Adhoc_PPP_average9_v1_with_stable_q}}
    \subfloat[]{
        \includegraphics[width=1.73in,height=1.63in]{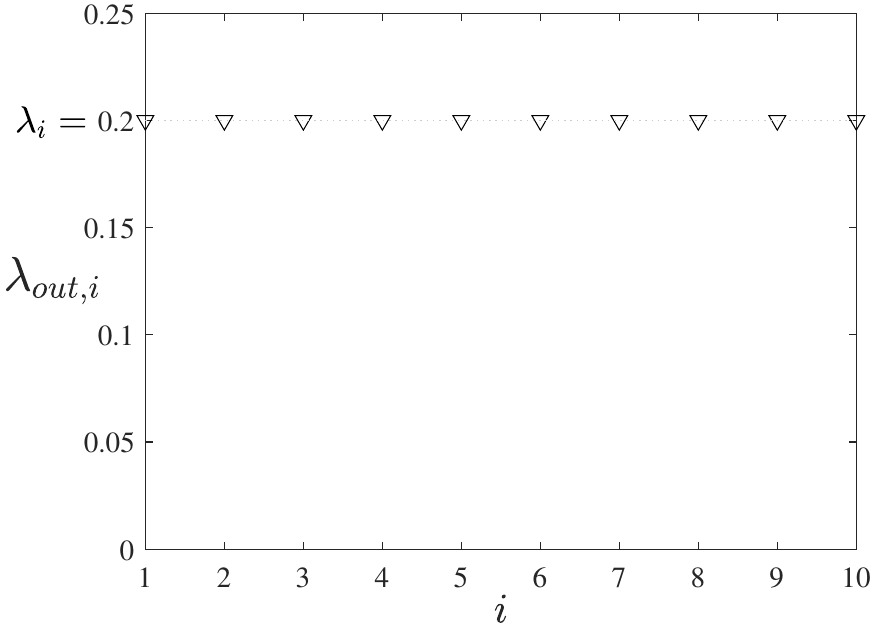}
    \label{subfig:simu_stable_Adhoc_PPP_average9_v1_irate0.2_q_16throw_Tx17dBm_thres0dB_r25_Yang}}
    \subfloat[]{
        \includegraphics[width=1.73in,height=1.63in]{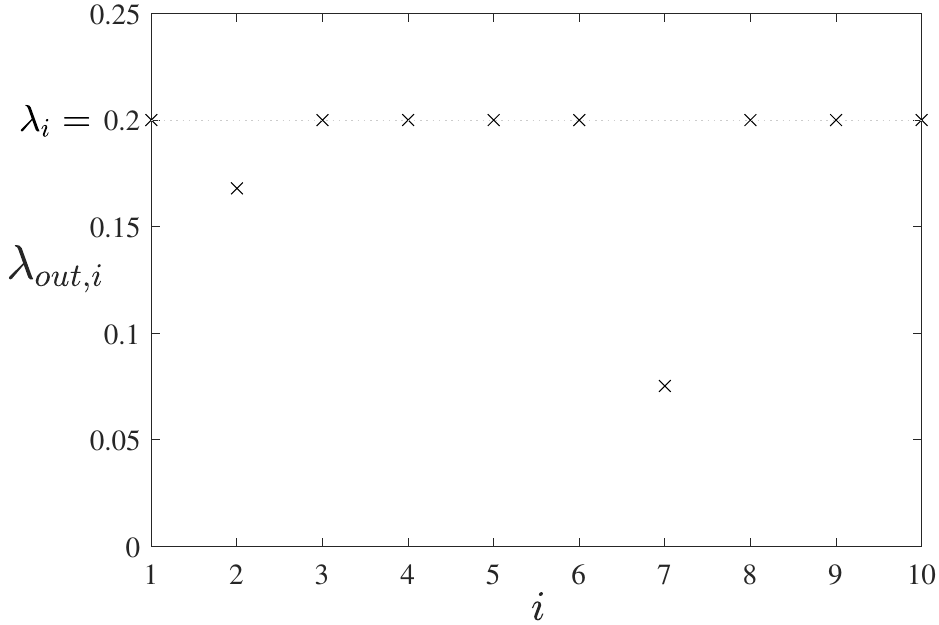}
    \label{subfig:simu_stable_Adhoc_PPP_average9_v1_irate0.2_q_all1_Tx17dBm_thres0dB_r25_Yang}}   
    \vspace{-0.1cm}
    \caption{(a) Network topology.  Transmitters and receivers are represented by dots and triangles, respectively. $\lambda_i = 0.2$, $\theta_i = 0$ dB, $P_i = 17$ dBm, $i \in \mathcal{K}$, $\sigma^2 =-90$ dBm, and $\alpha = 3.8$. (b) Transmission probabilities $\bm{q}$ obtained via Algorithm \ref{alg:findUU}.  (c)(d)  Simulated throughput $\lambda_{out, i}$ of T-R pair $i$, $i = 1,\ldots, 10$. (c) $\bm{q} = (0.63, 0.82, 0.53, 0.71, 0.98, 0.69, 1, 0.99, 0.82, 0.4)$, and (d) $\bm{q} = (1, 1, 1, 1, 1, 1, 1, 1, 1, 1)$.}
    \label{fig:Adhoc_PPP_average9_stable_q_irate0.2_Tx17dBm_thres0dB_r25_Yang}
    \vspace{-0.1cm}
\end{figure*}

\begin{figure*}[!t]
    \captionsetup[subfigure]{justification=centering}
    \centering
    \subfloat[]{
        \includegraphics[width=1.95in,height=1.68in]{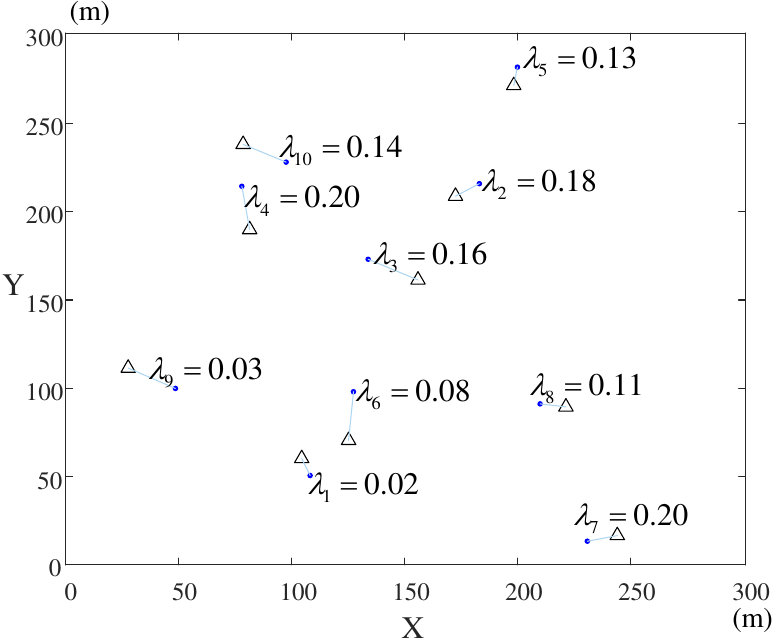}
    \label{subfig:topology_10Adhoc_dchange_v1}}
    \hspace{0.8cm}
    \subfloat[]{
        \includegraphics[width=1.95in,height=1.68in]{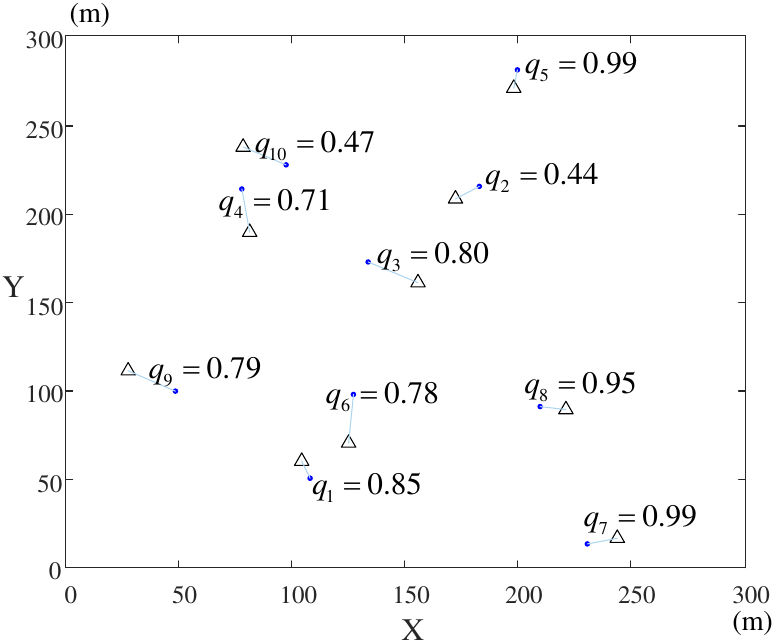}
    \label{subfig:topology_10Adhoc_dchange_v1_with_stable_q}}
    \hspace{0.75cm}
    \subfloat[]{
        \includegraphics[width=1.95in,height=1.68in]{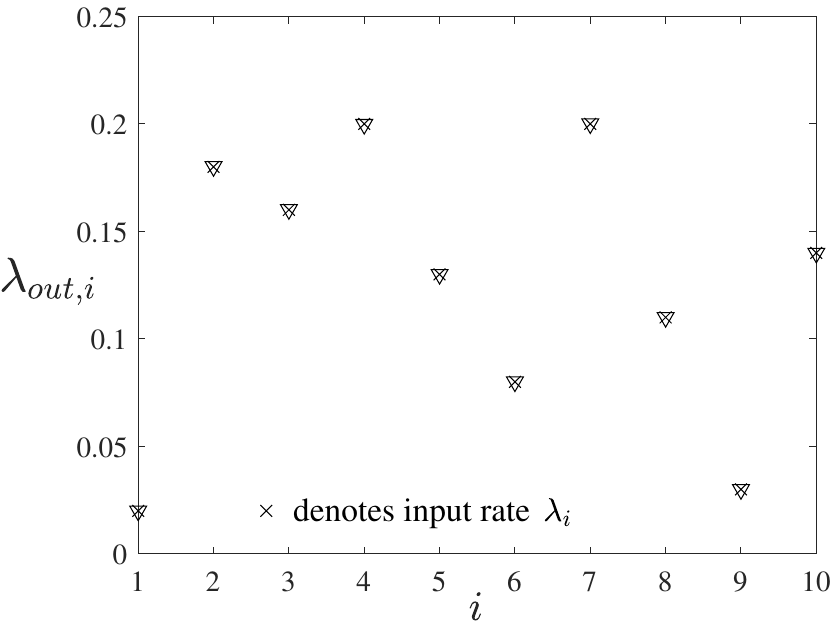}
    \label{subfig:simu_UU_10Adhoc_dchange_v1_ratechange_Pt17dBm_thres0dB_optq21th}}
    \vspace{-0.1cm}
    \caption{(a) Network topology with input rates $\bm{\lambda}$. Transmitters and receivers are represented by dots and triangles, respectively. $\theta_i=0$ dB, $P_i = 17$ dBm, $i\in\mathcal{K}$, $\sigma^2 =-90$ dBm, and $\alpha = 3.8$. (b) Transmission probabilities $\bm{q}$  obtained via Algorithm \ref{alg:findUU}. (c) Simulated throughput $\lambda_{out, i}$ of T-R pair $i$, $i = 1,\ldots, 10$, with the transmission probabilities given in (b). }
    \label{fig:topology_10Adhoc_dchange_with_simu}
    \vspace{-0.2cm}
\end{figure*}

Note that it was mentioned in Remark 1 that given the above setting, but the transmission probabilities are set to 1, then $90\%$ of the transmitters can be stabilized when the network area is sufficiently large. For an area of $[0, 300\text{ m}]^2$, nevertheless, the percentage of stable transmitters could be less than $90\%$. As Fig. \ref{subfig:simu_stable_Adhoc_PPP_average9_v1_irate0.2_q_all1_Tx17dBm_thres0dB_r25_Yang} illustrates, the throughputs of two T-R pairs drop below the input rate $0.2$, indicating that $2$ out of $10$ T-R pairs are unstable. As Fig. \ref{subfig:Adhoc_PPP_average9_v1_with_stable_q} illustrates, to stabilize the network, the transmission probability of each T-R pair should be carefully set based on the network topology. For those T-R pairs located in a dense area, transmission probabilities should be reduced accordingly to lower the interference. Adopting the same transmission probability across the network would lead to stronger interference for T-R pairs in the densely populated area and thus higher chances of being unstable.

In general, with an asymmetric setting where the T-R distance and input rate vary from T-R pair to T-R pair, e.g., the distance $d_{i,i}$ and input rate $\lambda_i$, $i\in\mathcal{K}$, are randomly generated from $10$ m to $30$ m and from $0$ to $0.2$, respectively, as shown in Fig. \ref{subfig:topology_10Adhoc_dchange_v1}, the set of transmission probabilities for stabilizing the network can still be obtained through Algorithm \ref{alg:findUU}, one of which is presented in Fig. \ref{subfig:topology_10Adhoc_dchange_v1_with_stable_q}. The simulation results shown in Fig. \ref{subfig:simu_UU_10Adhoc_dchange_v1_ratechange_Pt17dBm_thres0dB_optq21th} again verify that the network can be stabilized with the transmission probabilities properly set based on the input rates and network topology.

\section{Conclusion}\label{section: conclusion}

In this paper, the analytical framework proposed in \cite{Atheoreticalframework_Dai} is extended to incorporate multiple capture receivers for Aloha networks. The stability analysis shows that the network can be stabilized if and only if the input rates $\bm{\lambda}$ are within the stability region $S_{Q}(\bm{\lambda})$, and the transmission probabilities $\bm{q}$ are chosen from the all-unsaturated region $S_{U^K}(\bm{q}, \bm{\lambda})$, both of which are closely dependent on locations of transmitters and receivers, and can be characterized based on the exact steady-state probability of successful transmissions of HOL packets of each transmitter. The results are demonstrated in various examples, including the two T-R pairs and $K$ symmetric T-R pairs, where explicit expressions of $S_{U^K}(\bm{q}, \bm{\lambda})$ and $S_{Q}(\bm{\lambda})$ can be obtained, and the general multi-cell and ad-hoc scenarios, where $S_{U^K}(\bm{q}, \bm{\lambda})$ and $S_{Q}(\bm{\lambda})$ can only be numerically calculated. 
Simulation results validate the analysis and corroborate that adopting a fixed and identical transmission probability across the network is the root cause of failing to achieve network stability. Instead, to stabilize the whole network, transmission probabilities of T-R pairs should be carefully adjusted based on their locations and traffic input rates.


\begin{appendices}
    \vspace{-0.1cm}
\section{Proof of Theorem \ref{theorem: SUU_2TR}} \label{appendix: SUU_K=2}
\vspace{-0.1cm}

With $K=L=2$, for the network to be all-unsaturated, the transmission probabilities $\bm{q}$ should satisfy certain constraints to ensure: 1) the convergence to the all-unsaturated steady-state point $\bm{p}_{L}^{K=2}$, and 2) the service rate of each transmitter's queue being larger than its input rate at $\bm{p}_{L}^{K=2}$.

For the convergence to $\bm{p}_{L}^{K=2}$, consider the iterative process $c_{t+1} = F(c_t)$, where $F(c_t) = \prod_{i=1}^{2} \frac{a_i c_t}{a_i c_t + b_i \lambda_i}$, $t = 0, 1, \ldots$. $c_t$ needs to converge to the attracting fixed point $c_L$ of (\ref{c_2_K=2}). The following lemma presents a sufficient and necessary condition for the convergence to $c_L$.  
\begin{lemma}\label{lemma: ct cL}
    A sufficient and necessary condition for the convergence of $c_t$ to $c_L$ is $\min_t c_t > c_S$.
\end{lemma}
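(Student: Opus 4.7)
The plan is to analyze the one-dimensional dynamical system $c_{t+1} = F(c_t)$, where $F(c) = \prod_{i=1}^{2} \tfrac{a_i c}{a_i c + b_i \lambda_i}$, through monotonicity and its fixed-point structure. The first step is to record the basic shape of $F$ on $[0,\infty)$: each factor $\tfrac{a_i c}{a_i c + b_i \lambda_i} = 1 - \tfrac{b_i \lambda_i}{a_i c + b_i \lambda_i}$ is strictly increasing in $c$, so $F$ is strictly increasing, with $F(0)=0$ and $F(c)\to 1$ as $c\to\infty$. Under condition \eqref{condition_1}, the equation $F(c)=c$ has exactly three roots, $0 < c_S < c_L$, so the sign of $F(c)-c$ is determined by intermediate-value reasoning: $F(c)<c$ on $(0, c_S)$ and on $(c_L, \infty)$, and $F(c)>c$ on $(c_S, c_L)$.

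Next, I would use these two facts to identify the forward-invariant intervals of $F$. Since $F$ is increasing and $F(c_S)=c_S$, the half-line $[c_S,\infty)$ is mapped into itself; similarly $(0, c_S)$ is mapped into $(0, c_S)$, and both $(c_S, c_L)$ and $(c_L, \infty)$ are invariant. This sets up a standard cobweb analysis on each invariant interval.

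For sufficiency, suppose $\min_t c_t > c_S$, equivalently $c_t > c_S$ for all $t$. If $c_0 \in (c_S, c_L)$, the sign of $F-c$ forces $c_t$ to be strictly increasing while invariance keeps it below $c_L$, so it converges to the unique fixed point in $(c_S, c_L]$, namely $c_L$. A symmetric argument applied to $(c_L, \infty)$ handles $c_0 > c_L$ (monotone decreasing, bounded below by $c_L$), and $c_0 = c_L$ is trivial. For necessity, I would argue contrapositively: if some iterate satisfies $c_{t_0} \le c_S$, then either $c_{t_0}=c_S$, in which case $c_t \equiv c_S$ for $t\ge t_0$, or $c_{t_0} \in (0, c_S)$, in which case the same monotonicity plus sign arguments give $c_t \to 0$. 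Both contradict $c_t \to c_L$; hence $c_t > c_S$ for every $t$. Combining this with $c_t \to c_L > c_S$ shows the infimum of the trajectory is attained (or approached) strictly above $c_S$, so $\min_t c_t > c_S$.

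The main obstacle will be the strict-versus-weak inequality at $c_S$: one needs to rule out the possibility that $c_t \downarrow c_S$ from above while still converging to $c_L$, and one needs to ensure the infimum is strictly $>c_S$ even though infinitely many iterates may approach $c_L$ from either side. The cleanest way to handle this is to note that $c_S$ is repelling (its basin of attraction under $F$ is the single point $\{c_S\}$), which follows directly from the sign pattern of $F(c)-c$ near $c_S$ combined with monotonicity, so any trajectory that enters any punctured neighborhood of $c_S$ immediately leaves and is pushed either toward $0$ or toward $c_L$. Once this is in place, the equivalence drops out cleanly.
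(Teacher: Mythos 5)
Your proof is correct and follows essentially the same route as the paper's: a cobweb analysis of the monotone iteration $c_{t+1}=F(c_t)$ based on the sign of $F(c)-c$ on the intervals cut out by the fixed points $0<c_S<c_L$, giving monotone convergence to $c_L$ from either side when the trajectory stays above $c_S$ and monotone decrease toward $0$ otherwise. The only difference is presentational: the paper opens with the local derivative conditions $0<F'(c_L)<1$ and $F'(c_S)>1$, whereas you derive the global sign pattern and the forward-invariance of the relevant intervals directly from the root structure of $F(c)=c$, which if anything makes the argument slightly more self-contained.
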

\begin{proof}
    It can be easily obtained that $0<F'(c_L) < 1$ and $F'(c_S) >1$, where $F'(c)$ is the first-order derivation of $F(c)$. As a result, when $c_t < c_S$, $c_{t+1} = F(c_t) < c_t$. In that case, $c_t$ monotonically decreases and diverges from $c_S$. On the other hand, if $c_t > c_S$, we have $c_{t+1} = F(c_t) > c_t$ when $c_t < c_L$, and $c_{t+1} = F(c_t) < c_t$ when $c_t > c_L$, indicating that $c_t$ converges to $c_L$. We can see that a sufficient and necessary condition for $c_t$ to converge to $c_L$ is $c_t > c_S$ for all $t$. 
\end{proof}
By combining Lemma \ref{lemma: ct cL} with (\ref{c_K=2}), we can obtain that the convergence to $\bm{p}_{L}^{K=2}$ requires  
\begin{equation}\label{K=2_convergence}
    \left(1 - \tfrac{b_1 \lambda_1}{p_{1,t}} \right)\left(1 - \tfrac{b_2 \lambda_2}{p_{2,t}} \right) > \left(1 - \tfrac{b_1 \lambda_1}{p_{1,S}^{K=2}} \right)\left(1 - \tfrac{b_2 \lambda_2}{p_{2,S}^{K=2}} \right), \; \text{for all }t,
\end{equation}
where $p_{i,t} = a_i c_t + b_i \lambda_i$. 
Lemma \ref{lemma: condition_convergence_K=2} further shows the sufficient and necessary condition for (\ref{K=2_convergence}).
\begin{lemma}\label{lemma: condition_convergence_K=2}
    A sufficient and necessary condition for (\ref{K=2_convergence}) is 
    \begin{equation}
        q_1 < \tfrac{\lambda_1}{p_{1,S}^{K=2}}, \quad \text{or} \quad q_2 < \tfrac{\lambda_2}{p_{2,S}^{K=2}}.
    \end{equation}
\end{lemma}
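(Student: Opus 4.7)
The plan is to reduce the condition via Lemma~\ref{lemma: ct cL} and the monotonicity of $F$, and then use an algebraic bridge to translate the $\bm{q}$-side into a $\bm{p}$-side condition, so that necessity and sufficiency can be handled separately. First I would verify that each factor $a_ic/(a_ic+b_i\lambda_i)$ is strictly increasing in $c$, so $F$ is strictly monotone on $(0,\infty)$ with $c_S$ and $c_L$ as its only fixed points in $(0,1)$. The basin of attraction of $c_L$ is therefore exactly $(c_S,\infty)$, and (\ref{K=2_convergence}) reduces to: the initial value of the iteration must lie above $c_S$. The relevant initial state is the all-saturated one, with $p_{i,0}=p_{i,A}=a_i(1-b_jq_j)$.

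Next I would set up the algebraic bridge. Applying Vieta's formulas to the quadratic $c^2+(\alpha+\beta-1)c+\alpha\beta=0$ with $\alpha=b_1\lambda_1/a_1$, $\beta=b_2\lambda_2/a_2$ gives the key identity $p_{j,L}^{K=2}\,p_{j,S}^{K=2}=a_jb_j\lambda_j$, and hence $\lambda_j/p_{j,S}^{K=2}=c_L/b_j+\lambda_j/a_j$. Combined with $p_{i,A}=a_i(1-b_jq_j)$, this yields the clean equivalence
\[
q_j<\lambda_j/p_{j,S}^{K=2} \iff p_{i,A}>p_{i,S}^{K=2},\qquad i\neq j,
\]
so the disjunction in the lemma is equivalent to: at least one of the saturated success probabilities $p_{i,A}$ exceeds the repelling level $p_{i,S}^{K=2}$.

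For necessity, if both $p_{i,A}\le p_{i,S}^{K=2}$, then $\lambda_i/p_{i,A}\ge q_i$ for both $i$, so by (\ref{x_requesting}) the iteration is stuck at the saturated state with $p_{i,t}\equiv p_{i,A}$, giving $c_t\le c_S$ and violating (\ref{K=2_convergence}). For sufficiency, I would take WLOG $p_{1,A}>p_{1,S}^{K=2}$ and argue that transmitter~$1$ escapes saturation in one step, so the trajectory enters the partially-saturated regime with $p_{2,P}^{K=2}$ given by (\ref{pj_twoTR_US}); the same identity $p_{j,L}^{K=2}p_{j,S}^{K=2}=a_jb_j\lambda_j$ then lets me show $p_{2,P}^{K=2}>p_{2,S}^{K=2}$. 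The resulting iterate lies strictly inside the basin of $c_L$, and the monotone convergence of $F$ keeps $c_t>c_S$ for all subsequent $t$. The main obstacle will be this last step, namely the careful comparison of $p_{2,P}^{K=2}$ with $p_{2,S}^{K=2}$ through the fixed-point identity, which is where the algebraic structure of the two-T-R-pair problem is most delicate.
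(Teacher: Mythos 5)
Your algebraic bridge is correct, but it is really just a restatement of the fixed-point relation $p_{i,S}^{K=2}=a_i\bigl(1-b_j\lambda_j/p_{j,S}^{K=2}\bigr)$, which already gives $p_{i,A}>p_{i,S}^{K=2}\iff q_j<\lambda_j/p_{j,S}^{K=2}$ without Vieta; this identity is exactly what the paper's proof uses. In the necessity direction, however, your intermediate claim that $p_{i,A}\le p_{i,S}^{K=2}$ for both $i$ forces $\lambda_i/p_{i,A}\ge q_i$ for both $i$ is false: $\lambda_i/p_{i,A}\ge q_i$ is the condition $q_i a_i(1-b_jq_j)\le\lambda_i$, which does not follow from $q_1\ge\lambda_1/p_{1,S}^{K=2}$ and $q_2\ge\lambda_2/p_{2,S}^{K=2}$ (e.g., with $a_i,b_i$ close to $1$ and $\lambda_1=\lambda_2=0.2$ one has $\lambda_i/p_{i,S}^{K=2}\approx 0.72$, and $q_1=0.73$, $q_2=0.9$ satisfies both hypotheses yet $\lambda_2/p_{2,A}\approx 0.74<q_2$, so the iteration is not ``stuck''). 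The conclusion survives without this claim: (\ref{K=2_convergence}) must hold at every $t$, and it already fails at the all-busy instant where $p_{i,t}=p_{i,A}\le p_{i,S}^{K=2}$ for both $i$, which is the paper's (shorter) necessity argument.

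The genuine gap is in the sufficiency direction. Under $q_2<\lambda_2/p_{2,S}^{K=2}$ (your WLOG $p_{1,A}>p_{1,S}^{K=2}$) there is no reason for Transmitter $1$ to ``escape saturation in one step'': the lemma places no lower bound on $q_1$, so Transmitter $1$ may remain saturated for all $t$, yet the lemma still asserts (\ref{K=2_convergence}). Moreover, $p_{2,P}^{K=2}$ in (\ref{pj_twoTR_US}) is the fixed point of the partially-saturated dynamics, not the value of the trajectory after one step, so comparing it with $p_{2,S}^{K=2}$ does not control $c_t$ along the way; and reducing (\ref{K=2_convergence}) to ``the initial value lies above $c_S$'' is not legitimate because the true dynamics is not the pure $F$-iteration — the transition uses $x_j=\min(\lambda_j/p_j,q_j)$ from (\ref{x_requesting}), and the dependence on $\bm{q}$ through this $\min$ is the whole content of the lemma. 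What is needed, and what the paper does, is a bound on the trajectory that is uniform in $t$ and agnostic to saturation status: $x_{2,t}\le q_2$ gives $p_{1,t}\ge a_1(1-b_2q_2)>a_1\bigl(1-b_2\lambda_2/p_{2,S}^{K=2}\bigr)=p_{1,S}^{K=2}$ for every $t$, and then $x_{1,t-1}\le\lambda_1/p_{1,t-1}$ gives $p_{2,t}\ge a_2\bigl(1-b_1\lambda_1/p_{1,t-1}\bigr)>p_{2,S}^{K=2}$ for every $t$; keeping both coordinates above the repelling point is precisely (\ref{K=2_convergence}). Your regime-tracking route cannot be completed without reproducing these uniform bounds.
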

\begin{proof}
    \textit{Sufficient Condition:} If $q_i < \tfrac{\lambda_i}{p_{i,S}^{K=2}}$, $i\in\{1,2\}$, then for Transmitter $j\neq i$,
    \begin{equation}
        p_{j,t} \geq a_j (1 - b_i q_i) > a_j (1 - \tfrac{b_i\lambda_i}{p_{i,S}^{K=2}}) = p_{j,S}^{K=2}, \quad \text{for all } t.
    \end{equation}
    For Transmitter $i$,
    \begin{equation}
        p_{i,t} = a_i\left(1 - \tfrac{b_j \lambda_j}{p_{j,t-1}}\right) > a_i\left(1 - \tfrac{b_j \lambda_j}{p_{j,S}^{K=2}}\right) = p_{i,S}^{K=2}, \quad \text{for all } t.
    \end{equation}
    With both $p_{i,t} > p_{i,S}^{K=2}$ and $p_{j,t} > p_{j,S}^{K=2}$ for all $t$, (\ref{K=2_convergence}) holds true.

    \textit{Necessary Condition: } If $q_i \geq \tfrac{\lambda_i}{p_{i,S}^{K=2}}$ for both $i =1$ and $i =2$, then in the worst scenario when both transmitters' queues are busy, we have
    \begin{equation}
        p_{i,t} = a_i\left(1 - b_j q_j \right) \leq a_i\left(1 - \tfrac{b_j \lambda_j}{p_{j,S}^{K=2}}\right) = p_{i,S}^{K=2},
    \end{equation}
    for both $i = 1$ and $i = 2$, with which (\ref{K=2_convergence}) does not hold.
\end{proof}

To ensure that the service rate of each transmitter is larger than its input rate, according to (\ref{service_rate}), we have 
\begin{equation}\label{SUU_2TR_lowbound}
    q_i > \tfrac{\lambda_i}{p_{i,L}^{K=2}},
\end{equation}
for both $i = 1$ and $i=2$.

Finally, Theorem \ref{theorem: SUU_2TR} can be obtained by combining Lemma \ref{lemma: condition_convergence_K=2}, Lemma \ref{lemma: ct cL} and (\ref{SUU_2TR_lowbound}).

\vspace{-0.2cm}
\section{Proof of Theorem \ref{theorem: SUU_symmetric}}\label{appendix: SUU_symmetric}
\vspace{-0.1cm}

With $K$ symmetric T-R pairs, for the network to be all-unsaturated, the transmission probability $q$ should satisfy certain constraints to ensure: 1) the convergence to the all-unsaturated steady-state point $p_L$, and 2) the service rate of each transmitter's queue being larger than its input rate at $p_L$.

For the convergence to $p_{L}$, consider the iterative process $p_{t+1} = G(p_t)$, where $G(p_t) = \exp\left( -\tfrac{\theta}{\rho} - \tfrac{K \theta}{\theta + 1}\cdot \tfrac{\lambda}{p_t} \right)$, $t = 0, 1, \ldots$. $p_t$ needs to converge to the attracting fixed point $p_L$ of (\ref{p_single_symmetric_unsaturated}). By following a similar derivation to Lemma \ref{lemma: ct cL}, a sufficient and necessary condition for the convergence to $p_L$ can be obtained as $\min_t p_t > p_S$. Lemma \ref{lemma: condition_convergence_symmetric} further
shows the sufficient and necessary condition for $\min_t p_t > p_S$.
\begin{lemma}\label{lemma: condition_convergence_symmetric}
    A sufficient and necessary condition for $\min_t p_t > p_S$ is $q < \tfrac{\lambda}{p_S}$.
\end{lemma}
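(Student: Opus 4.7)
The plan is to mirror the structure of the proof of Lemma~\ref{lemma: condition_convergence_K=2}, replacing the two-transmitter coupled system by a single-variable iteration that retains the same worst-case character. The physically correct iteration, by symmetry and~(\ref{x_requesting}), is $p_{t+1} = \exp(-\theta/\rho)\cdot (1 - \tfrac{\theta}{\theta+1} x_t)^{K-1}$ with $x_t = \min(\lambda/p_t,\, q)$, whereas the map $G$ in the lemma statement corresponds only to the unsaturated branch $x_t = \lambda/p_t$. Because $(1-\tfrac{\theta}{\theta+1} x)^{K-1}$ is strictly decreasing in $x$ and $x_t \leq q$, the a priori bound $p_{t+1} \geq p_A$ holds for every $t \geq 0$, where $p_A = \exp(-\theta/\rho)(1 - \tfrac{\theta}{\theta+1}q)^{K-1}$ is the all-saturated fixed point from~(\ref{p_single_symmetric_saturated}). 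The whole argument then reduces to comparing $p_A$ with $p_S$.

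For the sufficient direction, I would assume $q < \lambda/p_S$ and exploit the fixed-point identity $p_S = \exp(-\theta/\rho)(1 - \tfrac{\theta}{\theta+1}\cdot \tfrac{\lambda}{p_S})^{K-1}$ coming from~(\ref{p_single_symmetric_unsaturated}), together with monotonicity in $x$, to conclude $p_A > p_S$. The a priori bound $p_{t+1} \geq p_A$ then yields $p_{t+1} > p_S$ for every $t \geq 0$, so $\min_t p_t > p_S$. This plays the role of the two inequalities $p_{j,t} > p_{j,S}^{K=2}$ and $p_{i,t} > p_{i,S}^{K=2}$ used in the sufficient-condition part of Lemma~\ref{lemma: condition_convergence_K=2}.

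For the necessary direction, I would argue by contrapositive. If $q \geq \lambda/p_S$, I choose an initial value $p_0$ small enough that $p_0 < \lambda/q$, so that the queue is in the saturated regime at $t = 0$ and $x_0 = q$. One step of the iteration then gives $p_1 = p_A$, and the same monotonicity combined with $q \geq \lambda/p_S$ forces $p_A \leq p_S$. Hence $\min_t p_t \leq p_S$, and the condition fails.

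The main obstacle is the bookkeeping of the regime switch: the iteration written as $p_{t+1} = G(p_t)$ in the lemma statement is strictly valid only when the queue is unsaturated at step $t$, so before monotonicity can be invoked one must first replace $G$ by the correct piecewise form using $x_t = \min(\lambda/p_t, q)$ and then verify that the worst-case output $p_A$ is attained (or approached) by the saturated branch. This is exactly the analog of the ``worst scenario when both transmitters' queues are busy'' step used in the proof of Lemma~\ref{lemma: condition_convergence_K=2}, and once it is set up cleanly, everything else follows from the monotonicity of $(1-\tfrac{\theta}{\theta+1} x)^{K-1}$ in $x$.
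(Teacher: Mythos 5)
Your proposal is correct and follows essentially the same route as the paper: both arguments rest on the a priori bound $p_t \geq p_A$ (from $x_t = \min(\lambda/p_t, q) \leq q$ and monotonicity) for the sufficient direction, and on the all-busy worst case giving $p_t = p_A \leq p_S$ for the necessary direction. Your extra bookkeeping about the piecewise form of the iteration and the choice of $p_0$ makes explicit what the paper's ``worst scenario when all transmitters are busy'' step leaves implicit, but it is the same argument.
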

\begin{proof}
    \textit{Sufficient Condition: } If $q < \tfrac{\lambda}{p_S}$, then we have 
    \begin{equation}
        p_t \geq \exp \left( -\tfrac{\theta}{\rho} - \tfrac{K\theta}{\theta + 1} \cdot q \right) > \exp \left( -\tfrac{\theta}{\rho} - \tfrac{K\theta}{\theta + 1} \cdot \tfrac{\lambda}{p_S} \right) = p_S, \text{for all }t.
    \end{equation}
    with which $\min_t p_t > p_S$ holds true.

    \textit{Necessary Condition: } If $q \geq \tfrac{\lambda}{p_S}$, then in the worst scenario when all transmitters are busy, we have
    \begin{equation}
        p_t = \exp \left( -\tfrac{\theta}{\rho} - \tfrac{K\theta}{\theta + 1} \cdot q \right) \leq \exp \left( -\tfrac{\theta}{\rho} - \tfrac{K\theta}{\theta + 1} \cdot \tfrac{\lambda}{p_S} \right) = p_S,
    \end{equation}
    with which $\min_t p_t > p_S$ does not hold.
\end{proof}

To ensure that the service rate of each transmitter is larger than its input rate, according to (\ref{service_rate}), we have 
\begin{equation}\label{SUU_symmetric_lowbound}
    q > \tfrac{\lambda}{p_{L}}.
\end{equation}

Finally, Theorem \ref{theorem: SUU_symmetric} can be obtained by combining Lemma \ref{lemma: condition_convergence_symmetric} with (\ref{SUU_symmetric_lowbound}).

\vspace{-0.1cm}
\section{Proof of Theorem \ref{theorem: SQ_2TR}}\label{appendix: SQ_2TR}
\vspace{-0.1cm}
With $K=L=2$, the network can be stabilized only when the all-unsaturated steady-state point $\bm{p}_L^{K=2}$ exists, i.e., (\ref{condition_1}) holds. Furthermore, to ensure that there exist transmission probabilities $\bm{q}$ to achieve stability, i.e., $\bm{\lambda}<\bm{\mu}$, according to (\ref{service_rate}), it also requires $\lambda_i<p_{i,L}^{K=2} \cdot q_i \leq p_{i,L}^{K=2}$, which can be solved as 
\begin{equation}\label{condition_2}
    \tfrac{(2-b_i)\lambda_i}{a_i} + \tfrac{b_j\lambda_j}{a_j} < 1,\!  \text{ or } \! \tfrac{(1-b_i)\lambda_i}{a_i} + \tfrac{b_j \lambda_j}{a_j}\! <\! 1-b_i,  i, j \in \{1, 2\}, i\!\neq\! j.
\end{equation}
Theorem \ref{theorem: SQ_2TR} can be obtained by combining (\ref{condition_1}) and (\ref{condition_2}).

\vspace{-0.2cm}
\section{Proof of Theorem \ref{theorem: SQ_symmetric}}\label{appendix: SQ_symmetric}

With $K$ symmetric T-R pairs, the network can be stabilized only when the all-unsaturated steady-state point $p_L$ exists, i.e., (\ref{symm_condition_1}) holds. Furthermore, to ensure that there exists transmission probability $q$ to achieve stability, it also requires $\lambda <p_{L}  \cdot q \leq p_{L}$, which can be solved as 
\begin{equation}\label{symm_condition_2}
    0<\lambda< \exp\left( -\tfrac{K\theta}{\theta + 1}  -\tfrac{\theta}{\rho} \right), \quad \text{or} \quad \theta > \tfrac{1}{K-1}. 
\end{equation}
Theorem \ref{theorem: SQ_symmetric} can be obtained by combining (\ref{symm_condition_1}) and (\ref{symm_condition_2}).

\end{appendices}

\balance


\begin{thebibliography}{50}



    \bibitem{ThechallengesofM2M_Biral}
    A. Biral, M. Centenaro, A. Zanella, L. Vangelista, and M. Zorzi, ``The challenges of M2M massive access in wireless cellular networks,'' \textit{Digit. Commun. Netw.}, vol. 1, no. 1, pp. 1--19, Feb. 2015.

    \bibitem{RandomaccessMTD_Leyva}
    I. Leyva-Mayorga, C. Stefanovic, P. Popovski, V. Pla, and J. Martinez-Bauset, ``Random access for Machine-Type Communications,'' \textit{Wiley 5G Ref: The Essential 5G reference Online}, Dec. 2019.

    \bibitem{DataNetworks_Gallager}
    D. Bertsekas and R. Gallager, \textit{Data Networks}. Upper Saddle River, NJ, USA: Prentice-Hall, Jan. 1992.

    \bibitem{WirelessCommunications_Goldsmith}
    A. Goldsmith, \textit{Wireless Communications}. Cambridge, U.K.: Cambridge Univ. Press, 2005.

    \bibitem{ComputerNetworking_Kurose}
    J. F. Kurose and K. W. Ross, \textit{Computer Networking: A Top-Down Approach}, 5th ed. Boston, MA, USA: Addison Wesley, 2009.

    
    \bibitem{Aloha_Abramson}
    N. Abramson, ``The Aloha system—another alternative for computer communication,'' in \textit{Proc. Fall Joint Compet. Conf.}, vol. 44, pp. 281--285, 1970.

    \bibitem{3GPPTS38.321}
    \textit{Technical Specification Group Radio Access Network; NR; Medium Access Control (MAC) protocol specification; (Release 17), 3GPP TS 38.321, V17.2.0}, 3GPP, Sept. 2022. 
    
    \bibitem{802.11ax}
    \textit{IEEE Std. 802.11ax-2021 Part 11: Wireless LAN Medium Access Control (MAC) and Physical Layer (PHY) specifications}, IEEE, May 2021.

    \bibitem{LoRa}
    \textit{LoRaWAN 1.1 Specification}, LoRa Alliance, Grenoble, France, Oct. 2017.

    \bibitem{SRD}
    \textit{Short Range Devices (SRD); Radio Equipment to be Used in the 1 GHz to 40 GHz Frequency Range; Harmonised Standard for Access to Radio Spectrum,} Standard ETSI EN 300 440, V2.2.1, ETSI Std., Jul. 2018.



    \bibitem{Atheoreticalframework_Dai}
    L. Dai, ``A theoretical framework for random access: Stability regions and transmission control,'' \textit{IEEE/ACM Trans. Netw.}, vol. 30, no. 5, pp. 2173--2200, Oct. 2022.

    \bibitem{Multicell_Yang}
    Y. Yang and L. Dai, ``Stability region and transmission control of multi-cell Aloha networks'', \textit{IEEE Trans. Commun.}, vol. 71, no. 9, pp. 5348--5364, Sept. 2023. 

    \bibitem{AlohaCapture_Roberts}
    L. G. Roberts, \textit{ALOHA Packet System With and Without Slots and Capture}, ARPANET Satellite System Note 8, NIC document 11290, Stanford Research Institute, Menlo Park, CA, USA, 1972.



    \bibitem{AnAlohaprotocol_Baccelli}
    F. Baccelli, B. Blaszczyszyn, and P. Muhlethaler, ``An Aloha protocol for multihop mobile wireless networks,'' \textit{IEEE Trans. Inf. Theory}, vol. 52, no. 2, pp. 421--436, Feb. 2006.



    \bibitem{Onstochasticgeometry_ElSawy}
    H. ElSawy and E. Hossain, ``On stochastic geometry modeling of cellular uplink transmission with truncated channel inversion power control,'' \textit{IEEE Trans. Wireless Commun.}, vol. 13, no. 8, pp. 4454--4469, Aug. 2014.


    \bibitem{Effectspatialtemporaltrafficstatistics_Wang}
    G. Wang, Y. Zhong, R. Li, X. Ge, T. Q. S. Quek, and G. Mao, ``Effect of spatial and temporal traffic statistics on the performance of wireless networks,'' \textit{IEEE Trans. Commun.}, vol. 68, no. 11, pp. 7083--7097, Nov. 2020.



    \bibitem{Stochasticopportunisticaloha_Baccelli} 
    F. Baccelli, B. Blaszczyszyn, and P. Muhlethaler, ``Stochastic analysis of spatial and opportunistic aloha,'' \textit{IEEE J. Sel. Areas Commun.}, vol. 27, no. 7, pp. 1105--1119, Sept. 2009.

    \bibitem{RandomaccessPoisson_Stamatiou} 
    K. Stamatiou and M. Haenggi, ``Random-access Poisson networks: Stability and delay,'' \textit{IEEE Commun. Lett.}, vol. 14, no. 11, pp. 1035--1037, Nov. 2010.


    \bibitem{SpatiotemporalStochastic_Gharbieh} 
    M. Gharbieh, H. ElSawy, A. Bader, and M.-S. Alouini, ``Spatiotemporal stochastic modeling of IoT enabled cellular networks: Scalability and stability analysis,'' \textit{IEEE Trans. Commun.}, vol. 65, no. 8, pp. 3585--3600, Aug. 2017.
    
    \bibitem{Randomaccessanalysis_Jiang} 
    N. Jiang, Y. Deng, X. Kang, and A. Nallanathan, ``Random access analysis for massive IoT networks under a new spatio-temporal model: A stochastic geometry approach,'' \textit{IEEE Trans. Commun.}, vol. 66, no. 11, pp. 5788--5803, Nov. 2018.


    \bibitem{SpatiotemporalMSA_Zhong} 
    Y. Zhong, G. Mao, X. Ge, and F.-C. Zheng, ``Spatio-temporal modeling for massive and sporadic access,'' \textit{IEEE J. Sel. Areas Commun.}, vol. 39, no. 3, pp. 638--651, Mar. 2021.


    \bibitem{Statisticalprioritybased_Zhang} 
    Y. Zhang, H. Sun, Y. He, Z. Zhang, X. Wang, and T. Q. S. Quek, ``A spatio-temporal analytical model for statistical priority-based multiple access network,'' \textit{IEEE Wireless Commun. Lett.}, vol. 12, no. 1, pp. 153--157, Jan. 2023.

    \bibitem{grantfreeMIMO_Xia} 
    Y. Xia, Q. Zhang, H. H. Yang, W. Xia and H. Zhu, ``On the grant-free random access in multicell massive MIMO systems: Spatiotemporal modeling and backoff scheme optimization,'' \textit{IEEE Internet Things J.}, vol. 10, no. 14, pp. 12508--12520, Jul. 2023.

    

    \bibitem{Stochasticgeometryandrandomgraphs_Haenggi}
    M. Haenggi, J. G. Andrews, F. Baccelli, O. Dousse, and M. Franceschetti, ``Stochastic geometry and random graphs for the analysis and design of wireless networks,'' \textit{IEEE J. Sel. Areas Commun.}, vol. 27, no. 7, pp. 1029--1046, Sept. 2009.
    
    \bibitem{StochasticGeometry_Baccelli}
    F. Baccelli and B. Blaszczyszyn, \textit{Stochastic Geometry and Wireless Networks}. Norwell, MA, USA: Now Publishers, 2009.


    \bibitem{StochasticGeometry_Haenggi}
    M. Haenggi, \textit{Stochastic Geometry for Wireless Networks}. Cambridge, U.K.: Cambridge Univ. Press, 2013.



    \bibitem{Themetadistribution_Haenggi} 
    M. Haenggi, ``The meta distribution of the SIR in Poisson bipolar and cellular networks,'' \textit{IEEE Trans. Wireless Commun.}, vol. 15, no. 4, pp. 2577--2589, Apr. 2016.

    \bibitem{metacoverageprobabilityuplink_ElSawy}
    H. ElSawy and M.-S. Alouini, ``On the meta distribution of coverage probability in uplink cellular networks,'' \textit{IEEE Commun. Lett.}, vol. 21, no. 7, pp. 1625--1628, Jul. 2017.





    \bibitem{TheMetaCellularPowerControl_Wang} 
    Y. Wang, M. Haenggi, and Z. Tan, ``The meta distribution of the SIR for cellular networks with power control,'' \textit{IEEE Trans. Commun.}, vol. 66, no. 4, pp. 1745--1757, Apr. 2018.

    \bibitem{Simpleapproximationsgeneralcellular_Kalamkar} 
    S. S. Kalamkar and M. Haenggi, ``Simple approximations of the SIR meta distribution in general cellular networks,'' \textit{IEEE Trans. Commun.}, vol. 67, no. 6, pp. 4393--4406, Jun. 2019.

    \bibitem{Perlinkreliability_Kalamkar} 
    S. S. Kalamkar and M. Haenggi, ``Per-link reliability and rate control: Two facets of the SIR meta distribution," \textit{IEEE Trans. Wireless Commun.}, vol. 8, no. 4, pp. 1244--1247, Aug. 2019.


    \bibitem{SINRmetaPoissoncellular_Feng}
    K. Feng and M. Haenggi, ``The SINR meta distribution in Poisson cellular networks,'' \textit{IEEE Wireless Commun. Lett.}, vol. 10, no. 7, pp. 1385--1389, Jul. 2021.


    \bibitem{SpatioTemporalCellCenterCellEdge_Yang} 
    L. Yang, F.-C. Zheng, Y. Zhong, and S. Jin, ``Spatio-temporal analysis of meta distribution for cell-center/cell-edge users,'' \textit{IEEE Trans. Commun.}, vol. 69, no. 12, pp. 8256--8270, Dec. 2021.






    \bibitem{OntheStabilityofStatic_Zhong}
    Y. Zhong, M. Haenggi, T. Q. S. Quek, and W. Zhang, ``On the stability of static Poisson networks under random access,'' \textit{IEEE Trans. Commun.}, vol. 64, no. 7, pp. 2985--2998, Jul. 2016.

    \bibitem{UncoordinatedMassive_Chisci}
    G. Chisci, H. Elsawy, A. Conti, M.-S. Alouini, and M. Z. Win, ``Uncoordinated massive wireless networks: Spatiotemporal models and multiaccess strategies,'' \textit{IEEE/ACM Trans. Netw.}, vol. 27, no. 3, pp. 918--931, Jun. 2019.

    \bibitem{AUnifiedFrameworkforSINR_Yang}
    H. H. Yang, T. Q. S. Quek, and H. Vincent Poor, ``A unified framework for SINR analysis in Poisson networks with traffic dynamics,'' \textit{IEEE Trans. Commun.}, vol. 69, no. 1, pp. 326--339, Jan. 2021.

    \bibitem{Spatiotemporalsmallcell_Yang}
    H. H. Yang and T. Q. S. Quek, ``Spatio-temporal analysis for SINR coverage in small cell networks,'' \textit{IEEE Trans. Commun.}, vol. 67, no. 8, pp. 5520--5531, Aug. 2019.






    \bibitem{Thestabilityofaqueue_Loynes}
    R. M. Loynes, ``The stability of a queue with non-independent interarrival and service times,'' \textit{Proc. Cambridge Philos. Soc.}, vol. 58, no. 3, pp. 497--520, 1962.

    \bibitem{GlobalBehavior_Kocic}
    V. L. Kocic and G. Ladas, \textit{Global Behavior of Nonlinear Difference Equations of Higher Order with Applications}. Dordrecht: Springer, 1993.

    
    \bibitem{MaximumsumrateCapture_Li}
    Y. Li and L. Dai, ``Maximum sum rate of slotted Aloha with capture,'' \textit{IEEE Trans. Commun.}, vol. 64, no. 2, pp. 690--705, Feb. 2016.


    \bibitem{MultiObjective_Deb}
    K. Deb, \textit{Multi-Objective Optimization using Evolutionary Algorithms.} John Wiley \& Sons, Ltd, Chichester, England, 2001.





    



\end{thebibliography}
\end{document}